\documentclass[a4paper,USenglish]{lipics-v2021}

\def\REVIEW{0}
\def\SAVESPACE{0}
\def\ANONYMOUS{0}
\nolinenumbers  %% To remove page numbers. 
\hideLIPIcs     %% To remove LIPIcs info.

\usepackage[ruled,linesnumbered,vlined,boxed,commentsnumbered]{algorithm2e}
\usepackage{comment}
\usepackage{caption}
\usepackage{color}
\usepackage{graphicx}
\usepackage{microtype}
\usepackage{ragged2e}
\usepackage{float}
\DeclareGraphicsRule{*}{mps}{*}{}
\usepackage{svg}
\usepackage[normalem]{ulem}
\usepackage{todonotes}

\DeclareMathOperator*{\argmax}{arg\,max}

\def\lf{\tiny}
\newcounter{mylinenumber}

\def\nnll{\refstepcounter{mylinenumber}\lf\themylinenumber}

\newcommand{\commentline}[1]{\hspace{0.05cm}\{ \textit{#1} \}}

\usepackage{url}
\usepackage{cite}

\newcommand{\Name}{\textsc{Pastro}\xspace} % Permissionless and Asynchronous Asset Transfer?  

\def\code#1{\textnormal{\texttt{#1}}}

\theoremstyle{definition} % TODO: is there a better theorem style? 

\newcommand{\citep}{\cite}

\if \SAVESPACE 1
    \usepackage{titlesec}
    \titlespacing\section{0pt}{7pt}{6pt}
\fi

%% See file "xnotes.sty" for details
\usepackage[\if \REVIEW 1 enable \else apply,nonotes \fi]{xnotes}
\AddXNotesUser{pk}{PK}{green}
\AddXNotesUser{ya}{YA}{magenta}
\AddXNotesUser{pp}{PP}{orange}
\AddXNotesUser{at}{AT}{blue}

\newcommand{\true}{\textit{true}}
\newcommand{\false}{\textit{false}}

\newcommand{\ignore}[1]{}

\newcommand{\cC}{\mathcal C}
\newcommand{\dD}{\mathcal D}

\newcommand{\tT}{\mathcal T}

\newcommand{\Lat}{\mathcal L}
\newcommand{\Nat}{\mathbb N}
%% 

%[[PK environments for article style
%\newtheorem{theorem}{Theorem}
%\newtheorem{conjecture}[theorem]{Conjecture}
%\newtheorem{corollary}[theorem]{Corollary}
%\newtheorem{definition}[theorem]{Definition}
%\newtheorem{lemma}[theorem]{Lemma}
%\newtheorem{observation}[theorem]{Observation}
%\newenvironment{proof}[1][Proof]{\noindent\textbf{#1.} }{\hfill $\Box$\\[2mm]}
%\newenvironment{proofsketch}[1][Proof sketch]{\noindent\textbf{#1.} }{\hfill $\Box$\\[2mm]}
%]]

%% Lipics rules explicitly prohibit commands like this
% \newcommand{\myparagraph}[1]{\vspace{4.5pt}\noindent \textbf{#1}}

\newcommand{\myparagraph}[1]{\subparagraph*{#1}}

% Not to use a separate page for a figure
%

%acmConference[PL'18]{ACM SIGPLAN Conference on Programming Languages}{January 01--03, 2018}{New York, NY, USA}
%\acmYear{2018}
%\acmISBN{} % \acmISBN{978-x-xxxx-xxxx-x/YY/MM}
%\acmDOI{} % \acmDOI{10.1145/nnnnnnn.nnnnnnn}
%\startPage{1}
%\setcopyright{none}
%\bibliographystyle{ACM-Reference-Format}

\title{Permissionless and Asynchronous Asset Transfer}
\subtitle{[Technical Report]}

\if \ANONYMOUS 1
    \author{Anonymous author(s)}{Anonymous affiliation}{}{}{}
    \authorrunning{Anonymous authours}
    \Copyright{Anonymous author(s)}
\else
    \author{Petr Kuznetsov}{LTCI, T\'el\'ecom Paris, Institut Polytechnique de Paris}{petr.kuznetsov@telecom-paris.fr}{}{}
    \author{Yvonne-Anne Pignolet}{DFINITY}{yvonneanne@dfinity.org}{}{}
    \author{Pavel Ponomarev}{ITMO University}{pavponn@gmail.com}{}{}
    \author{Andrei Tonkikh}{National Research University Higher School of Economics}{andrei.tonkikh@gmail.com}{}{}
    
    \authorrunning{P. Kuznetsov, Y.-A. Pignolet, P. Ponomarev, A. Tonkikh}
    
    \Copyright{P. Kuznetsov, Y.-A. Pignolet, P. Ponomarev, A. Tonkikh}
\fi

\begin{document}

\ccsdesc[500]{Theory of computation~Design and analysis of algorithms~Distributed algorithms}

\keywords{Asset transfer, permissionless, asynchronous, dynamic adversary}

\maketitle

\begin{abstract}
Most modern asset transfer systems use \emph{consensus} to maintain a totally ordered chain of transactions. 
It was recently shown that consensus is not always necessary for implementing asset transfer. 
More efficient, \emph{asynchronous} solutions can be built using \emph{reliable broadcast} instead of consensus. 
This approach has been originally used in the closed (permissioned) setting. 
In this paper, we extend it to the open (\emph{permissionless}) environment.
We present {\Name}, a permissionless and asynchronous asset-transfer implementation, in which \emph{quorum systems}, traditionally used in reliable broadcast, are replaced with a weighted \emph{Proof-of-Stake} mechanism. 
{\Name} tolerates a \emph{dynamic} adversary that is able to adaptively corrupt participants based on the assets owned by them \atremove{into account}.
\end{abstract}

\section{Introduction}
\label{sec:intro}
Inspired by advances in peer-to-peer data replication~\citep{bitcoin,ethereum}, a lot
of efforts are currently invested in designing an algorithm for consistent and 
efficient exchange of assets in \emph{dynamic} settings, where the set of 
participants, actively involved in processing transactions, varies over time.

Sometimes such systems are called \emph{permissionless}, emphasizing the fact that they assume
no trusted mechanism to regulate who and when can join the system. 
In particular, permissionless protocols should tolerate the notorious 
\emph{Sybil attack}~\citep{sybil}, where the adversary creates an unbounded number of ``fake'' identities.
%Sybil attacks can then overtake any quorum-based algorithm.

Sharing data in a permissionless system is a hard problem. 
To solve it, we have to choose between consistency and efficiency. 
Assuming that the network is synchronous and that the adversary can only possess 
less than half of the total computing power, Bitcoin~\citep{bitcoin} and 
Ethereum~\citep{ethereum} make sure that participants reach  \emph{consensus} on the order in which they access and modify the  shared data.       
For this purpose, these systems employ a \emph{proof-of-work} (PoW) mechanism to artificially slow down active participants (\emph{miners}).
%so that operations on the data are executed in a total 
%order, with a high probability.
%and that the risk of \emph{forks}) is reduced. 
%
The resulting algorithms are notoriously slow and waste tremendous amounts of energy.

Other protocols obviate the energy demands using \emph{proof-of-stake}~\citep{PoS16,ouroboros17,algorand},
\emph{proof-of-space}~\citep{PoSE15}, or \emph{proof of
space-time}~\citep{PoST16}.  
However, these proposals still resort to synchronous networks, randomization, non-trivial cryptography and/or assume a trusted setup system. 

In this paper, we focus on a simpler problem, \emph{asset transfer}, enabling a set of participants to exchange assets across their \emph{accounts}.
It has been shown~\citep{Gup16,at2-cons} that this problem does not require consensus.
Assuming that every account is operated by a dedicated user, there is no need for the users to agree on a total order in which transactions must be processed:
%
%of using consensus, which is indispensable for implementing total order, 
one can build an asset transfer system on top of the \emph{reliable broadcast} abstraction instead of consensus. 
Unlike consensus~\citep{FLP85}, reliable broadcast allows for simple 
\emph{asynchronous} solutions, enabling efficient asset transfer 
implementations that outperform their consensus-based counterparts~\citep{astro-dsn}.

Conventionally, a reliable broadcast algorithm assumes a \emph{quorum system}~\citep{quorums,byz-quorums}.
Every delivered message should be \emph{certified} by a \emph{quorum} of participants.
Any two such quorums must have a correct participant in common and in every run, at least one quorum should consist of correct participants only. 
In a static $f$-resilient system of $n$ participants, these assumptions result \atremove{result} in the condition $f<n/3$. 
%and require maintaining quorums of size $n-f$.  
%
In a \emph{permissionless} system, where the Sybil attack is enabled, assuming a traditional quorum system does not appear plausible. %
Indeed, the adversary may be able to control arbitrarily many identities, undermining any quorum assumptions.  

In this paper, we describe a permissionless asset-transfer system based on \emph{weighted} quorums.
%We assume that a system that maintains the notion of (valuable) assets.
%
More precisely, we replace traditional quorums with  certificates signed by participants holding a sufficient amount of assets, or \emph{stake}.
One can be alerted by this assumption, however: the notion of a ``participant holding stake'' at any given moment of time is not well defined in a decentralized consensus-free system where assets are dynamically exchanged and participants may not agree on the order in which transactions are executed. 
We resolve this issue using the notion of a \emph{configuration}.
A configuration is a partially ordered set of transactions that unambiguously determines the active system participants and the distribution of stake among them. 
As has been recently observed, configurations form a \emph{lattice order}~\citep{rla} and a \emph{lattice agreement} protocol~\citep{gla,rla,bla} can be employed to make sure that participants properly reconcile their diverging opinions on which configurations they are in.

Building on these abstractions, we present the {\Name} protocol to settle asset transfers, despite a \emph{dynamic adversary} that can choose which participants to compromise \emph{during} the execution, taking their current stake into account.
The adversary is restricted, however, to corrupt participants that together own less than one third of stake in any \emph{active candidate} configuration.
Intuitively, a configuration (a set of transactions) is an active candidate configuration if all its transactions \emph{can be} accepted by a correct process.
At any moment of time, we may have multiple candidate configurations, and the one-third stake assumption must hold for each of them.

Note that a \emph{superseded} configuration that has been successfully replaced with a new one can be compromised by the adversary.
To make sure that superseded configurations cannot deceive slow participants that are left behind the reconfiguration process, we employ a \emph{forward-secure} digital signature scheme~\citep{bellare1999forward,drijvers2019pixel}, recently proposed for Byzantine fault-tolerant reconfigurable systems~\citep{bla}. 
The mechanism allows every process to maintain a single public key and a ``one-directional'' sequence of matching private keys: it is computationally easy to compute a new private key from an old  one, but not vice versa.
Intuitively, before installing a new configuration, one should ask holders of $>2/3$ of stake of the old one to upgrade their private keys and destroy the old ones.  

We believe that {\Name} is the right alternative to heavy-weight consensus-based replicated state machines, adjusted to applications that do not require global agreement on the order on their operations~\citep{crdt,rla}.   
Our solution does not employ PoW~\citep{bitcoin,ethereum} and does not rely on complex cryptographic constructions, such as a common coin~\citep{common-coin, algorand}.
More importantly, unlike recently proposed solutions~\citep{abc-tr,byz-bcast}, {\Name} is resistant against a dynamic adversary that can choose which participants to corrupt in a dynamic manner, depending on the execution.  

In this paper, we \yareplace{chose to}{} first present {\Name} in its simplest version, as our primary message is a possibility result: a permissionless asset-transfer system can be implemented in an asynchronous way despite a dynamic adversary.      
We then discuss multiple ways of improving and generalizing our system.
In particular, we address the issues of maintaining a dynamic amount of assets via an inflation mechanism and \atreplace{improving complexity}{the performance} of the system by delegation and incremental updates.  

\myparagraph{Road map.}
We overview related work in Section~\ref{sec:related_work}.
In Section~\ref{sec:model}, we describe our model and recall basic definitions, followed by the formal statement of the asset transfer problem in Section~\ref{sec:crypto}.
%Section~\ref{sec:abstractions}
In Section~\ref{sec:algo}, we describe {\Name} and outline its correctness arguments in Section~\ref{sec:proof}. 
We conclude \yareplace{by discussing complexity, security and incentive aspects,}{with practical challenges to be addressed} as well as related open questions in Section~\ref{sec:discussion}. 
Detailed proofs \yaadd{and the discussion of \atreplace{enhancements}{optimizations as well as delegation, fees, inflation, and practical aspects of using forward-secure digital signatures}} are deferred to the appendix.

\section{Related Work}
\label{sec:related_work}
The first and still the most widely used permissionless cryptocurrencies are blockchain-based~\citep{coinmarketcap}. 
To make sure that honest participants agree (with high probability) on the order in which blocks of transactions are applied, the most prominent blockchains rely on proof-of-work~\citep{bitcoin,ethereum} and assume synchronous communication.  
The approach exhibits bounded performance, wastes an enormous amount of energy, and its practical deployment turns out to be hardly decentralized (\url{https://bitcoinera.app/arewedecentralizedyet/}).

To mitigate these problems, more recent proposals suggest to rely on the \emph{stake} rather than on energy consumption. 
E.g., in next version of Ethereum~\citep{ethereum_pos}, random \emph{committees} of a bounded number of \emph{validators} are periodically elected, under the condition that they put enough funds at stake. 
Using nontrivial cryptographic protocols (verifiable random functions and common coins), Algorand~\citep{gilad2017algorand} ensures that the probability for a user to be elected is proportional to its stake, and the committee is reelected after each action, to prevent adaptive corruption \atadd{of} the committee. 
%
%To avoid adaptive corruption, the election proof is only valid for one action, i.e., as soon as others learn about a user's right to propose or vote it is too late to start to try and influence them.
%[[PK not sure we should mention it here
%Ouroboros~\citep{ouroboros17}, another PoS protocol, on the other hand, relies on the assumption that most users are incorruptible by the adversary for a significant epoch (such as a day).
%]]

In this paper, we deliberately avoid reaching consensus in implementing asset transfer, and build atop asynchronous \emph{reliable broadcast}, following~\citep{Gup16,at2-cons}.
It has been recently shown that this approach results in 
a simpler, more efficient and more robust implementation than consensus-based solutions~\citep{astro-dsn}.
However, in that design a static set of of processes is assumed, i.e., the protocol adheres to the \emph{permissioned} model.

In~\citep{byz-bcast}, a reliable broadcast protocol is presented, that allows processes to join or leave the system without requiring consensus.
ABC~\citep{abc-tr} proposes a direct implementation of a cryptocurrency, under the assumption that the adversary never corrupts processes holding $1/3$ or more stake.  
However, both protocols~\citep{byz-bcast,abc-tr} assume a static adversary: the set of corrupted parties is chosen at the beginning of the protocol.   

In contrast, our solution tolerates an adversary that \emph{dynamically} selects a set of corrupted parties, under the assumption that not too much stake is compromised in an \emph{active candidate} configuration.
Our solution is inspired by recent work on \emph{reconfigurable} systems 
that base upon the reconfigurable lattice agreement abstraction~\citep{rla,bla}.
However, in contrast to these \emph{general-purpose} reconfigurable constructions, our implementation is much lighter.
In our context, a configuration is just a distribution of stake, and every new transaction is a configuration update.     
As a result, we can build a simpler protocol that, unlike conventional asynchronous reconfigurable systems~\citep{SKM17-reconf, rla, bla}, does not bound the number of configuration updates for the sake of liveness.  
%
% By replacing static quorum systems assumed in~\citep{astro-dsn,byz-bcast} with a protocol relying on stake-based certificates and tolerating churn, 
% we build a permissionless asynchronous cryptocurrency that provides deterministic consistency properties and does not rely on \atreplace{advanced cryptographic protocols}{complex cryptographic constructions, such as a common random coin~\citep{common-coin, algorand}}.  

\section{Preliminaries}
\label{sec:model}
\myparagraph{Processes and channels.}
We assume a set $\Pi$ of potentially participating \emph{processes}.
In our model, every process acts both as a \emph{replica} (maintains a local copy of the shared data) and as a \emph{client} (invokes operations on the data).\footnote{We discuss how to split the processes into replicas and clients in \ppreplace{Section~\ref{sec:discussion}}{Appendix~\ref{app:optimizations}}.}
\atreplace{We assume the existence of a global clock with range $\Nat$, but the processes do not have access to it.}{In the proofs and definitions, we make the standard assumption of existence of a global clock not accessible to the processes.}

At any moment of time, a process can be \emph{correct} or \emph{Byzantine}. 
%
%A process is called \emph{idle} if it has not taken a single step in the execution yet, it stops being idle as soon as it takes a step (e.g. sends a message or processes a received message).
%
We call a process \emph {correct} as long as it faithfully follows the algorithm it has been assigned.
A process is \emph{forever-correct} if it \atremove{is} remains correct forever.
A correct process may turn \emph{Byzantine}, which is modelled as an explicit event in our model (not visible to the other processes).
A Byzantine process may perform steps not prescribed by its algorithm or prematurely stop taking steps.
% \atrev{A Byzantine process may act arbitrarily.}
%
Once turned Byzantine, the process stays Byzantine forever.

We assume a \emph{dynamic} adversary that can choose the processes to corrupt (to render Byzantine) depending on the current execution (modulo some restrictions that we discuss in the next section).    
In contrast, a \emph{static} adversary picks up the set of Byzantine processes \emph{a priori}, at the beginning of the execution.   
%[[PK sounds stupid 
%don't fix Byzantine processes from the beginning and every correct process can become Byzantine, however no Byzantine process can become correct. 
%]]

In this paper we assume that the computational power of the adversary is bounded and, consequently, the cryptographic primitives used cannot be broken.

We also assume that each pair of processes is connected via \emph{reliable authenticated channel}. 
If a forever-correct process $p$ sends a message $m$ to a forever-correct process
$q$, then q eventually receives $m$. 
Moreover, if a correct process $q$ receives a message $m$ \atreplace{from a process $p$ at time $t$, and $p$ is correct at time $t$}{from a correct process $p$}, then $p$ has indeed sent $m$ to $q$\atremove{ before $t$}.

For the sake of simplicity, we assume that the set $\Pi$ of potentially participating processes is finite.\footnote{In practice, this assumptions boils down to requiring that the rate at which processes are added is not too high.
Otherwise, if new stake-holders are introduced into the system at a speed prohibiting a client from reaching a sufficiently large fraction of them, we cannot make sure that the clients' transactions are eventually accepted.}

\myparagraph{Weak reliable broadcast (WRB).} In this paper we assume a \emph{weak reliable broadcast primitive} to be available. 
The implementation of such primitive ensures the following properties:
\begin{itemize}
    \item If a correct process delivers a message $m$ from a correct process $p$, then $m$ was previously broadcast by $p$;
    \item If a forever-correct process $p$ broadcasts a message $m$, then $p$ eventually delivers $m$;
    \item If a forever-correct process delivers $m$, then every forever-correct process eventually delivers $m$.
\end{itemize}

This weak reliable broadcast primitive can be implemented via a gossip protocol~\citep{Gossiping}.

\myparagraph{Forward-secure digital signatures.}
Originally, \emph{forward-secure digital signature schemes}~\citep{bellare1999forward,drijvers2019pixel} were designed to resolve the key exposure problem:
if the signature (private) key used in the scheme is compromised, then the adversary is capable to forge any previous (or future) signature.
Using forward secure signatures, it is possible for the private key to be updated arbitrarily many times with the public key remaining fixed. 
Also, each signature is associated with a timestamp. This helps to identify messages which have been signed with the private keys that are already known to be compromised.

To generate a signature with timestamp $t$, the signer uses
secret key $sk_t$. The signer can update its secret key and get $sk_{t_2}$ from $sk_{t_1}$ if $t_1 < t_2 \leq T$.
However ``downgrading'' the key to a lower timestamp, from $sk_{t_2}$ to $sk_{t_1}$, is computationally infeasible. 
As in recent work on Byzantine fault-tolerant reconfigurable systems~\citep{bla}, 
we model the interface of forward-secure signatures with an oracle which associates every process $p$ with a timestamp $st_p$. The oracle provides the following functions:
\begin{itemize}
    \item $\code{UpdateFSKey}(t)$ sets $st_p$ to $t$ if $t \ge st_p$;
    \item $\code{FSSign}(m, t)$ returns a signature for a message $m$ and a timestamp $t$ if $t \ge st_p$, otherwise~$\perp$;
    \item $\code{FSVerify}(m, p, s, t)$ returns \textit{true} if \atrev{$s \neq \bot$ and it} was generated by process $p$ using $\code{FSSign}(m, t)$, \textit{false}~otherwise.
\end{itemize}

In most of the known implementations of forward-secure digital signature schemes the parameter~$T$ should be fixed in advance, \atreplace{what}{which} makes number of possible private key updates finite. 
At the same time, some forward-secure digital schemes~\citep{MaMiMi02} allow for an unbounded number of key updates ($T = +\infty)$, however the time required for an update operation depends on the number of updates. 
Thus, local computational time may grow indefinitely albeit slowly.
We discuss these two alternative implementations and reason about the most appropriate one for our problem \atreplace{in Section~\ref{sec:discussion}}{in Appendix~\ref{app:optimizations}}.

% In most of the known implementations of forward-secure digital signature schemes the parameter~$T$ should be fixed in advance, what makes number of possible private key updates finite. 
% In this case, number of configurations that can be installed is finite, and hence, the number of transactions in the system is finite. 
% At the same time, some forward-secure digital schemes~\citep{MaMiMi02} provide an opportunity to update signature key unbounded number of times ($T = +\infty)$, however the time required for an update operation depends on the number of updates. 
% Thus, local computational time may grow indefinitely. 
% Obviously, we want the number of possible transactions the system can handle to be unlimited, but, at the same time, we want to bound the time required for local computations from above. 
% We discuss this issue and its possible solutions in Section \ref{sec:discussion}.
% \\

\myparagraph{Verifiable objects.} 
We often use \textit{certificates} and \textit{verifiable objects} in our protocol description.
We say that object $\textit{obj} \in O$ is \textit{verifiable} in terms of a given \textit{verifying~function} $\code{Verify}: O\times \Sigma_O \rightarrow \{\true, \false\}$, if it comes together with a \textit{certificate} $\sigma_{obj} \in \Sigma_O$, such that $\code{Verify}(\textit{obj}, \sigma_{obj}) = \true$, where $\Sigma_O$ is a set of all possible certificates for objects of set $O$. 
A certificate $\sigma_{obj}$ is \textit{valid} for $\textit{obj}$ (in terms of a given verifying function) iff $\code{Verify}(obj, \sigma_{obj}) = \true$.
The actual meaning of an object ``verifiability'', as well as of a certificate validness, is determined by the verifying function.

%[[PK too early
\ignore{

\myparagraph{Configuration lattice.} 
A \emph{join semi-lattice} (we simply say \emph{lattice} from this point on) is a tuple $(\Lat, \sqsubseteq)$, where $\Lat$ is a partially ordered set with the binary relation $\sqsubseteq$ such that for any two elements $a \in \Lat$ and $b \in \Lat$ there exists $c \in \Lat$: such that $c$ is \emph{least upper bound} for the set $\{a, b\}$ and $c = a \sqcup b$. The operator $\sqcup$ is referred as \emph{join} (or sometimes \emph{merge}). 
The join operator $\sqcup$ is associative, commutative, idempotent binary operator on set $\Lat$.
%In this paper, we refer to \emph{join-semilattices} as a \emph{lattices} for simplicity.

We  use the notion of a lattice in order to define a \emph{configuration}. A configuration $C$ is an element of a lattice $(\cC, \sqsubseteq)$ associated with a finite set of members -- processes that have joined the system. 
The correspondence between configuration and a set of members is determined by the function $\textit{members}: \cC \rightarrow 2^\Pi$. 
The processes only can join the system, but not actually leave, due to this we require the following condition: if $C_1 \sqsubseteq C_2$ then $\textit{members}(C_1) \subseteq \textit{members}(C_2)$.
For each configuration we also define a \textit{quorum system} via function $\textit{quorums}: \cC \rightarrow 2^{2^\Pi}$, such that $\textit{quorums}(C) \subseteq 2^{\textit{members}(C)}$.
We also impose the following constraints on the \textit{quorums} function: $\forall C \in \cC: \textit{quorums}(C)$ is a dissemination quorum system at some time $t$, i.e. for every two sets $Q_1$ and $Q_2$ (called \emph{quorums}) from $\textit{quorums}(C)$ there exists at least one correct process $q \in Q_1 \cap Q_2$ at time $t$, and at least one quorum is \emph{available} (all its members are correct) at time $t$.
The moment of time $t$ can be defined in different ways. Later in this article, we will define this moment in accordance with our problem and other assumptions.

In addition we also assume a function $\textit{height}: \cC \rightarrow \mathbb{Z^+}_0$, such that $\forall C_1, C_2 \in \cC: C_1 \sqsubset C_2 \Rightarrow \textit{height}(C_1) < \textit{height}(C_2)$. 
We say that configuration $C_2$ is higher than configuration $C_1$, iff $C_1 \sqsubset C_2$.

The simplest way to define such a lattice $\cC$ is via a set of members that constantly grow. 
Thereby, $\cC$ will be a powerset lattice $(2^\Pi, \subseteq)$, with  required functions defined as follows: $\textit{members}(C) = C$, $\textit{quorums}(C) = \{Q \subseteq \textit{members}(C) \mid |Q| > \frac{2}{3} |\textit{members}(C)|\}$, and $\textit{height}(C) = |C|$. 
In this example, the configuration itself holds only information about the system members, however nothing prevents us from storing additional information in it (like weights distribution between the processes), that can be potentially used for the definition of $\textit{quorums}(C)$.

Configuration $C$ is considered to be \emph{installed} if some correct process has triggered the special event $\text{InstalledConfig}(C)$.
We call configuration $C$ \emph{candidate} if some correct process has triggered $\text{NewHisory}(h)$ event, such that $C \in h$.
We also say that configuration $C$ is \emph{superseded} if some correct process installed a higher configuration $C'$. 
An installed configuration $C$ is called \emph{active} as long as it is not superseded.
}

\section{Asset Transfer: Problem Statement}
\label{sec:crypto}
Before defining the asset transfer problem formally, we introduce the concepts used later.

\myparagraph{Transactions.} 
A \emph{transaction} is a tuple $tx = (p, \tau, D)$. Here $p \in
\Pi$ is an identifier of a process inside the system that initiates a 
transaction. 
%(member who spends the money). 
We refer to $p$ as the \emph{owner} of $tx$. 
 The map $\tau$: $\Pi \rightarrow \mathbb{Z}^+_0$
% The map $\tau$: $\Pi \rightarrow \atrev{\mathbb{Z}_{\ge 0}}$
is the \textit{transfer function}, specifying the amount of funds received by every process $q \in \Pi$ from this transaction.\footnote{We encode this map as a set of tuples $(q,d)$, where $q\in\Pi$ and $d>0$ is the amount received sent to $q$ in $tx$.}
$D$ is a finite set of transactions that $tx$ \emph{depends on}, i.e., $tx$ spends the funds $p$ received through the transactions in $D$. 
We refer to $D$ as the \textit{dependency set} 
%(or \textit{set of dependencies}) 
of $tx$. 

Let $\tT$ denote the set of all transactions.
The function $\textit{value}: \tT \rightarrow V$ is defined as follows: $\textit{value}(tx) = \sum_{q \in \Pi} tx.\tau(q)$.
$\tT$ contains one special ``initial'' transaction $tx_{init} = (\perp, \tau_{init}, \emptyset)$ with  $\textit{value}(tx_{init})=M$ and no sender. 
This transaction determines the initial  distribution of the total amount of ``stake'' in the system (denoted $M$).

For all other transactions it holds that a transaction $tx$ is \emph{valid} if the amount of funds spent equals the amount of funds received: 
$\textit{value}(tx) = \sum_{t \in t.D} t.\tau(tx.p)$.
%otherwise the transaction is called invalid. 
For simplicity, we assume that $\tT$ contains only valid transactions: invalid transactions are ignored. 
%
%In some sense, the validity of transaction is guaranteed by the type-system.
%

Transactions defined this way, naturally form a \emph{directed graph} where each transaction is a node and edges represent dependencies. 
%
%Clearly, $tx_{init}$ is the only source in this graph.

We say that transactions $tx_i$ and $tx_j$ issued by the same process \textit{conflict} iff the intersection of their dependency sets is non-empty: $tx_i \not = tx_j, tx_i.p = tx_j.p$ and $tx_i.D \cap tx_j.D \not = \emptyset$. 
A correct member of an asset-transfer system does not attempt to \emph{double spend}, i.e., it never issues \yareplace{conflicting transactions which share some dependencies}{transactions which share dependencies and are therefore conflicting}.

% The first object for which we use a term ``verifiable'' is a transaction.

Transactions are equipped with a boolean function $\code{VerifySender}: \tT\times\Sigma_{\tT} \to \{\true,\false\}$. A transaction $tx$ is \emph{verifiable} iff it is verifiable in terms of the function
$\code{VerifySender}(tx, \sigma_{tx})$.  Here $\sigma_{tx}$ is a certificate that confirms that $tx$ was indeed issued by its owner process $p$.
One may treat a transaction's certificate as $p$'s digital signature of $tx$. 

\myparagraph{Asset-transfer system.}
An \emph{asset-transfer} system (AS) maintains a partially ordered set of transactions  $T_p$ and exports one operation:
%
%\begin{itemize}
$\code{Transfer}(tx, \sigma_{tx})$ adding transaction $tx$ to the set  $T_p$.
Recall that $tx = (p, \tau, D)$, where $p$ is the owner of transaction, $\tau$ is a transfer map, $D$ is the set of dependencies, and $\sigma_{tx}$ is a matching certificate.

%\item $\code{read}()$ returns the current set.

%\end{itemize}

In a distributed AS implementation, every process $p$ holds the set of ``confirmed'' transactions $T_p$ it is aware of, i.e., a local copy of the state. \yarev{A transaction is said to be \emph{confirmed} if some correct process $p$ adds $tx$ to its local copy of the state $T_p$. Set $T_p$ can be viewed as the log of all confirmed transactions a process $p$ is aware of.}
Let $T_p(t)$ denote the value of $T_p$ at~time~$t$. 

%
%A benign process $p$ makes sure that its local copy is \emph{consistent}, i.e., every element in it is valid, and no two transfers conflict. %
An AS implementation then satisfies the following properties:

\begin{description}
    \item[Consistency:] \atreplace{For every correct process $p$ at time $t$,}{For every process $p$ correct at time $t$,} $T_p(t)$ contains only verifiable and non-conflicting transactions.
    Moreover, \atreplace{for every two correct processes $p$ and $q$ at times $t$ and $t'$}{for every two processes $p$ and $q$ correct at time $t$ and $t'$} resp.: $(T_p(t) \subseteq T_q(t')) \vee (T_q(t') \subseteq T_p(t))$.
    
    \item[Monotonicity:] For every correct process $p$, $T_p$ can only grow with time: for all $t < t'$, $T_p(t)\subseteq T_p(t')$.
    %If a correct process returns $txs$ and $txs'$ from $\code{Read}()$ at time $t$ and $t'$ respectively, and $t \leq t'$,  then $txs \subseteq txs'$;
    
    \item[Validity:] If a forever-correct process $p$ invokes $\code{Transfer}(tx, \sigma_{tx})$ at time $t$, then there exists $t'>t$ such that $tx \in T_p(t')$.
    
    \item[Agreement:] For a \atreplace{correct process $p$}{process $p$ correct} at time $t$ and  a forever-correct process $q$, there exists $t'\geq t$ such that $T_p(t)\subseteq T_q(t')$. 
    
    %\item \emph{Termination}: If the total number of the members is finite, every call to $\code{Transfer}(tx, \sigma_{tx})$ for a valid transfer $tx$ by a forever-correct process eventually completes, and, eventually, every forever-correct process returns $txs$ from $\code{Read}()$ such that $tx \in txs$.
\end{description}

% Our \emph{cryptocurrency} abstraction exports two operations:

Here, $\sigma_{tx} \in \Sigma_{\tT}$ is a certificate for transaction $tx = (p, \tau, D)$. 
% The meaning of the certificate is the same, as in the Transaction Validation Protocol: signature of transfer $tx$ made by process $p$.
% Recall, that we call transaction verifiable iff it's verifiable in terms of function \code{VerifySender}.
A certificate  protects the users from possible theft of their funds by other users.
As we assume that cryptographic techniques (including digital signatures) are unbreakable, the only way to steal someone's funds is to steal their private key.

% We say that transaction $tx$ is \emph{submitted} if some process invoked $\code{Transfer}(tx, \sigma_{tx})$, where $\sigma_{tx}$ is a valid certificate for transaction $tx$. 
%[[PK just defined
%A valid transaction for process $p$ redistributes all assets assigned to $p$ in the dependencies to itself and other processes (see definition earlier).
%]]

% \textbf{TODO: PROPERTIES.} Implementation of Cryptocurrency should satisfy the following properties:
% \begin{itemize}
%     \item \emph{Consistency}: If a correct process returns a set of transfers $txs$ from $\code{Read}()$, then $txs$ contains only submitted and valid transfers and no transfers in $trs$ conflict;
%     \item \emph{Monotonicity}: If a correct process returns $txs$ and $txs'$ from $\code{Read}()$ at time $t$ and $t'$ respectively, and $t \leq t'$,  then $txs \subseteq txs'$;
%     \item \emph{Termination}: If the total number of the members is finite, every call to $\code{Transfer}(tx, \sigma_{tx})$ for a valid transfer $tx$ by a forever-correct process eventually completes, and, eventually, every forever-correct process returns $txs$ from $\code{Read}()$ such that $tx \in txs$.
% \end{itemize}

A natural convention is that a correct process never submits conflicting transactions.
%, and a forever-correct process never submits conflicting transactions. 
When it comes to Byzantine processes, we make no assumptions.
Our specification ensures that if two conflicting transactions are issued by a Byzantine process, then at most one of them will ever be confirmed. 
In fact, just a single attempt of a process to cheat may lead to the loss of its funds \yareplace{and may}{as it can happen that neither of the conflicting transactions is confirmed and thus may} preclude the process from making progress. 
%
%This is the reason why it's unprofitable trying to double spend.

% \myparagraph{Lattices.} A \emph{lattice}  $(\Lat, \sqsubseteq)$ is an abstract structure, where $\Lat$ 
% is a partially ordered set with the binary relation $\sqsubseteq$ such that for any two elements $a \in \Lat$ and $b \in \Lat$ there exists $c \in \Lat$: such that $c$ is \emph{least upper bound} for the set $\{a, b\}$ and $c = a \sqcup b$; also $\forall a, b \in \Lat$ $\exists\ d \in \Lat$, $d = a \sqcap b$ and $d$ is \emph{greatest lower bound} of the set $\{a, b\}$. The operators $\sqcup$ and $\sqcap$ are referred as \emph{join} and  \emph{meet} respectively. Both $\sqcup$ and $\sqcap$ operators are associative, commutative, idempotent binary operators on set $\Lat$.
% %
% A partially ordered set in which all pairs have a join ($\sqcup$) but not necessarily a meet is called a \emph{join-semilattice}. 
% %
% Similarly, a partially ordered set in which any two elements have a meet ($\sqcap$) but not necessarily a join is called a \emph{meet-semilattice}. 
% %

%
% In this paper, we will use \emph{join-semilattice} as one of the basic types and for simplicity will refer to it as a \emph{lattice}.
% Intuitively, we are interested in this type, since $a \sqcup b$ can be interpreted as a \emph{merge} of two alternatively proposed operations by different processes on some state.
% \\

%\textbf{TODO: move or remove the following paragraphs}

\myparagraph{Transaction set as a configuration.}
For simplicity, we assume that the total amount of funds in the system is a publicly known constant $M\in \mathbb{Z^+}$ (fixed by the initial transaction $tx_{init}$) that we call \emph{system stake}.\footnote{In \ppreplace{Section~\ref{sec:discussion}}{Appendix~\ref{app:optimizations}}, we discuss how to maintain a dynamic system stake.} 

A set of transactions $C\in 2^{\tT}$ is called a \emph{configuration}. 

A configuration $C$ is \atreplace{\emph{correct} (or \emph{valid})}{\emph{valid}} if no two transactions $tx_i, tx_j \in C$ are conflicting. 
\atremove{Otherwise, $C$ is \emph{incorrect} (or \emph{invalid}).}
In the rest of this paper, we only consider valid configurations, i.e.,  invalid configurations appearing in protocol messages are ignored by correct processes without being mentioned explicitly in the algorithm description. 

The \emph{initial configuration} is denoted by $C_{init}$, and consists of just one initial transaction ($C_{init} = \{ tx_{init} \}$).

A valid configuration $C$ determines the \emph{stake} (also known as \emph{balance}) of every process $p$ as the difference between the amount of assets sent to $p$ and the amount of assets sent by $p$ in the transactions of $C$: $\textit{stake}(q,C)=\sum_{tx\in C} tx.\tau(q)-\sum_{tx\in C \wedge tx.p=q} \textit{value}(tx)$.
Intuitively, a process joins the asset-transfer system as soon as it gains a positive stake in a configuration and leaves once its stake turns zero.

Functions $\textit{members}$ and $\textit{quorums}$ are defined as follows:
% \atreplace{$\textit{members}(C) = \{ p \mid \exists tx \in C \text{ such that } tx.\tau(p) > 0 \}$}{$\textit{members}(C) = \{ p \mid \textit{stake}(p, C) > 0 \}$},
% \atrev{$\textit{members}(C) = \{ p \mid \textit{stake}(p, C) > 0 \}$},
$\textit{members}(C) = \{ p \mid \exists tx \in C \text{ such that } tx.\tau(p) > 0 \}$,
$\textit{quorums}(C) = \{ Q \mid  \sum_{q \in Q} \textit{stake}(q, C) > \frac{2}{3} M \}$.
Intuitively, members of the system in a given configuration $C$ are the processes that 
% \yareplace{received money at least once}{have a strictly positive stake}, 
received money at least once,
and a set of processes is considered to be a quorum in configuration $C$, if their total stake in $C$ is more than two-thirds of the system stake. %

\myparagraph{Configuration lattice.}
Recall that a \emph{join-semilattice} (we simply say a \emph{lattice}) is a tuple $(\Lat, \sqsubseteq)$, where $\Lat$ is a set of \emph{elements} provided with a partial-order relation $\sqsubseteq$, such that for any two elements $a \in \Lat$ and $b \in \Lat$,
% there exists $c \in \Lat$: such that $c$ is \emph{least upper bound} for the set $\{a, b\}$ and $c = a \sqcup b$.
% $\sqcup$, associative, commutative and idempotent binary operator on set $\Lat$, is called a \emph{join} (or sometimes \emph{merge}) operator. 
there exists the \emph{least upper bound} for the set $\{a, b\}$, i.e., an element $c \in \Lat$ such that $a \sqsubseteq c$, $b \sqsubseteq c$ and $\forall d \in \Lat$: if $a \sqsubseteq d$ and $b \sqsubseteq d$, then $c \sqsubseteq d$.
The least upper bound of elements $a \in \Lat$ and $b \in \Lat$ is denoted by $a \sqcup b$.
\atreplace{$\sqcup$, the associative, commutative and idempotent binary operator on $\Lat$, is called the \emph{join operator}.}{$\sqcup$ is an associative, commutative and idempotent binary operator on $\Lat$. It is called the \emph{join operator}.}
% $\sqcup$ is an associative, commutative and idempotent binary operator on $\Lat$. It is called the \emph{join operator}.

The \emph{configuration lattice} is then defined as $(\cC, \sqsubseteq)$, where $\cC$ is the set of all valid configurations, $\sqsubseteq=\subseteq$ and $\sqcup=\cup$.

\section{{\Name} Asset Transfer: Algorithm}
\label{sec:algo}
In this section we present {\Name} -- an implementation of an asset-transfer system.
We start with the main building blocks of the algorithm, and then proceed to the description of the \Name protocol itself.

We bundle parts of \Name algorithm that are semantically related in building blocks called \textit{objects}, each offering a set of operations, and combine them to implement asset transfer. 
% In this section we define two abstractions that serve as the main building blocks of our asset-transfer implementation. 
% \\

\myparagraph{Transaction Validation.} The \emph{Transaction Validation} (TV) object 
is a part of \Name ensuring that transactions that a correct process $p$ adds to its local state $T_p$ do not conflict.

A correct process $p$ submits a transaction $tx = (p,\tau, D)$  and a matching certificate to the object by invoking an operation $\code{Validate}(tx, \sigma_{tx})$. 
The operation returns a set of transactions $txs \in 2^{\tT}$ together with a certificate $\sigma_{txs} \in \Sigma_{2^{\tT}}$. 
We call a transaction set \textit{verifiable} iff it is verifiable in terms of a function $\code{VerifyTransactionSet}(txs, \sigma_{txs})$. 
%
%[[PK clear from the context
%We provide an implementation of this function later in this paper.
%]]
%
Intuitively, the function returns $\true$ iff certificate $\sigma_{txs}$ confirms that set of transactions $txs$ is \emph{validated} by sufficiently many system members.

Formally, TV satisfies the following properties:
\begin{description}
    \item [TV-Verifiability:] If an invocation of \atreplace{$\code{Validate}(\dots)$}{$\code{Validate}$} returns $\langle txs, \sigma_{txs} \rangle$ to a correct process, then
    $\code{VerifyTransactionSet}(txs, \sigma_{txs}) = \true$;
    \item [TV-Inclusion:]  If $\code{Validate}(tx, \sigma_{tx})$ returns $\langle txs, \sigma_{txs} \rangle$ to a correct process, then $tx \in txs$;
  \item[TV-Validity:] The union of returned verifiable transaction sets consists of non-conflicting verifiable transactions.
\end{description}
%[[PK do not see the point
%In this paper we denote an object that meets a specification of Transaction Validation as $\textit{TxVal}$. 
%]]
In our algorithm, we use one Transaction Validation object $\textit{TxVal}$.

\myparagraph{Adjustable Byzantine Lattice Agreement.} %
Our asset-transfer algorithm reuses elements of an implementation of \emph{Byzantine Lattice Agreement} (BLA), a Lattice Agreement~\citep{gla} protocol that tolerates Byzantine failures.
%
%We do not implement BLA \emph{per se}, but we use similar concepts when implementing  key parts of our algorithm.
%
We introduce \emph{Adjustable Byzantine Lattice Agreement} (ABLA), an abstraction that captures safety properties of BLA.
An ABLA object is parameterized by a lattice ($\Lat$, $\sqsubseteq$) and a boolean function $\code{VerifyInputValue}: \Lat \times \Sigma_{\Lat} \rightarrow \{\true, \false \}$.
An input value of a given ABLA object is \textit{verifiable} if it complies with   $\code{VerifyInputValue}$.

ABLA exports one operation: $\code{Propose}(v, \sigma_v)$, where $v \in \Lat$ is an input value and $\sigma_v \in \Sigma_{\Lat}$ is a matching certificate.
It also exports function $\code{VerifyOutputValue}: \Lat \times \Sigma_{\Lat} \rightarrow \{\true, \false\}$.
The $\code{Propose}$ operation returns a pair $\langle w, \sigma_w \rangle$, where $w \in \Lat$ is an output value and $\sigma_w \in \Sigma_{\Lat}$ is a matching certificate.
An output value $w$ of an ABLA object is \textit{verifiable} if it complies with function $\code{VerifyOutputValue}$.

An ABLA object satisfies the following properties:
\begin{description}
    \item[ABLA-Validity]: Every verifiable output value $w$ is a join of some set of verifiable input~values;
    
    \item[ABLA-Verifiability:] If an invocation of \atreplace{$\code{Propose}(\dots)$}{$\code{Propose}$} returns $\langle w, \sigma_{w} \rangle$ to a correct process,
    then $\code{VerifyOutputValue}(w, \sigma_w) = \true$;
    % \[ 
    % \code{VerifyOutputValue}(w, \sigma_w) = \true;
    % \]
    
    \item[ABLA-Inclusion:] If $\code{Propose}(v, \sigma_v)$ returns $\langle w, \sigma_w \rangle$, then $v \sqsubseteq w$;
    
    \item[ABLA-Comparability:] All verifiable output values are comparable.
\end{description} 
In our algorithm, we use two ABLA objects, \textit{ConfigLA} and \textit{HistLA}.
%
%The implementation of ABLA is presented in Appendix~\ref{app:code}.  provides liveness property of the system.     

\sloppy
\myparagraph{Combining TV and ABLA.}
\atreplace{Let us parameterize \textit{ConfigLA} (a configuration lattice) as follows}{Let \textit{ConfigLA} be an ABLA object parameterized as follows}: $(\Lat,\sqsubseteq) = (2^{\tT}, \subseteq)$, \atreplace{$\code{VerifyInputValue}(\dots)$}{$\code{VerifyInputValue}(v, \sigma_v)$} is defined as \atreplace{$\code{VerifyTransactionSet}(\dots)$}{$\code{VerifyTransactionSet}(v, \sigma_v)$}, that is a part of TV. 
% The function $\code{VerifyOutputValue}(C, \sigma_C)$ of this object we denote as $\code{VerifyConfiguration}(C, \sigma_C)$.
% \atrev{We use $\code{VerifyConfiguration}(C, \sigma_C)$ as an alias for $\textit{ConfigLA}.\code{VerifyOutputValue}(C, \sigma_C)$.}
The function $\code{VerifyConfiguration}(C, \sigma_C)$ is an alias for $\textit{ConfigLA}.\code{VerifyOutputValue}(C, \sigma_C)$.
Using this object, the processes produce \emph{comparable configurations}, i.e., transaction sets related by containment ($\subseteq$).

\textit{HistLA} is \atadd{an ABLA object} used to produce \emph{sets of configurations} that are all related by containment.
%An input value of such an ABLA object represents a set of configurations. 
Under the assumption that every input (a set of configurations) to the object only contains comparable configurations, the outputs are related by containment and configurations in these sets are comparable. 
We can see these sets as \emph{sequences} of ordered configurations.
Such sets are called \emph{histories}, as was recently  suggested for asynchronous Byzantine fault-tolerant reconfiguration~\citep{bla}.
\ppreplace{Histories}{It was shown that histories} allow us to access only $O(n)$ configurations, when $n$ configurations are concurrently proposed. 
A history $h$ is called \emph{verifiable} if it complies with $\code{VerifyHistory}(h, \sigma_h)$.

Formally, the ABLA object \textit{HistLA} is parameterized as follows: $(\Lat,\sqsubseteq) = (2^{2^{\tT}}, \subseteq)$.
The requirement that the elements of an input are comparable is established via the function $\code{VerifyInputValue}(\{ \langle C_1, \sigma_1 \rangle, \dots, \langle C_n, \sigma_n \rangle \}) = \bigwedge\limits_{i=1}^n \code{VerifyConfiguration}(C_i, \sigma_i)$. 
%
% The function $\code{VerifyOutputValue}(H, \sigma_H)$ of \textit{HistLA}  is set to $\code{VerifyHistory}(H, \sigma_H)$.
$\code{VerifyHistory}(H, \sigma_H)$ is an alias for $\textit{HistLA}.\code{VerifyOutputValue}(H, \sigma_H)$.

\begin{figure}[ht]
\centering
\includesvg[width=\textwidth]{LA-PoS/HistoriesProduction}
\caption{\Name pipeline.}
% \caption{\Name pipeline: 
% %The main part of \Name protocol is depicted in the %Figure~\ref{fig:histories_production}; 
% $\code{VS}(tx, \sigma_{tx})$ stands for $\code{VerifySender}(tx, \sigma_{tx})$,  $\code{VTS}(txs, \sigma_{txs})$ stands for $\code{VerifyTransactionSet}(txs, \sigma_{txs})$,  $\code{VC}(C, \sigma_{C})$ stands for $\code{VerifyConfiguration}(C, \sigma_{C})$, and  $\code{VH}(H, \sigma_{H})$ -- for $\code{VerifyHistory}(H, \sigma_{H})$.}
\label{fig:histories_production}
%\medskip
%\small
%\justify
 \vspace{-4mm}
\end{figure}

Thus, the only  valid input values for the \textit{HistLA} object consist of  verifiable output values of the \textit{ConfigLA} object.
At the same time, the only valid input values for the \textit{ConfigLA} object are verifiable transaction sets, returned from the \textit{TxVal} object. 
And, the only valid inputs for a \textit{TxVal} object are signed transactions.
In this case, all verifiable histories are related by containment (comparable via $\subseteq$), and all configurations within one verifiable history are related by containment (comparable via $\subseteq$) as well.
Such a pipeline \atreplace{can help}{helps} us to guarantee that all configurations of the system are comparable and contain non-conflicting transactions. 
Hence, such configurations can act as a consistent representation of stake distribution that changes with time.

To get a high-level idea of how \Name works, imagine a conveyor belt (Figure~\ref{fig:histories_production}). %
Transactions of different users are submitted as inputs, and as outputs, they obtain sets of configurations. 
Then one can choose the configuration representing the largest resulting set and install it in the system\pprev{, changing the funds distribution in the system}. 

\myparagraph{Algorithm overview.}
We assume that blocks of code (functions, operations, callbacks) are executed sequentially until they complete or get interrupted by a wait condition ($\textbf{wait for} \dots$).
Some events, e.g., receiving a message, may trigger callbacks (marked with keyword \textbf{upon}).
However, they are not executed immediately, but firstly placed in an event queue, waiting for their turn. 

By ``$\textbf{let } \textit{var} = \textit{expression}$'' we denote an assignment to a local variable (which can be accessed only from the current function, operation, or callback), and by ``$\textit{var} \leftarrow \textit{expression}$'' we denote an assignment to a global variable (which can be accessed from anywhere by the same process).

% \atnote{This definition is confusing.}
%
We denote calls to a weak reliable broadcast primitive with $\textbf{WRB-broadcast}\langle\dots\rangle$ and $\textbf{WRB-deliver}\langle\dots\rangle$.
Besides, we assume a \emph{weak uniform reliable broadcast (WURB)} primitive, a variant of uniform reliable broadcast~\cite{rbroadcast}.
WURB ensures an additional property compared to WRB: if the system remains static (i.e., the configurations stop changing) and a correct process delivers message $m$, then every forever-correct process eventually delivers $m$.  
%if the system remains static (i.e. quorums do not change): if a correct process delivers message $m$, then any forever-correct process will eventually deliver message $m$.
%
The primitive helps us to ensure that if configuration $C$ is never replaced with a greater one, then every forever-correct process will eventually learn about such a configuration $C$.
To achieve this semantics, before triggering the $\textbf{deliver}$ callback, a correct process just needs to ensure that a quorum of replicas have received the message~\citep{rbroadcast}.
The calls to WURB should be associated with some configuration $C$, and are denoted as $\textbf{WURB-broadcast}\langle\dots,C\rangle$ and $\textbf{WURB-deliver}\langle\dots,C\rangle$. 

The algorithm uses one TV object \textit{TxVal} and two ABLA objects: \textit{ConfigLA} and \textit{HistLA}.
%[[PK obviously
%Their specifications are the same as those given earlier.
%]]
We list the pseudocode for the implementation of the \textit{TxVal} object in Appendix~\ref{app:code}, Figure~\ref{fig:transfervalidation}. 
The implementations of \textit{ConfigLA} and \textit{HistLA} are actually the same, and only differ in their parameters, we therefore provide one generalized implementation for both objects  (Appendix~\ref{app:code}, Figure~\ref{fig:dbla}).

Every process maintains variables $\textit{history}$, $T_p$ and $C_{cur}$, accessible everywhere in the code.
$T_p$ is the latest installed configuration by the process $p$, by $C_{cur}$ we denote the configuration starting from which the process needs to transfer data it stores to greater configurations, and $\textit{history}$ is the current verifiable history (along with its certificate $\sigma_{hist}$).

The main part of \Name protocol (depicted in Figure~\ref{fig:posmain}) exports one operation $\code{Transfer}(tx, \sigma_{tx})$.
% It also provides an access to a local state copy $T_p$ of a process as a ``computed property'': the latest installed configuration $C_{inst}$ is returned.
%
%[[PK not sure it adds anything
%The first one is an operation specified by the asset-transfer system, while the second one gives an access to the local copy of shared state $T_p$ of the process $p$.
%]]
%
 
%
%The operation $\code{Transfer}(tx, \sigma_{tx})$ is implemented as described further.
%, but in the main part it repeats what was described earlier in this section and depicted on.
In this operation, we first set the local variables that store intermediate results produced in this operation, to null values $\perp$. 
Next, the $\code{Validate}$ operation of the Transaction Validation object \textit{TxVal} is invoked with the given transaction $tx$ and the corresponding certificate $\sigma_{tx}$.
The set of transactions $txs$ and certificate $\sigma_{txs}$ returned from the TV object are then used as an input for \code{Propose} operation of $\textit{ConfigLA}$.
Similarly, the result returned from $\textit{ConfigLA}$ ``wrapped in'' a singleton set, is used as an input for \code{Propose} operation of $\textit{HistLA}$. 
The returned verifiable history is then broadcast in the system.
We consider operation $\code{Transfer}(tx, \sigma_{tx})$ to \emph{complete} by a correct process $p$ once process $p$ broadcasts message $\langle \textbf{NewHistory},h, \sigma \rangle$ at \atreplace{line~\ref{line:posmain:historypropose}}{line~\ref{line:posmain:rbhistory}} or if $p$
% executed a function that stops the current $\code{Transfer}$ operation at line~\ref{line:posmain:complete}.
stops the current $\code{Transfer}$ by executing line~\ref{line:posmain:complete}.
%
%%%%%%%%%%%%%%%%%%%%%%%%%%%%%%%%%%%%%%%%%%%%%%
%% AT: The only description I found is in the same section (and subsections are not used at all). Perhaps, the reference to "the previous subsection" is from an older version of the paper.
%%%%%%%%%%%%%%%%%%%%%%%%%%%%%%%%%%%%%%%%%%%%%%
% Basically, the implementation of $\code{Transfer}(\dots)$ operation follows the logic described in previous subsection and depicted in Figure~\ref{fig:histories_production}.
Basically, the implementation of \atreplace{$\code{Transfer}(\dots)$}{$\code{Transfer}$} operation follows the logic described before and depicted in Figure~\ref{fig:histories_production}.

\begin{figure}[ht!]
\hrule \vspace{1mm}
 {\footnotesize
\setcounter{mylinenumber}{0}
\begin{tabbing}
 bbbb\=bbbb\=bbbb\=bbbb\=bbbb\=bbbb\=bbbb\=bbbb \=  \kill

$\textbf{typealias}~ \text{SignedTransaction} = \text{Pair}\langle\tT, \Sigma_{\tT}\rangle$ \\
$\textbf{typealias}~ \cC = \text{Set}\langle \tT \rangle$ \\\\
\textbf{Global variables:} \\
$\textit{TxVal} = \text{TV}()$ \\
$\textit{ConfigLA} = \text{ABLA}(\Lat = 2^{\tT}, \sqsubseteq = \subseteq, v_{init} = \atreplace{\langle C_{init}, \perp\rangle}{C_{init}})$ \\
$\textit{ConfigLA}.\text{VerifyInputValue}(v, \sigma) = \text{VerifyTransactionSet}(v, \sigma)$ \\
% $\textit{ConfigLA}.\text{VerifyOutputValue}(v, \sigma) = \text{VerifyConfiguration}(v, \sigma)$ \\
%
% \atrev{$\text{VerifyConfiguration}(C, \sigma) = \textbf{if } (\sigma = \bot) \textbf{ then } (C = C_{init}) \textbf{ else } \textit{ConfigLA}.\text{VerifyOutputValue}(C, \sigma)$}\\
%
$\text{VerifyConfiguration}(v, \sigma) = \textit{ConfigLA}.\text{VerifyOutputValue}(v, \sigma)$ \\
$\textit{HistLA} = \text{ABLA}(\Lat = 2^{2^{\tT}}, \sqsubseteq = \subseteq, v_{init} = \atreplace{\langle \{C_{init}\}, \perp\rangle}{\{C_{init}\}})$\\
$\textit{HistLA}.\text{VerifyInputValue}(\{v\}, \sigma) = \text{VerifyConfiguration}(v, \sigma)$ \\
% $\textit{HistLA}.\text{VerifyOutputValue}(v, \sigma) = \text{VerifyHistory}(v, \sigma)$ \\
%
% \atrev{$\text{VerifyHistory}(h, \sigma) = \textbf{if } (\sigma = \bot) \textbf{ then } (h = \{C_{init}\}) \textbf{ else } \textit{HistLA}.\text{VerifyOutputValue}(h, \sigma)$}\\
%
$\text{VerifyHistory}(v, \sigma) = \textit{HistLA}.\text{VerifyOutputValue}(v, \sigma)$ \\
\\

$T_p = C_{init}$ \\
$C_{cur}$ = $C_{init}$ \\
\textit{history} = $\{C_{init}\}$, $\sigma_{hist} = \perp $\\\\
$\textit{txs} = \perp$, $\sigma_{txs} = \perp$ \\
$C  = \perp$, $\sigma_C = \perp$ \\
$H = \perp$, $\sigma_H = \perp$ \\
$\textit{curTx} = \perp$ \\
\\

\textbf{operation} Transfer($tx$: $\tT$, $\sigma_{tx}$: $\Sigma_{\tT}$): void \\
\nnll \> $txs \leftarrow \perp$, $\sigma_{txs} \leftarrow \perp$ \\ 
\nnll \> $C \leftarrow \perp$, $\sigma_{C} \leftarrow \perp$ \\
\nnll \> $H \leftarrow \perp$, $\sigma_{H} \leftarrow \perp$ \\
\nnll \> $\textit{curTx} \leftarrow tx$ \\

\nnll \label{line:posmain:validate} \> $txs, \sigma_{txs} \leftarrow \textit{TxVal}.\text{Validate}(tx, \sigma_{tx})$  \\
\nnll  \label{line:posmain:configpropose} \> $C, \sigma_{C} \leftarrow \textit{ConfigLA}.\text{Propose}(txs, \sigma_{txs}) $ \\
\nnll \label{line:posmain:historypropose} \> $H, \sigma_{H} \leftarrow \textit{HistLA}.\text{Propose}(\{C\}, \sigma_{C})$ \\
%% [AT: we need to clear curTx in order for line 15 (if curTx \neq \bot) to work properly]
\nnll \> \atadd{$\textit{curTx} \leftarrow \bot$} \\
\nnll \label{line:posmain:rbhistory} \> \textbf{WRB-broadcast} $\langle \textbf{NewHistory}, H, \sigma_H \rangle$ \\
\\

\textbf{upon WRB-deliver} $\langle \textbf{NewHistory}, h, \sigma\rangle$ from \textit{any}: \\
\nnll \> \textbf{if} VerifyHistory($h$, $\sigma$) \textbf{and} history $\subset h$ \textbf{then} \\
\nnll \>\> \textbf{trigger} event NewHistory(h) \commentline{$\forall C \in h:$ $C$ is a candidate configuration}\\
\nnll \>\> history $\leftarrow h$, $\sigma_{hist} \leftarrow \sigma$ \\
\nnll \>\> \textbf{let} $C_h$ = HighestConf(history) \\
\nnll \>\> UpdateFSKey(height($C_h$)) \\
\nnll \>\> \textbf{if} $curTx \neq \bot$ \textbf{then} \commentline{There is an ongoing \textnormal{\code{Transfer} operation}} \\
\nnll \>\>\> \textbf{if} $curTx \in C_h$ \textbf{then} \commentline{The last issued transaction by $p$ is included in verifiable history } \\
\nnll \label{line:posmain:complete}\>\>\>\> CompleteTransferOperation() \commentline{Stops the ongoing \code{Transfer} operation if any } \\ 
\nnll \label{line:posmain:restart_start} \>\>\> \textbf{else if} $\textit{txs} = \perp$ \textbf{then} \commentline{Operation \code{Transfer} is ongoing and $txs$ has not been received} \\
\nnll \>\>\>\> $\textit{TxVal}.\text{Request}(\emptyset)$ \commentline{Restarts transaction validation by accessing $C_h$ } \\
\nnll \>\>\> \textbf{else if} $C = \perp$ \textbf{then} \commentline{Operation \code{Transfer} is ongoing and $C$ has not been received yet}\\ 
\nnll \>\>\>\> $\textit{ConfigLA}.\text{Refine}(\emptyset) $ \commentline{Restarts verifiable configuration reception by accessing $C_h$ } \\
\nnll \>\>\> \textbf{else if} $H = \perp$ \textbf{then} \commentline{Operation \code{Transfer} is ongoing and $H$ has not been received yet}\\
\nnll \label{line:posmain:restart_end}\>\>\>\> $\textit{HistLA}.\text{Refine}(\emptyset)$ \commentline{Restarts verifiable history reception by accessing $C_h$ }

\end{tabbing}
}
%\vspace{-1.5mm}
 \hrule
 \caption{\Name: code for process $p$.}
 \label{fig:posmain}
 \vspace{-6mm}
\end{figure}

If a correct process delivers a message $\langle \textbf{NewHistory}, h, \sigma \rangle$, where $\sigma$ is a valid certificate for history $h$ that is greater than its local estimate $\textit{history}$, it ``restarts'' the first step that it has not yet completed in Transfer operation (lines~\ref{line:posmain:restart_start}-\ref{line:posmain:restart_end}). 
For example, if a correct process $p$ receives a message $\langle \textbf{NewHistory}, h, \sigma_h\rangle$, where $\sigma_h$ is a valid certificate for $h$ and $\textit{history} \subset h$ while being in \textit{ConfigLA}, it restarts this step in order to access a greater configuration. 
The result of \textit{ConfigLA} will still be returned to $p$ in the place it has called \atreplace{$\textit{ConfigLA}.\code{Propose}(\dots)$}{$\textit{ConfigLA}.\code{Propose}$}.
Intuitively, we do this in order to reach the most ``up-to-date'' configuration (``up-to-date'' stake holders).

The State Transfer Protocol (Figure~\ref{fig:statetransfer}) helps us to ensure that the properties of the objects (\textit{TxVal}, \textit{ConfigLA} and \textit{HistLA}) are satisfied \emph{across configurations} that our system goes through. 
As system stake is redistributed actively with time, quorums in the system change as well and, hence, we need to pass the data that some quorum knows in configuration $C$ to some quorum of any configuration $C': C \sqsubset C'$, that might be installed after $C$.
The protocol is executed by a correct process after it delivers a verifiable history $h$, such that $C \in h$ and $C_{cur} \sqsubset C$.

% When a correct process executes line~\ref{line:statetrans:help_end} it ``rebroadcasts'' the messages it delivered from a set of processes $Q$. 
% If a correct process delivers a ``rebroadcast'' message $m$, then it uses the original sender of $m$ and not the one who ``rebroadcast'' the message when processing $m$. 
% This can be accomplished with regular digital signatures.

\atrev{%
The \textit{height} of a configuration $C$ is the number of transactions in it (denoted as $\textit{height}(C) = |C|$).
Since all configurations installed in {\Name} are comparable, height can be used as a unique identifier for an installed configuration.}
We \atadd{also} use height as the \emph{timestamp} for forward-secure digital signatures.  
%Recall that in our problem we consider a set of transactions as a configuration. 
\atremove{A configuration determines the ``current'' stake of each participant.}
When a process $p$ answers requests in configuration $C$, it signs the message with timestamp $\textit{height}(C)$\atremove{, i.e., $p$ ``proves'' that it currently sees a stake distribution defined by configuration $C$}.
% {its current configuration is not higher than $C$.}
%
The process $p$ invokes $\text{UpdateFSKey}(\textit{height}(C'))$ when \atreplace{a new configuration $C' : C \sqsubset C'$ is installed}{it discovers a new configuration $C' : C \sqsubset C'$}. 
Thus, processes that still consider $C$ as the current configuration (and see the corresponding stake distribution) cannot be deceived by a process $p$, that was correct in $C$, but not in a higher installed configuration $C'$ (e.g., $p$  spent all its stake by submitting transactions, \atreplace{that}{which} became part of configuration $C'$, thereby lost its weight in the system, and later became Byzantine).

The implementation of verifying functions \atreplace{(Figure~\ref{fig:transfervalidation})}{(Figure~\ref{fig:verifyingfuncs})} and  a description of the auxiliary functions used in the pseudocode are delegated to Appendix~\ref{app:code}.

\begin{figure}[ht]
\hrule \vspace{1mm}
 {\footnotesize
\begin{tabbing}
 bbbb\=bbbb\=bbbb\=bbbb\=bbbb\=bbbb\=bbbb\=bbbb \=  \kill

\textbf{upon} $C_{cur} \not = \text{HighestConf}(\textit{history})$: \\
\nnll \> \textbf{let} $C_{next}$ = HighestConf(\textit{history}) \\ 
\nnll \> \textbf{let} $S = \{ C \in \textit{history} \mid C_{cur} \sqsubseteq C \sqsubset C_{next} \}$ \\
\nnll \> $\textit{seqNum} \leftarrow \textit{seqNum} + 1$ \\
\nnll \> \textbf{for} $C$ \textbf{in} $S$: \\
\nnll \>\>  \textbf{send} $\langle\textbf{UpdateRead}, \textit{seqNum},C\rangle$ to $\textit{members}(C)$\\
\nnll\label{line:statetrans:wait}\>\>\textbf{wait for} $(C \sqsubset C_{cur}~\textbf{or}~\exists Q \in \textit{quorums}(C)~\text{responded with \textit{seqNum}})$ \\
\nnll \> \textbf{if} $C_{cur} \sqsubset C_{next}$ \textbf{then} \\
\nnll \>\> $C_{cur} \leftarrow C_{next}$ \\
\nnll \>\> \textbf{WURB-broadcast} $\langle \textbf{UpdateComplete},C_{next} \rangle$\\
\\

\textbf{upon} receive  $\langle \textbf{UpdateRead}, sn, C\rangle$ from \textit{q}: \\
\nnll \> \textbf{wait for} $C \sqsubset \text{HighestConf}(\textit{history})$ \\
\nnll \> \textbf{let} $\textit{txs} = \textit{TxVal.seenTxs}$ \\
\nnll \> \textbf{let} $\textit{values}_1 = \textit{ConfigLA.values}$ \\
\nnll \> \textbf{let} $\textit{values}_2 = \textit{HistLA.values}$ \\
\nnll \> \textbf{send} $\langle \textbf{UpdateReadResp}, \textit{txs}, \textit{values}_1, \textit{values}_2, sn \rangle$ to \textit{q} \\ 
\\

\textbf{upon} receive  $\langle \textbf{UpdateReadResp}, \textit{txs}, \textit{values}_1, \textit{values}_2, sn\rangle$ from \textit{q}: \\
\nnll \> \textbf{if} VerifySenders($txs$) \textbf{and} \textit{ConfigLA}.VerifyValues($\textit{values}_1$) \\ \nnll \>\>\> \textbf{and} \textit{HistLA}.VerifyValues($\textit{values}_2$) \textbf{then} \\
\nnll \>\> $\textit{TxVal.seenTxs} \leftarrow \textit{TxVal.seenTxs} \cup txs$ \\
\nnll \>\> $\textit{ConfigLA.values} \leftarrow \textit{ConfigLA.values} \cup \textit{values}_1.\text{filter}(\langle v, \sigma_v\rangle \Rightarrow v \not\in \textit{ConfigLA.values}.\text{firsts}())$ \\
\nnll \>\> $\textit{HistLA.values} \leftarrow \textit{HistLA.values} \cup \textit{values}_2.\text{filter}(\langle v, \sigma_v\rangle \Rightarrow v \not\in \textit{HistLA.values}.\text{firsts}())$ \\
\\

\textbf{upon} \textbf{WURB-deliver} $\langle \textbf{UpdateComplete}, C \rangle$ from quorum $Q \in \textit{quorums}(C)$: \\
\nnll \> \textbf{wait for} $C \in \textit{history}$ \\ 
\nnll \> \textbf{if} $T_p \sqsubset C$ \textbf{then} \\
\nnll \>\> \textbf{if} $C_{cur} \sqsubset C$ \textbf{then} $C_{cur} \leftarrow C$ \\
\nnll \label{line:statetransfer:updateinstalledconfig} \>\> $T_p \leftarrow C$ \commentline{ Update set of ``confirmed'' transactions }\\
\nnll \>\> \textbf{trigger} event InstalledConfig($C$) \commentline{$C$ is an installed configuration}
% \nnll \label{line:statetrans:help_st}\>\> \textbf{WRB-broadcast} $\langle \textbf{NewHistory}, \textit{history}, \sigma_{hist} \rangle$ among $\textit{members}(C)$\\
% \nnll \label{line:statetrans:help_end} \>\> \textbf{WRB-rebroadcast} $\langle \textbf{UpdateComplete}, C\rangle$ delivered from quorum $Q$ to $\textit{members}(C)$ \\

\end{tabbing}
}
% \vspace{-1.5mm}
 \hrule
 \caption{State Transfer Protocol: code for process $p$.}
 \label{fig:statetransfer}
 \vspace{-4mm}
\end{figure}

\myparagraph{Implementing Transaction Validation.}
The implementation of the TV object \textit{TxVal} is depicted in Figure~\ref{fig:transfervalidation} \atadd{(Appendix~\ref{app:code})}.
The algorithm can be divided into two phases.

In the first phase, process $p$ sends a message request that contains a set of transactions $\textit{sentTxs}$ to be validated to all members of the current configuration.
Every correct process $q$ that receives such messages first checks whether the transactions have actually been issued by their senders.
If yes, $q$ adds the transactions in the message to the set of transactions it has seen so far and checks whether any transaction from the ones it has just received conflicts with some other transaction it knows about. 
All conflicting transactions are placed in set \textit{conflictTxs}. After $q$ validates transactions, it sends a message $\langle \textbf{ValidateResp}, \textit{txs}, \textit{conflictTxs}, sig, sn\rangle$. 
Here, $txs$ is the union of verifiable transactions received by $p$ from $q$ just now and all other non-conflicting transactions $p$ is aware of so far.
Process $p$ then verifies a received message $\langle \textbf{ValidateResp}, \textit{txs}_q,\textit{conflictTxs}_q, sig_q, sn\rangle$. 
The message received from $q$ is considered to be valid by $p$ if $q$ has signed it with a private key that corresponds to the current configuration, all the transactions from $\textit{txs}_q$ have valid certificates, and if for any verifiable transaction $tx$ from $\textit{conflictTxs}$ there is a verifiable transaction $tx$ also from $\textit{conflictTxs}$, such that $tx$ conflicts with $tx'$.
If the received message is valid and $\textit{txs}_q$ equals $\textit{sentTxs}$, then $p$ adds signature of process $q$ and its validation result to its local set $\textit{acks}_1$.
In case $\textit{sentTxs} \subset \textit{txs}_q$, the whole phase is restarted.
The first phase is considered to be completed as soon as $p$ collects responses from some quorum of processes in $\textit{acks}_1$. 

Such implementation of the first phase makes correct process $p$ obtain certificate not only for its transaction, but also for other non-conflicting transactions issued by other processes.
This helping mechanism ensures that transactions of forever-correct processes are eventually confirmed and become part of some verifiable configuration. 

In the second phase, $p$ collects signatures from a weighted quorum of the current configuration.
If $p$ successfully collects such a set of signatures, then the configuration it saw during the first phase was \emph{active} (no greater configuration had been installed) and it is safe for $p$ to return the obtained result.
This way the  so-called ``slow reader'' attack~\cite{bla} is anticipated.
%
%% [AT: Feel free to remove this remark if you think that it is not useful.
%\atrev{Thanks to this second phase, even operations that are concurrent with reconfiguration are guaranteed safety. 
%
%In particular, it prevents the so-called ``slow reader'' attack~\cite{bla}.}
% \atrev{This second phase prevents the so-called ``slow reader'' attack~\cite{bla}, which is a way Byzantine replicas may try to break the safety of operations that are concurrent with reconfiguration.}

If during any of the two phases $p$ receives a message with a new verifiable history that is greater (w.r.t. $\subseteq$) than its local estimate and does not contain last issued transaction by $p$, the described algorithm starts over.
We guarantee that the number of restarts in \textit{TxVal} in \Name protocol is finite only for \textit{forever-correct} processes (please refer to Appendix~\ref{app:proof} for a detailed proof). 
Note that we cannot guarantee this for \textit{all} correct processes as during the protocol execution some of them can become Byzantine.

In the implementation, we assume that the dependency set of a transaction only includes transactions that are confirmed (i.e., included in some installed configuration), otherwise they are considered invalid.

\myparagraph{Implementing Adjustable Byzantine Lattice Agreement.} 
The generalized implementation of ABLA objects \textit{ConfigLA} and \textit{HistLA} is specified by Figure~\ref{fig:dbla} (Appendix~\ref{app:code}).
The algorithm is generally inspired by the implementation of Dynamic Byzantine Lattice Agreement from~\cite{bla}, but there are a few major differences.
Most importantly, it is tailored to work even if the number of reconfiguration requests (i.e., transactions) is infinite.
Similarly to the Transaction Validation implementation, algorithm consists of two phases.

In the first phase, process $p$ sends a message that contains the verifiable inputs it knows to other members and then waits for a weighted quorum of processes of the current configuration to respond with the same set. If $p$ receives a message with a greater set (w.r.t. $\subseteq$), it restarts the phase.
The validation process performed by the processes is very similar to the one used in Transaction Validation object implementation.

The second phase, in fact, is identical to the second phase of the Transaction Validation implementation.
We describe it in the pseudocode for completeness.

As with Transaction Validation, whenever $p$ delivers a verifiable history $h$, such that it is greater than its own local estimate and $h$ does not contain last issued transaction by $p$, the described algorithm starts over. Similarly to \textit{TxVal}, it is guaranteed that the number of restarts a forever-correct process make in both \textit{ConfigLA} and \textit{HistLA} is finite.

\section{Proof of Correctness}
\label{sec:proof}
In this section, we outline the proof that \Name indeed satisfies the properties of an asset-transfer system.
First, we formulate a restriction we impose on the adversary that is required for our implementation to be correct. 
Informally, the adversary is not allowed to corrupt one third or more stake in a ``candidate'' configuration, i.e., in a configuration that can potentially be used for adding new transactions. 
The adversary is free to corrupt a configuration as soon as it is superseded by a strictly higher one.
We then sketch the main arguments of our correctness proof (the detailed proof is deferred to Appendix~\ref{app:proof}). 

\subsection{Adversarial restrictions}
\label{sec:adversary}

A configuration $C$ is considered to be \emph{installed} if some correct process has triggered the special event $\text{InstalledConfig}(C)$.
We call a configuration $C$ a \emph{candidate} configuration if some correct process has triggered a $\text{NewHistory}(h)$ event, such that $C \in h$.
We also say that a configuration $C$ is \emph{superseded} if some correct process installed a higher configuration $C'$. 
An installed (resp., candidate) configuration $C$ is called an \emph{active} (resp., an \emph{active candidate}) configuration as long as it is not superseded.
Note that at any moment of time $t$, every active installed configuration is an active candidate configuration, but not vice versa.

%
%If, starting from some time $t$, a configuration $C$  remains active candidate forever, we say that $C$ is \emph{forever-active}.
%
%We then define $\textit{correct}(C)$ to be the set of forever-active processes in $\textit{members}(C)$.

We expect the adversary to obey the following condition:

\begin{description}

\item[Configuration availability:] 
Let $C$ be an active candidate configuration at time $t$ and let $\textit{correct}(C,t)$ denote the set of processes in $\textit{members}(C)$ that are correct at time $t$.   Then $C$ must be \emph{available} at time $t$: 
\[
\sum_{q\in\textit{correct}(C,t)}\textit{stake}(q,C) >2/3 M.
\]
\end{description}

Note that the condition allows the adversary to compromise a candidate configuration once it is superseded by a more recent one. 
As we shall see, the condition implies that our algorithm is live. 
Intuitively, a process with a pending operation will either eventually hear from the members of a configuration holding ``enough'' stake which might allow it to complete its operation or will learn about a more recent configuration, in which case it can abandon the superseded configuration and proceed to the new one.

\subsection{Proof outline}

%The detailed proof is delegated to Appendix~\ref{app:proof}.
%
%In this section we present its main arguments.

\myparagraph{Consistency}. 
The consistency property states that (1)~as long as process $p$ is correct, $T_p$ contains only verifiable non-conflicting transactions and (2)~if processes $p$ and $q$ are correct at times $t$ and $t'$ respectively, then $T_p(t) \subseteq T_q(t')$ or $T_q(t') \subseteq T_p(t)$.
To prove that \Name satisfies this property, we show that our implementation of \textit{TxVal} meets the specification of Transaction Validation and that both \textit{ConfigLA} and \textit{HistLA} objects satisfy the properties of Adjustable Byzantine Lattice Agreement.
Correctness of \textit{HistLA} ensures that all verifiable histories are related by containment and the correctness of \textit{ConfigLA} guarantees that all verifiable configurations are related by containment (i.e., they are \emph{comparable}).
Taking into account that the only possible verifiable inputs for \textit{HistLA} are sets that contain verifiable output values (configurations) of \textit{ConfigLA}, we obtain the fact that all configurations of any verifiable history are comparable as well.
As all installed configurations (all $C$ such that a correct process triggers an event $\text{InstalledConfig}(C)$) are elements of some verifiable history, they all are related by containment too.
Since $T_p$ is in fact the last configuration installed by a process $p$, we obtain (2).
The fact that every $p$ stores verifiable non-conflicting transactions follows from the fact that the only possible verifiable input values for \textit{ConfigLA} are the output transaction sets returned by \textit{TxVal}. As \textit{TxVal} is a correct implementation of Transaction Validation, then union of all such sets contain verifiable non-conflicting transactions. Hence, the only verifiable configurations that are produced by the algorithm cannot contain conflicting transactions. From this we obtain (1).

\myparagraph{Monotonicity.} This property requires that $T_p$ only grows as long as $p$ is correct. The monotonicity of \Name follows from the fact that correct processes install only greater configurations with time, and that the last installed configuration by a correct process $p$ is exactly $T_p$. 
Thus, if $p$ is correct at time $t'$, then for all $t < t'$: $T_p(t) \subseteq T_p(t')$.

\myparagraph{Validity.} This property requires that a transfer operation for a transaction $tx$ initiated by a forever-correct process will lead to $tx$ being included in $T_p$ at some point in time. In order to prove this property for \Name, we show that a forever-correct process $p$ may only be blocked in the execution of a \code{Transfer} operation if some other process successfully installed a new configuration.
We argue that from some moment of time on every other process that succeeds in the installation of configuration $C$ will include the transaction issued by $p$ in the $C$.
We also show that if such a configuration $C$ is installed then eventually every forever-correct process installs a configuration $C': C \sqsubseteq C'$.
As $T_p$ is exactly the last configuration installed by $p$, eventually any transaction issued by a forever-correct process $p$ is included in $T_p$.

\myparagraph{Agreement.} In the end we show that the \Name protocol satisfies the agreement property of an asset-transfer system. 
The property states the following: for a correct process $p$ at time $t$ and a forever-correct process $q$, there exists $t' \ge t$ such that $T_p(t) \subseteq T_p(t')$. Basically, it guarantees that if a transaction $tx$ was considered confirmed by $p$ when it was correct, then any forever-correct process will eventually confirm it as well.
To prove this, we show that if a configuration $C$ is installed by a correct process, then every other forever-correct process will install some configuration $C'$, such that $C \sqsubseteq C'$.
Taking into account the fact that $T_p$ is a last configuration installed by a process $p$, we obtain the desired.

\section{Concluding Remarks}
\label{sec:discussion}

\Name is a permissionless asset transfer system based on proof of stake. It builds on lattice agreement primitives and provides its guarantees in asynchronous environments where less than one third of the total stake is owned by malicious parties.
%
% Enhancements and optimizations with respect to communication and storage costs, the separation of processes into clients and replicas as well as fees and inflation are discussed in Appendix~\ref{app:optimizations}.

\myparagraph{Enhancements and optimizations.}
To keep the presentation focused, so far we described only the core of the \Name protocol.
However, there is a number of challenges that need to be addressed before the protocol can be applied in practice. 
\yareplace{%
We discuss some of them in Appendix~\ref{app:optimizations}.
First of all, we analyze the communication and storage complexity of the protocol and suggest a number of optimizations.
Then we suggest possible implementations for \emph{delegation}, \emph{fees} and \emph{inflation} mechanisms.
Note that these are non-trivial to implement in an asynchronous system because there is no clear agreement on the order in which transactions are added to the system or on the distribution of stake at the moment when a transaction is added to the system.
Finally, we discuss some practical aspects of using forward-secure digital signature schemes.}%
{In particular the communication, computation and storage complexity of the protocol can be improved with carefully constructed messages and signature schemes\atreplace{ and}{. Also,} mechanisms for \atreplace{the separation of processes into clients and replicas}{stake delegation} as well as fees and inflation are necessary. Note that these are non-trivial to implement in an asynchronous system because there is no clear agreement on the order in which transactions are added to the system or on the distribution of stake at the moment when a transaction is added to the system.
% \atnote{We need to say something about the discussion of f.-s. signatures in the appendix.}
We discuss these topics \atadd{as well as the practical aspects of using forward-secure signatures} in Appendix~\ref{app:optimizations}\atremove{ in more detail}.}

\myparagraph{Open questions.}
%
% In order to support general smart contracts beyond asset transfer, studying hybrid solutions which combine our approach with consensus in an efficient way is a promising avenue for future research with many open questions.
%
In this paper, we demonstrated that it is possible to combine asynchronous cryptocurrencies with proof-of-stake in presence of a dynamic adversary.
However, there are still plenty of open questions.
Perhaps, the most important direction is to study hybrid solutions which combine our approach with consensus in an efficient way in order to support general-purpose smart contracts.
Further research is also needed in order to improve the efficiency of the solution and to measure how well it will behave in practice compared to consensus-based solutions.
Finally, designing proper mechanisms in order to incentivize active and honest participation is a non-trivial problem in the harsh world of asynchrony, where the processes cannot agree on a total order in which transactions are executed.

\bibliographystyle{abbrv}
\bibliography{references}  

\begin{thebibliography}{10}

\bibitem{bellare1999forward}
M.~Bellare and S.~K. Miner.
\newblock A forward-secure digital signature scheme.
\newblock In {\em Annual International Cryptology Conference}, pages 431--448,
  Berlin, 1999. Springer.

\bibitem{PoS16}
I.~Bentov, A.~Gabizon, and A.~Mizrahi.
\newblock Cryptocurrencies without proof of work.
\newblock In {\em Financial Cryptography and Data Security - {FC} 2016
  International Workshops, BITCOIN, VOTING, and WAHC, Christ Church, Barbados,
  February 26, 2016, Revised Selected Papers}, pages 142--157, Berlin, 2016.
  Springer.

\bibitem{rbroadcast}
C.~Cachin, R.~Guerraoui, and L.~Rodrigues.
\newblock {\em Introduction to Reliable and Secure Distributed Programming}.
\newblock Springer Publishing Company, Incorporated, 2nd edition, 2011.

\bibitem{algorand}
J.~Chen and S.~Micali.
\newblock Algorand: {A} secure and efficient distributed ledger.
\newblock {\em Theor. Comput. Sci.}, 777:155--183, 2019.

\bibitem{coinmarketcap}
CoinMarketCap.
\newblock Cryptocurrency prices, charts and market capitalizations, 2021.
\newblock \url{https://coinmarketcap.com/}, accessed 2021-02-15.

\bibitem{astro-dsn}
D.~Collins, R.~Guerraoui, J.~Komatovic, P.~Kuznetsov, M.~Monti, M.~Pavlovic,
  Y.~A. Pignolet, D.~Seredinschi, A.~Tonkikh, and A.~Xygkis.
\newblock Online payments by merely broadcasting messages.
\newblock In {\em 50th Annual {IEEE/IFIP} International Conference on
  Dependable Systems and Networks, {DSN} 2020, Valencia, Spain, June 29 - July
  2, 2020}, pages 26--38. {IEEE}, 2020.

\bibitem{sybil}
J.~R. Douceur.
\newblock The sybil attack.
\newblock In {\em Peer-to-Peer Systems, First International Workshop, {IPTPS}
  2002, Cambridge, MA, USA, March 7-8, 2002, Revised Papers}, pages 251--260,
  Heidelberg, 2002. Springer.

\bibitem{drijvers2019pixel}
M.~Drijvers, S.~Gorbunov, G.~Neven, and H.~Wee.
\newblock Pixel: Multi-signatures for consensus.
\newblock In {\em 29th {USENIX} Security Symposium ({USENIX} Security 20)},
  Boston, MA, Aug. 2020. {USENIX} Association.

\bibitem{PoSE15}
S.~Dziembowski, S.~Faust, V.~Kolmogorov, and K.~Pietrzak.
\newblock Proofs of space.
\newblock In {\em Advances in Cryptology - {CRYPTO} 2015 - 35th Annual
  Cryptology Conference, Santa Barbara, CA, USA, August 16-20, 2015,
  Proceedings, Part {II}}, pages 585--605, 2015.

\bibitem{gla}
J.~M. Falerio, S.~K. Rajamani, K.~Rajan, G.~Ramalingam, and K.~Vaswani.
\newblock Generalized lattice agreement.
\newblock In D.~Kowalski and A.~Panconesi, editors, {\em {ACM} Symposium on
  Principles of Distributed Computing, {PODC} '12, Funchal, Madeira, Portugal,
  July 16-18, 2012}, pages 125--134. {ACM}, 2012.

\bibitem{FLP85}
M.~J. Fischer, N.~A. Lynch, and M.~S. Paterson.
\newblock Impossibility of distributed consensus with one faulty process.
\newblock {\em JACM}, 32(2):374--382, Apr. 1985.

\bibitem{quorums}
D.~K. Gifford.
\newblock Weighted voting for replicated data.
\newblock In {\em SOSP}, pages 150--162, 1979.

\bibitem{gilad2017algorand}
Y.~Gilad, R.~Hemo, S.~Micali, G.~Vlachos, and N.~Zeldovich.
\newblock Algorand: Scaling byzantine agreements for cryptocurrencies.
\newblock In {\em Proceedings of the 26th Symposium on Operating Systems
  Principles}, pages 51--68, 2017.

\bibitem{byz-bcast}
R.~Guerraoui, J.~Komatovic, P.~Kuznetsov, Y.~A. Pignolet, D.~Seredinschi, and
  A.~Tonkikh.
\newblock Dynamic {Byzantine} reliable broadcast.
\newblock In {\em OPODIS}, 2020.

\bibitem{at2-cons}
R.~Guerraoui, P.~Kuznetsov, M.~Monti, M.~Pavlovic, and D.-A. Seredinschi.
\newblock The consensus number of a cryptocurrency.
\newblock In {\em PODC}, 2019.
\newblock \url{https://arxiv.org/abs/1906.05574}.

\bibitem{Gup16}
S.~Gupta.
\newblock {A Non-Consensus Based Decentralized Financial Transaction Processing
  Model with Support for Efficient Auditing}.
\newblock Master's thesis, Arizona State University, USA, 2016.

\bibitem{Gossiping}
A.-M. Kermarrec and M.~van Steen.
\newblock Gossiping in distributed systems.
\newblock {\em SIGOPS Oper. Syst. Rev.}, 41(5):2–7, Oct. 2007.

\bibitem{ouroboros17}
A.~Kiayias, A.~Russell, B.~David, and R.~Oliynykov.
\newblock Ouroboros: {A} provably secure proof-of-stake blockchain protocol.
\newblock In {\em Advances in Cryptology - {CRYPTO} 2017 - 37th Annual
  International Cryptology Conference, Santa Barbara, CA, USA, August 20-24,
  2017, Proceedings, Part {I}}, pages 357--388, 2017.

\bibitem{rla}
P.~Kuznetsov, T.~Rieutord, and S.~{Tucci Piergiovanni}.
\newblock Reconfigurable lattice agreement and applications.
\newblock In P.~Felber, R.~Friedman, S.~Gilbert, and A.~Miller, editors, {\em
  23rd International Conference on Principles of Distributed Systems, {OPODIS}
  2019, December 17-19, 2019, Neuch{\^{a}}tel, Switzerland}, volume 153 of {\em
  LIPIcs}, pages 31:1--31:17, 2019.

\bibitem{bla}
P.~Kuznetsov and A.~Tonkikh.
\newblock Asynchronous reconfiguration with byzantine failures.
\newblock In H.~Attiya, editor, {\em 34th International Symposium on
  Distributed Computing, {DISC} 2020, October 12-16, 2020, Virtual Conference},
  volume 179 of {\em LIPIcs}, pages 27:1--27:17, 2020.

\bibitem{byz-quorums}
D.~Malkhi and M.~Reiter.
\newblock Byzantine quorum systems.
\newblock {\em Distributed Computing}, 11?(4):203--213, 1998.

\bibitem{MaMiMi02}
T.~Malkin, D.~Micciancio, and S.~Miner.
\newblock Efficient generic forward-secure signatures with an unbounded number
  of time periods.
\newblock In {\em Advances in Cryptology - Eurocrypt 2002}, volume 2332 of {\em
  Lecture Notes in Computer Science}, pages 400--417, Amsterdam, The
  Netherlands, April 28-May 2 2002. IACR, Springer-Verlag.

\bibitem{PoST16}
T.~Moran and I.~Orlov.
\newblock Proofs of space-time and rational proofs of storage.
\newblock {\em {IACR} Cryptology ePrint Archive}, 2016:35, 2016.

\bibitem{bitcoin}
S.~Nakamoto.
\newblock Bitcoin: A peer-to-peer electronic cash system, 2008.

\bibitem{common-coin}
M.~O. Rabin.
\newblock Randomized byzantine generals.
\newblock In {\em 24th Annual Symposium on Foundations of Computer Science
  (sfcs 1983)}, pages 403--409. IEEE, 1983.

\bibitem{crdt}
M.~Shapiro, N.~M. Pregui{\c{c}}a, C.~Baquero, and M.~Zawirski.
\newblock Conflict-free replicated data types.
\newblock In {\em SSS}, pages 386--400, 2011.

\bibitem{abc-tr}
J.~Sliwinski and R.~Wattenhofer.
\newblock {ABC:} asynchronous blockchain without consensus.
\newblock {\em CoRR}, abs/1909.10926, 2019.

\bibitem{SKM17-reconf}
A.~Spiegelman, I.~Keidar, and D.~Malkhi.
\newblock Dynamic reconfiguration: Abstraction and optimal asynchronous
  solution.
\newblock In {\em DISC}, pages 40:1--40:15, 2017.

\bibitem{ethereum_pos}
P.~Wackerow, R.~Cordell, Tentodev, A.~Stockinger, and S.~Richards.
\newblock Ethereum proof-of-stake (pos), 2008.
\newblock \url{
  https://ethereum.org/en/developers/docs/consensus-mechanisms/pos/}, accessed
  2021-02-15.

\bibitem{ethereum}
G.~Wood.
\newblock Ethereum: A secure decentralized generalized transaction ledger.
\newblock White paper, 2015.

\end{thebibliography}

\newpage
\appendix

\section{Pseudocode}
\label{app:code}
% Hello, world! 

\myparagraph{Verifying and auxiliary functions.}
We provide implementations of required verifying functions used in \textit{ConfigLA} and \textit{HistLA} in Figure~\ref{fig:verifyingfuncs}.
The implementation of function $\code{VerifySender}(tx, \sigma_{tx})$ is not presented there, as it simply consists in verifying that $\sigma_{tx}$ is a valid signature for $tx$ under $tx.p$'s public key.

We use the following auxiliary functions in the pseudocode to keep it concise: 
\begin{itemize}
    \item $\code{VerifySenders}(txs)$ -- returns \textit{true}  iff $\forall \langle tx, \sigma_{tx} \rangle \in txs:$ $\code{VerifySender}(tx, \sigma_{tx}) = \true$, otherwise returns $\false$;
    \item $\code{VerifyValues}(vs)$ -- returns \textit{true} if $\forall \langle v, \sigma \rangle \in vs:$ $\code{VerifyInputValue}(v, \sigma) = \true$, otherwise  returns $\false$; 
    \item $\code{ContainsQuorum}(acks, C)$ -- returns \textit{true} if $\exists Q \in \textit{quorums}(C)$ such that $\forall q \in Q \langle q, \dots \rangle \in acks$, otherwise returns $\false$; 
    \item $\code{HighestConf}(h)$ -- returns the highest (w.r.t. $\sqsubseteq$) configuration in given history $h$;
    \item $\code{firsts}()$ -- method that, when invoked on a set of tuples $S$, returns a set of first elements of tuples from set $S$;
    \item $\code{ConflictTransactions}(txs)$ -- for a given set of verifiable transactions $txs$ returns a set $\textit{conflictTxs}$ such that $\textit{conflictTxs} \subseteq txs$ and for any $\langle tx, \sigma_{tx}\rangle \in \textit{conflictTxs}$ there exists $\langle tx', \sigma_{tx'} \rangle \in \textit{conflictTxs}$ such that $tx$ conflicts with $tx'$; 
    \item $\code{CorrectTransactions}(txs)$ -- returns a set of transactions $\textit{correctTxs}$ such that $\textit{correctTxs} \subseteq txs$, $\langle tx,\sigma_{tx} \rangle \in \textit{correctTxs}$ iff $\langle tx, \sigma_{tx} \rangle \in txs$, $\sigma_{tx}$ is a valid certificate for $tx$ and $\nexists \langle  tx', \sigma_{tx'} \rangle \in txs$, such  that $tx$ conflicts with $tx'$.
\end{itemize}

\begin{figure}[ht]
\hrule \vspace{1mm}
 {\small
\begin{tabbing}
 bbbb\=bbbb\=bbbb\=bbbb\=bbbb\=bbbb\=bbbb\=bbbb \=  \kill

\textbf{fun} $\text{VerifyTransactionSet}(txs: \text{Set}\langle \tT \rangle, \sigma_{txs}: \Sigma_{2^{\tT}}): \text{Bool}$\\
\nnll \> $\textbf{let}~\langle \textit{sentTxs}, \textit{acks}_1, \textit{acks}_2, \textit{h},\sigma_h \rangle = \sigma_{txs}$ \\  
\nnll \> $\textbf{let}~C = \text{HighestConf}(h)$ \\
\nnll \> $\textbf{return}~\text{VerifyHistory}(h, \sigma_h) \textbf{ and } txs = \textit{sentTxs}.\text{firsts}() \setminus \textit{acks}_1.\text{getConflictTxs}().\text{firsts}()$  \\ 
\nnll \>\> \textbf{and } $\text{ContainsQuorum}(\textit{acks}_1, C)\textbf{ and } \text{ContainsQuorum}(\textit{acks}_2, C)$ \\ 
\nnll \>\> \textbf{and } $\textit{acks}_1.\text{forAll}(\langle q, \textit{sig}, \textit{conflictTxs}\rangle \Rightarrow$ \\
\nnll \>\>\> $\text{FSVerify}(\langle \textbf{ValidateResp}, \textit{sentTxs}, \textit{conflictTxs} \rangle, q, \textit{sig}, \textit{height}(C)))$ \\
\nnll \>\> \textbf{and } $\textit{acks}_1.\text{forAll}(\langle q, \textit{sig}, \textit{conflictTxs}\rangle \Rightarrow $ \\
\nnll \>\>\> $\textit{conflictTxs}.\text{forAll}(\langle tx, \sigma_{tx} \rangle \Rightarrow tx \text{ conflicts with } tx' \text{ such that } \langle tx', \sigma_{tx'} \rangle \in \textit{conflictTxs}))$ \\
\nnll \>\> \textbf{and } $\textit{acks}_2.\text{forAll}(\langle q, \textit{sig} \rangle \Rightarrow \text{FSVerify}(\langle \textbf{ConfirmResp}, \textit{acks}_1\rangle, q,  \textit{sig}, \textit{height}(C)))$ \\
\\

%% [AT: Ideally, this function should be in Figure~\ref{fig:dbla}, but there is no space for it :(]
\textbf{fun} $\text{ABLA}.\text{VerifyOutputValue}(v: \Lat, \sigma: \Sigma): \text{Bool}$\\
\nnll \> $\textbf{if }\sigma = \perp \textbf{ then return } v = v_{init}$ \\
\nnll \> $\textbf{let}~\langle \textit{values}, \textit{acks}_1, \textit{acks}_2, \textit{h},\sigma_h \rangle = \sigma$ \\  
\nnll \> $\textbf{let}~C = \text{HighestConf}(h)$ \\
\nnll \> $\textbf{return}~\text{VerifyHistory}(h, \sigma_h) \textbf{ and } v = \bigsqcup \textit{values}.\text{firsts}()$  \\ 
\nnll \>\> \textbf{and } $\text{ContainsQuorum}(\textit{acks}_1, C)\textbf{ and } \text{ContainsQuorum}(\textit{acks}_2, C)$ \\ 
\nnll \>\> \textbf{and } $\textit{acks}_1.\text{forAll}(\langle q, \textit{sig} \rangle \Rightarrow \text{FSVerify}(\langle \textbf{ProposeResp}, \textit{values} \rangle, q, \textit{sig}, \textit{height}(C)))$ \\
\nnll \>\> \textbf{and } $\textit{acks}_2.\text{forAll}(\langle q, \textit{sig} \rangle \Rightarrow \text{FSVerify}(\langle \textbf{ConfirmResp}, \textit{acks}_1\rangle, q, \textit{sig}, \textit{height}(C)))$ \\
\\

\textbf{fun }$\text{VerifyConfiguration}(v, \sigma) = \textit{ConfigLA}.\text{VerifyOutputValue}(v, \sigma)$ \\
\\

\textbf{fun }$\text{VerifyHistory}(v, \sigma) = \textit{HistLA}.\text{VerifyOutputValue}(v, \sigma)$

\end{tabbing}
}
%\vspace{-1.5mm}
 \hrule
 \caption{Verifying functions: code for process $p$.}
 \label{fig:verifyingfuncs}
\end{figure}

\begin{figure}[H]
\input{LA-PoS/CodeTxVal}
\end{figure}

\begin{figure}[H]
\input{LA-PoS/CodeABLA}
\end{figure}

% \begin{figure}[htbp]
% \begin{minipage}{.5\textwidth}
% text1
% \end{minipage}% This must go next to `\end{minipage}`
% \begin{minipage}{.5\textwidth}
% text2
% \end{minipage}
% \end{figure}

% \begin{figure}[htbp]
% \begin{minipage}{.5\textwidth}
% \input{LA-PoS/CodeTxVal}
% \end{minipage}% This must go next to `\end{minipage}`
% \begin{minipage}{.5\textwidth}
% \input{LA-PoS/CodeABLA}
% \end{minipage}
% \end{figure}

\section{Proof of Correctness}
\label{app:proof}

\myparagraph{Consistency}. We start with the proof of the consistency property of \Name. 
We show that all verifiable histories produced by the algorithm are related by containment and all configurations within one such history are comparable.
Note that in our algorithm, verifiable histories are not only the ``results'', but they  also act as ``the state'' on which the future result might depend.
(Recall that if a history $h$ is verifiable, then some correct process triggered an event $\text{NewHistory}(h)$.)

Intuitively, we proceed by induction on the length of verifiable histories.
We prove correctness of \textit{TxVal}, \textit{ConfigLA} and \textit{HistLA} under the assumption that all verifiable histories are related by containment and all configurations within one verifiable history are comparable,
as though these verifiable histories are obtained from some trusted source. 
Then, after correctness of the used objects is proved, we show that all verifiable histories that the protocol produces are related by containment and all configurations within one such history are comparable. 

The assumption mentioned above is proved to hold by induction.
We assume that the verifiable histories, that are used as the ``state'' of algorithm satisfy the mentioned properties. 
Under this assumption we then prove that the obtained results also satisfy the required properties. 
%
%We can also formulate it more formally.
The induction base is trivial: all histories within a set of verifiable histories that consists only of initial history $h_{init} = \{ C_{init} \}$ are related by containment and all configurations within one such history are comparable.
The induction hypothesis states that we can fix any finite prefix of a distributed execution, and assume that all verifiable histories $hs$ (more specifically, the ones for which event $\text{NewHistory}(h)$ was triggered) are related by containment and contain only comparable configurations.
Now we need to prove the induction step. In order to do this we prove that the ``next'' produced verifiable histories by the algorithm are also related by containment between themselves and any $h \in hs$ and contain only comparable configurations.

Recall that a configuration $C$ is a \emph{candidate} configuration iff a correct process triggers an event $\text{NewHistory}(h)$, such that $C \in h$. 
From the definition and our implementation of \Name it follows that if a configuration is a \emph{candidate} configuration, then it appears in some verifiable history.

\begin{lemma}
\label{lm:candidateconfigurations}
All candidate configurations are comparable with $\sqsubseteq$.
\end{lemma}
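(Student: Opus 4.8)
The plan is to proceed by induction on the length of verifiable histories, as outlined in the surrounding text, and to reduce the comparability of candidate configurations to the ABLA-Comparability property of \textit{HistLA} together with the ABLA-Comparability and ABLA-Validity properties of \textit{ConfigLA}. First I would recall that a candidate configuration $C$ is, by definition, a configuration appearing in some history $h$ for which a correct process triggered $\text{NewHistory}(h)$; by inspection of the code (the \textbf{upon WRB-deliver} $\langle \textbf{NewHistory}, h, \sigma\rangle$ handler), this only happens after $\code{VerifyHistory}(h,\sigma)$ returns \true, i.e.\ $h$ is a \emph{verifiable} history (a verifiable output value of \textit{HistLA}). So it suffices to show (i) any two verifiable histories are related by containment, and (ii) within a single verifiable history all configurations are comparable; given $C_1\in h_1$, $C_2\in h_2$ verifiable, WLOG $h_1\subseteq h_2$, hence $C_1,C_2\in h_2$, and then (ii) yields $C_1\sqsubseteq C_2$ or $C_2\sqsubseteq C_1$.

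Next I would address (i) and (ii). Claim (i) is exactly ABLA-Comparability for \textit{HistLA}: all verifiable output values are comparable, and $\sqsubseteq\,=\,\subseteq$ on $2^{2^{\tT}}$. For (ii), I would invoke ABLA-Validity of \textit{HistLA}: every verifiable output value (history) $h$ is a join — i.e.\ a union — of some set of verifiable \emph{input} values of \textit{HistLA}. By the parameterization, a verifiable input value of \textit{HistLA} is a singleton $\{C\}$ with $\code{VerifyConfiguration}(C,\sigma_C)=\true$, i.e.\ $C$ is a verifiable output value of \textit{ConfigLA}. Hence every configuration in a verifiable history is a verifiable configuration, and by ABLA-Comparability of \textit{ConfigLA} all verifiable configurations are pairwise comparable; in particular those inside $h$ are, giving (ii). Combining (i) and (ii) as above proves the lemma.

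The subtlety — and the place where the induction is genuinely needed — is that the correctness properties of \textit{HistLA} and \textit{ConfigLA} are themselves established only under the inductive hypothesis that the verifiable histories already in circulation are mutually comparable and internally comparable (the histories function both as outputs and as part of the ``state'' via the $\sigma_h$ component threaded through $\code{VerifyOutputValue}$, $\code{UpdateFSKey}$, and the quorum checks). So the argument is not simply ``apply the ABLA axioms'': one must run the induction of the consistency proof, show that under the hypothesis the next verifiable histories produced satisfy ABLA-Comparability/Validity for both objects, and only then conclude. The main obstacle I expect is making this circularity rigorous — precisely delimiting ``the next produced verifiable history'' in an asynchronous execution, and checking that no verifiable history can be forged that escapes the inductive closure (this is where the forward-secure signature timestamps $\textit{height}(C)$ and the ``second phase'' quorum of each object, which defeats the slow-reader attack, do the real work). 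Once that induction machinery is in place, Lemma~\ref{lm:candidateconfigurations} itself is the short corollary sketched above.
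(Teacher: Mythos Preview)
Your proposal is correct and follows essentially the same route as the paper. In the paper, this lemma sits immediately after the induction hypothesis is stated and is dispatched in one line: ``All verifiable histories are required to be related by containment and all configurations within one history are required to be comparable as well.'' That is precisely your claims (i) and (ii), taken as the hypothesis rather than re-derived. You go further and explain how (i) and (ii) would follow from the ABLA properties of \textit{HistLA} and \textit{ConfigLA}, and then correctly note that those properties in turn depend on the same hypothesis --- so you end up at the same inductive structure the paper sets up, just narrated in more detail. The only stylistic difference is the order of dependencies: the paper treats Lemma~\ref{lm:candidateconfigurations} as a direct unpacking of the hypothesis and proves the ABLA properties afterwards to close the induction step, whereas you route through the ABLA properties first and then fold back to the induction.
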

\begin{proof}
All verifiable histories are required to be related by containment and all configurations within one history are required to be comparable as well.
\end{proof}

We say that a configuration is \emph{pivotal} if it is the last configuration in some verifiable history. 
Non-pivotal candidate configurations are called \emph{tentative}. Intuitively, the next lemma
states that in the rest of the proofs we can almost always consider only pivotal configurations.

\begin{lemma}
\label{lm:tentativeconfigurations} 
Tentative configurations are never used, i.e.: 
\begin{itemize}
    \item No correct process will ever make a request to a tentative configuration;
    \item Tentative configurations cannot be installed;
    \item A correct process will never invoke \text{FSVerify} with timestamp $\textit{height}(C)$ for any tentative configuration $C$;
    \item No correct process will ever make a request to a tentative configuration;
    \item A correct process will never broadcast any message via the weak reliable broadcast primitive in a tentative configuration.
\end{itemize}
\end{lemma}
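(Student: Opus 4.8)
The plan is to prove Lemma~\ref{lm:tentativeconfigurations} by leveraging the structural property already established in Lemma~\ref{lm:candidateconfigurations}: all candidate configurations are comparable. A tentative configuration $C$ is a candidate configuration that is \emph{not} the last (highest) configuration of any verifiable history in which it appears. So whenever some correct process holds a verifiable history $h$ containing $C$, there is a strictly higher configuration $C' \in h$ with $C \sqsubset C'$, and $C'$ is a candidate configuration as well. The key observation is that every action a correct process might take ``in configuration $C$'' — sending a $\mathbf{ValidateReq}/\mathbf{ProposeReq}/\mathbf{ConfirmReq}$, installing $C$, signing with timestamp $\textit{height}(C)$, or WRB/WURB-broadcasting in $C$ — is guarded in the pseudocode by a test of the form ``$C = \text{HighestConf}(\textit{history})$'' (or is only ever triggered with the argument $\text{HighestConf}(\textit{history})$). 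I would therefore go through each bullet and trace it back to such a guard.

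Concretely, I would proceed as follows. First I would observe that a correct process only ever \emph{initiates} a request (via \code{Request}/\code{Refine}, lines in Figures~\ref{fig:transfervalidation}–\ref{fig:dbla}) to the configuration $C = \text{HighestConf}(\textit{history})$, and a correct process only \emph{responds} to a request for $C$ (signing with $\textit{height}(C)$, updating $\textit{acks}$) when the guard ``$C = \text{HighestConf}(\textit{history})$'' holds at response time; similarly the State Transfer Protocol only sends $\mathbf{UpdateRead}$ messages to configurations $C$ with $C_{cur}\sqsubseteq C\sqsubset C_{next}=\text{HighestConf}(\textit{history})$ and only WURB-broadcasts $\mathbf{UpdateComplete}$ for $C_{next}=\text{HighestConf}(\textit{history})$. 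So the real claim reduces to: \emph{a tentative configuration is never equal to $\text{HighestConf}(\textit{history})$ of any correct process at any time}, plus the fact that $\text{FSVerify}$ on a forward-secure signature with timestamp $\textit{height}(C)$ can only succeed if some correct process actually signed with that timestamp (since the adversary is bounded and cannot forge signatures), which it would not have done for a tentative $C$. For the $\text{HighestConf}$ claim: the variable $\textit{history}$ of a correct process is always a verifiable history (it is only assigned from $h$ after $\text{VerifyHistory}(h,\sigma)$ succeeds and $\text{history}\subset h$, in the $\mathbf{WRB\text{-}deliver}\langle\mathbf{NewHistory}\rangle$ handler), hence $\text{HighestConf}(\textit{history})$ is by definition a pivotal configuration — it is the last configuration of the verifiable history $\textit{history}$ — and therefore never tentative.

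There is one subtlety to handle carefully, so I would isolate it as the main obstacle: the definition of ``pivotal/tentative'' quantifies over \emph{all} verifiable histories, and a configuration $C$ could conceivably be the last configuration of one verifiable history $h_1$ while appearing in the interior of another verifiable history $h_2 \supsetneq h_1$. By the literal reading of the definition (``$C$ is \emph{pivotal} if it is the last configuration in \emph{some} verifiable history''), such a $C$ is pivotal, so there is no conflict — but I must make sure the argument above uses exactly this: I only ever need that $\text{HighestConf}(\textit{history})$ is the last configuration of \emph{the particular} verifiable history $\textit{history}$, which makes it pivotal, and that suffices because all the guards compare against $\text{HighestConf}$ of a verifiable history the correct process itself holds. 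The only remaining gap is the forward-secure-signature bullet, where I would argue: by the unforgeability of $\code{FSSign}$ under a bounded adversary and the channel authentication assumption, a successful $\code{FSVerify}(\cdot, q, \cdot, \textit{height}(C))$ invoked by a correct process implies $q$ at some point executed $\code{FSSign}(\cdot, \textit{height}(C))$; if $q$ is correct at that point it only does so inside a ``$C = \text{HighestConf}(\textit{history})$'' guard (so $C$ is pivotal), and if $q$ is Byzantine — here I would invoke the forward-security / key-update discipline (\code{UpdateFSKey}(\textit{height}(C_h)) on every new history) together with the Configuration availability assumption, exactly as in the ``slow reader'' discussion — to conclude a signature with timestamp $\textit{height}(C)$ for a tentative $C$ cannot be produced, since moving to a higher pivotal configuration forces honest holders of $>2/3$ stake to upgrade their keys past $\textit{height}(C)$. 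Wrapping these cases together yields all five bullets.
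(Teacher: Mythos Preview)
Your core observation is exactly the paper's: every operation a correct process performs ``in configuration $C$'' is either invoked with $C=\text{HighestConf}(\textit{history})$ or guarded by that equality, and $\text{HighestConf}$ of a verifiable history is pivotal by definition. The paper's entire proof is the single sentence ``Follows directly from the algorithm. Processes only operate on configurations that we obtained by invoking the function $\text{HighestConf}(h)$.'' So your first two paragraphs already reproduce the intended argument, with more care than the paper itself supplies.

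Where you go astray is the $\code{FSVerify}$ bullet. You read it as a claim about when $\code{FSVerify}$ can \emph{succeed}, and then launch into an unforgeability/key-update argument. But the bullet only asserts that a correct process never \emph{invokes} $\code{FSVerify}$ with timestamp $\textit{height}(C)$ for tentative $C$. That follows from precisely the same observation as the other bullets: inspecting the pseudocode, every call $\code{FSVerify}(\ldots,\textit{height}(C))$ made by a correct process takes $C=\text{HighestConf}(h)$ for some verifiable history $h$ (either the process's own $\textit{history}$ in the response handlers, or the history embedded in a certificate inside $\code{VerifyTransactionSet}$/$\code{VerifyOutputValue}$, which is first checked with $\code{VerifyHistory}$). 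No forward-security reasoning is needed. Your extra argument is not only unnecessary but also shaky: you invoke the key-update discipline (``moving to a higher pivotal configuration forces honest holders of $>2/3$ stake to upgrade their keys past $\textit{height}(C)$''), but that mechanism, as formalized later in Lemma~\ref{lm:keyupdate}, is stated for \emph{pivotal} configurations that become superseded by an \emph{installed} one, and in fact relies on the present lemma; using it here for tentative $C$ is circular.
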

\begin{proof}
Follows directly from the algorithm. Processes only operate on configurations that we obtained by invoking the function $\text{HighestConf}(h)$.
\end{proof}

\begin{lemma}
\label{lm:pivotalconfigurations}
If $C \sqsubseteq \text{HighestConf}(h)$, where $C$ is a pivotal configuration and $h$ is the local history of a correct process, then $C \in h$.
\end{lemma}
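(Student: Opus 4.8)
The goal is to show that any pivotal configuration $C$ that is dominated by the highest configuration of a correct process's local history $h$ must actually be a member of $h$. The plan is to exploit the structure of verifiable histories together with the fact that $C$, being pivotal, is the last configuration of some verifiable history $h'$. First I would unfold the relevant definitions: $C$ pivotal means there is a verifiable history $h'$ with $C = \text{HighestConf}(h')$; and $h$ being the local history of a correct process means $h$ is verifiable (since a correct process only sets $\textit{history}$ to a verifiable value, as seen in Figure~\ref{fig:posmain}), and $\text{HighestConf}(h)$ is a configuration in $h$.

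The key step is to invoke the comparability guarantees established earlier in the consistency proof: all verifiable histories are related by containment (ABLA-Comparability applied to \textit{HistLA} outputs), and all configurations within a verifiable history are comparable (this is what the \textit{ConfigLA}/\textit{HistLA} pipeline enforces). So $h$ and $h'$ are $\subseteq$-comparable. I would split into two cases. If $h' \subseteq h$, then $C = \text{HighestConf}(h') \in h' \subseteq h$, and we are done immediately. If instead $h \subseteq h'$, I use the hypothesis $C \sqsubseteq \text{HighestConf}(h)$ together with the fact that $C$ is the \emph{highest} (w.r.t. $\sqsubseteq$) configuration of $h'$: since $\text{HighestConf}(h) \in h \subseteq h'$, and $C = \text{HighestConf}(h')$ dominates every element of $h'$, we get $\text{HighestConf}(h) \sqsubseteq C$. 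Combined with $C \sqsubseteq \text{HighestConf}(h)$ and antisymmetry of $\sqsubseteq$, this yields $C = \text{HighestConf}(h) \in h$, as desired.

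The main obstacle I anticipate is making the "correct process's local history is verifiable" step airtight: one must check the invariant that every value ever assigned to the \textit{history} variable of a correct process carries a valid certificate — i.e., $\code{VerifyHistory}(\textit{history}, \sigma_{hist}) = \true$ at all times. This follows by a straightforward induction on the execution: the initial value $\{C_{init}\}$ is verifiable by convention (the $\sigma = \bot$ case of $\code{VerifyOutputValue}$), and the only other assignment, in the \textbf{upon WRB-deliver}$\langle\textbf{NewHistory},h,\sigma\rangle$ handler, is guarded by the explicit test $\code{VerifyHistory}(h,\sigma)$. A secondary subtlety is confirming that $\text{HighestConf}$ is well-defined here — that $h$ really does have a $\sqsubseteq$-greatest element — which again rests on the fact that a verifiable history contains only pairwise-comparable configurations, so being finite it has a maximum. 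Once these invariants are in hand, the case analysis above closes the argument with only antisymmetry and transitivity of $\subseteq$ and $\sqsubseteq$.
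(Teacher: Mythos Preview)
Your proposal is correct and follows essentially the same approach as the paper: both rely on the definition of pivotal (so $C=\text{HighestConf}(h')$ for some verifiable $h'$) and on the fact that verifiable histories are related by containment, with your version simply spelling out the two-case analysis that the paper leaves implicit. One small refinement: at this point in the argument the containment of verifiable histories is the \emph{induction hypothesis} rather than an already-proven consequence of ABLA-Comparability, so cite it as such to avoid any appearance of circularity.
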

\begin{proof}
Follows from the definition of a pivotal configuration and the requirement that all verifiable histories are related by containment.
\end{proof}

\begin{theorem}
\label{thm:dynamicvalidity}
Only candidate configurations can be installed.
\end{theorem}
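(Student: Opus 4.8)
The plan is to trace backwards through the installation mechanism and show that any configuration a correct process installs must have appeared in some verifiable history, hence is a candidate configuration by definition. First I would observe that, by inspection of the State Transfer Protocol (Figure~\ref{fig:statetransfer}), a correct process triggers $\text{InstalledConfig}(C)$ only inside the handler for $\textbf{WURB-deliver}\langle\textbf{UpdateComplete},C\rangle$, and only after the guard $\textbf{wait for } C \in \textit{history}$ has been passed (line preceding~\ref{line:statetransfer:updateinstalledconfig}). Thus at the moment of installation, $C$ belongs to the local variable $\textit{history}$ of that correct process. Now $\textit{history}$ is only ever assigned (other than its initial value $\{C_{init}\}$) inside the $\textbf{WRB-deliver}\langle\textbf{NewHistory},h,\sigma\rangle$ handler in Figure~\ref{fig:posmain}, and only after $\code{VerifyHistory}(h,\sigma)$ returns $\true$; moreover that same handler triggers the event $\text{NewHistory}(h)$ for exactly this $h$ before the assignment. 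Hence whatever value $\textit{history}$ holds at any correct process is a verifiable history for which some correct process triggered $\text{NewHistory}$.

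Combining these two observations: if a correct process installs $C$, then $C \in \textit{history}$ for some correct process at that time, and that $\textit{history}$ is a history $h$ with $\code{VerifyHistory}(h,\sigma_{hist})=\true$ and $\text{NewHistory}(h)$ triggered by a correct process. By the definition of a candidate configuration ($C$ is a candidate iff some correct process triggered $\text{NewHistory}(h)$ with $C \in h$), we conclude that $C$ is a candidate configuration. I would also note that the monotone update of $\textit{history}$ (it only ever grows, guarded by $\textit{history}\subset h$) together with Lemma~\ref{lm:tentativeconfigurations} lets us additionally argue, if needed for later results, that $C$ is in fact \emph{pivotal}: $C_{cur}$ is advanced only to $\text{HighestConf}(\textit{history})$, and $\textbf{UpdateComplete}$ messages are broadcast (in the first handler of Figure~\ref{fig:statetransfer}) only for $C_{next}=\text{HighestConf}(\textit{history})$, which is by definition the last configuration of a verifiable history.

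The one point that needs a little care — and which I expect to be the main (minor) obstacle — is the provenance of the $\textbf{UpdateComplete}$ message itself: the $\textbf{WURB-deliver}$ could in principle carry a configuration $C$ forged by Byzantine processes. This is handled precisely by the $\textbf{wait for } C \in \textit{history}$ guard: the installing process never acts on such a message until $C$ independently shows up in its own verified local history, so no additional argument about the authenticity of the $\textbf{UpdateComplete}$ payload is required. The rest is a direct reading of the pseudocode, so the proof is short.
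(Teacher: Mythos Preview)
Your proposal is correct and follows essentially the same approach as the paper: the paper's proof is a one-line appeal to the algorithm (``a correct process will not install a configuration until it is in its local history''), and you have simply unpacked that line into an explicit trace through the $\textbf{wait for } C\in\textit{history}$ guard and the provenance of $\textit{history}$ via the $\textbf{NewHistory}$ handler. The extra remarks about pivotal configurations and the authenticity of $\textbf{UpdateComplete}$ are not needed for this theorem but are accurate observations.
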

\begin{proof}
Follows directly from the algorithm. A correct process will not install a configuration until it is in its local history.
\end{proof}

In the algorithm it is possible for a configuration to be installed after it became superseded. 

We call a configuration  \emph{properly installed} if it was installed
while being active (i.e., not superseded). We use this definition in the proofs of the next few lemmas.

\begin{lemma}
\label{lm:properlyinstalled}
The lowest properly installed configuration higher than some configuration $C$ is the
first (in global time order) installed configuration higher than $C$.
\end{lemma}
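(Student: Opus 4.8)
The plan is to work with the set $\mathcal{I}_C$ of all configurations $D$ that are installed and satisfy $C \sqsubset D$ (reading ``higher than $C$'' as strictly higher). By Theorem~\ref{thm:dynamicvalidity} every installed configuration is a candidate configuration, so by Lemma~\ref{lm:candidateconfigurations} the elements of $\mathcal{I}_C$ are pairwise comparable; since heights are positive integers, $\mathcal{I}_C$ then has a $\sqsubseteq$-least element whenever it is non-empty. Using the global clock to totally order the $\text{InstalledConfig}$ events, let $D^\ast$ denote the first configuration in $\mathcal{I}_C$ to be installed; this is exactly ``the first installed configuration higher than $C$'' from the statement. I would then prove the lemma via two claims: (i) $D^\ast$ is properly installed, and (ii) no properly installed configuration in $\mathcal{I}_C$ is $\sqsubset D^\ast$. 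Together these say that $D^\ast$ is the lowest properly installed configuration higher than $C$, which is the claim.

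For (i), suppose $D^\ast$ is not properly installed. Then at the moment $D^\ast$ is installed it is already superseded, so by definition some correct process has installed, at an earlier time, a configuration $D''$ with $D^\ast \sqsubset D''$. Since $C \sqsubset D^\ast \sqsubset D''$, we have $D'' \in \mathcal{I}_C$, and $D''$ was installed before $D^\ast$, contradicting the choice of $D^\ast$ as the first-installed element of $\mathcal{I}_C$. Hence $D^\ast$ is properly installed; in particular $\mathcal{I}_C$ contains a properly installed configuration whenever it is non-empty, so ``the lowest properly installed configuration higher than $C$'' is well defined and is $\sqsubseteq D^\ast$.

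For (ii), let $D'$ be a properly installed configuration with $D' \in \mathcal{I}_C$ and $D' \sqsubset D^\ast$, aiming for a contradiction. Since $D' \neq D^\ast$ and $D^\ast$ is the first element of $\mathcal{I}_C$ to be installed, $D^\ast$ is installed before $D'$. But $D' \sqsubset D^\ast$, so at the moment $D'$ is installed the higher configuration $D^\ast$ has already been installed by a correct process; thus $D'$ is superseded at its own installation time, contradicting that $D'$ is properly installed. Therefore no properly installed member of $\mathcal{I}_C$ lies strictly below $D^\ast$, and combined with (i) this shows that $D^\ast$ --- the first installed configuration higher than $C$ --- is the lowest properly installed configuration higher than $C$.

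The argument is essentially bookkeeping with the definitions of \emph{installed}, \emph{superseded}, and \emph{properly installed}; I expect no real obstacle. The two points I would state explicitly are that installed configurations form a chain (so that ``lowest'' is meaningful and so that the case ``$D'$ and $D^\ast$ incomparable'' does not arise) --- this is where Lemma~\ref{lm:candidateconfigurations} and Theorem~\ref{thm:dynamicvalidity} enter --- and that the global clock totally orders the installation events, so that ``the first installed configuration higher than $C$'' is unambiguous. This small structural fact will then be reused when reasoning about the sequence of properly installed configurations.
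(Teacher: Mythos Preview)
Your proof is correct and follows essentially the same approach as the paper's, which is a three-line argument directly from the definitions of \emph{superseded} and \emph{properly installed}. The only cosmetic difference is direction: the paper starts from the lowest properly installed configuration $N$ and argues it must be first installed, whereas you start from the first installed configuration $D^\ast$ and argue it must be the lowest properly installed; your version is more explicit about why the configurations form a chain, which the paper leaves implicit.
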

\begin{proof}
Let $N$ be the lowest properly installed configuration higher than $C$. If some configuration
higher than $N$ was installed earlier, then $N$ would not be properly installed. If some
configuration between $C$ and $N$ were installed earlier, then N would not be the lowest.
\end{proof}

\begin{lemma}
\label{lm:keyupdate}
If a pivotal configuration $C$ is superseded, then there is no quorum
of processes in that configuration capable of signing messages with timestamp $\textit{height}(C)$. More formally: $\nexists Q \in \textit{quorums}(C): \forall q \in Q: \textit{st}_q \leq \textit{height}(C)$.
\end{lemma}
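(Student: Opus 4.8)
The plan is to argue by contradiction, exploiting the forward-secure key update that every correct process performs upon learning a new history. Suppose $C$ is a pivotal configuration that has been superseded, yet there exists a quorum $Q \in \textit{quorums}(C)$ such that every $q \in Q$ has $\textit{st}_q \le \textit{height}(C)$. Since $C$ is superseded, by definition some correct process installed a configuration $C'$ with $C \sqsubset C'$. By Theorem~\ref{thm:dynamicvalidity} and the algorithm, $C'$ is a candidate configuration, so it appears in some verifiable history $h'$, and the installing process had $C' \in \textit{history}$ and $C'$ no higher than $\text{HighestConf}(\textit{history})$ at install time; hence there is a verifiable history $h''$ with $\text{HighestConf}(h'') \sqsupseteq C'$, so $\textit{height}(\text{HighestConf}(h'')) \ge \textit{height}(C') > \textit{height}(C)$.

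The core step is then to show that \emph{no} quorum of $C$ can still be ``downgraded'' to timestamp $\textit{height}(C)$. First I would invoke the Configuration availability assumption (or the weaker fact that $C$ is pivotal and hence a genuine configuration in a verifiable history) together with the $\textit{quorums}$ definition: $Q$ has total stake $> 2/3 M$, and any two quorums of $C$ intersect in a process with $> 1/3 M$ stake; in particular $Q$ must contain a process $q$ that was correct at the time the superseding configuration $C'$ was installed —- indeed, by Lemma~\ref{lm:candidateconfigurations} all candidate configurations are comparable, and combining the intersection property of $\textit{quorums}(C)$ with the availability of the active candidate configuration that witnessed the supersession, $Q \cap \textit{correct}(\cdot)$ is nonempty at that moment. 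Such a correct process $q$, upon the $\textbf{WRB-deliver}\langle\textbf{NewHistory},\dots\rangle$ that made $C'$ a candidate (which it must deliver, since the installer is correct and triggered $\text{NewHistory}$), executes $\text{HighestConf}(\textit{history})$ and calls $\text{UpdateFSKey}(\textit{height}(C_h))$ with $\textit{height}(C_h) \ge \textit{height}(C') > \textit{height}(C)$. After this, $\textit{st}_q > \textit{height}(C)$, and since key downgrading is computationally infeasible in the forward-secure scheme, $\textit{st}_q$ can never return to $\textit{height}(C)$ or below. This contradicts $q \in Q$ with $\textit{st}_q \le \textit{height}(C)$.

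The main obstacle I anticipate is the timing/quantifier bookkeeping: the statement ``$\nexists Q \in \textit{quorums}(C): \forall q \in Q: \textit{st}_q \le \textit{height}(C)$'' is implicitly about all times after $C$ is superseded, so I must be careful that the argument pins down a \emph{correct} process inside every candidate quorum of $C$ that has \emph{already} performed the update. The cleanest way is: because $C$ is superseded, fix the first installed configuration $C'$ higher than $C$ (Lemma~\ref{lm:properlyinstalled} gives that this one is properly installed, hence active when installed) and the correct process $r$ that installed it; $r$ delivered a history containing $C'$ and updated its key past $\textit{height}(C)$. Then for an arbitrary $Q \in \textit{quorums}(C)$, use the quorum-intersection property of $C$ against the configuration-availability of $C'$ (active candidate) to force $Q$ to contain a process correct at the relevant time that has likewise delivered that history —- or more simply, observe that $r$ itself, having $> 2/3 M$ supporting stake in $C'$ implies $r$'s presence in the intersection argument is not even needed: it suffices that any quorum of $C$ weighs $> 2/3 M$ while the correct members of $C$ who have updated their keys also weigh $> 2/3 M$ in $C$ (this is where the one-third adversarial bound and the availability condition enter), so the two sets overlap in a correct, already-updated process. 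I would phrase the final contradiction through this overlap. The forward-security (no-downgrade) property is the qualitative engine; everything else is making the ``enough correct processes have updated'' quantitatively precise via the $2/3 M$ thresholds.
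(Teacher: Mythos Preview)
Your argument has a genuine gap: you never establish the \emph{causal} link that forces a quorum of $C$ to have updated their keys \emph{by the moment} $C$ is superseded. You try to get this from WRB delivery of $\langle\textbf{NewHistory},\dots\rangle$ and from the configuration-availability bound, but neither works. WRB is asynchronous: the fact that the installing process delivered the history does not mean any other correct member of $C$ has delivered it yet, so you cannot conclude that such a process has already called $\code{UpdateFSKey}$. And configuration availability of $C$ only tells you that the \emph{correct} members of $C$ hold more than $2/3\,M$ while $C$ is an active candidate; it says nothing about which of them have already advanced their keys. Your sentence ``the correct members of $C$ who have updated their keys also weigh $> 2/3 M$ in $C$'' is exactly the claim that needs proof, and your proposal does not supply it.

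The paper closes this gap through the State Transfer Protocol (Figure~\ref{fig:statetransfer}), which you do not invoke at all. The argument is: let $N$ be the lowest properly installed configuration above $C$ (Lemma~\ref{lm:properlyinstalled}); for $N$ to be installed, a quorum $Q_N\in\textit{quorums}(N)$ must have broadcast $\langle\textbf{UpdateComplete},N\rangle$, and any correct $q_N\in Q_N$ must, during state transfer, have passed through $C$ (Lemma~\ref{lm:pivotalconfigurations}) and collected $\textbf{UpdateReadResp}$ from a quorum $Q_C\in\textit{quorums}(C)$. A correct process replies to $\langle\textbf{UpdateRead},sn,C\rangle$ only after $C\sqsubset\text{HighestConf}(\textit{history})$, hence only after it has already executed $\code{UpdateFSKey}$ past $\textit{height}(C)$. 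Thus every correct process in $Q_C$ updated its key \emph{before} $N$ was installed, i.e., before $C$ was superseded. Now any $Q\in\textit{quorums}(C)$ intersects $Q_C$ in a process that was correct at that time (since $C$ was still an active candidate), and forward security prevents that process from ever signing at $\textit{height}(C)$ again, even if it later turns Byzantine. This synchronous causal chain through $\textbf{UpdateRead}/\textbf{UpdateReadResp}$ is the missing idea in your plan.
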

\begin{proof}
Let $N$ be the lowest properly installed configuration higher than $C$. Let us consider
the moment when $N$ was installed. By the algorithm, all correct processes in some quorum
$Q_N \in \textit{quorums}(N)$ had to broadcast \textbf{UpdateComplete} messages before $N$ was installed. Since $N$ was not yet superseded at that moment, there was at least
one correct process $q_N \in Q_N$.
By Lemma~\ref{lm:pivotalconfigurations}, C was in $r_N$’s local history whenever it performed state transfer to any
configuration higher than $C$. By the protocol, a correct process only advances its $C_{cur}$
variable after executing the state transfer protocol or right before
installing a configuration. Since no configurations between $C$ and $N$
were yet installed, $q_N$ had to pass through C in its state transfer protocol and to receive
\textbf{UpdateReadResp} messages from some quorum $Q_C \in quorums(C)$.
Recall that correct processes update their private keys whenever they learn about a
higher configuration and that they will only reply to message
$\langle \textbf{UpdateRead}, sn, C\rangle$ once $C$ is not the highest configuration in their local histories. This means that all correct processes in $Q_C$ actually had to update their
private keys before $N$ was installed, and, hence, before C was superseded. By the quorum
intersection property, this means that in each quorum in C at least one process updated its
private key to a timestamp higher than $\textit{height}(C)$ and will not be capable of signing messages
with timestamp $\textit{height}(C)$ even if it becomes Byzantine.
\end{proof}

Let us now proceed with the correctness proof of \textit{ConfigLA} and \textit{HistLA}.
Their implementations, in fact, only differ in parameters, what means the proof can be generalized.

In the next part of the proof we consider the following notation for verifiable outputs: \\if $\sigma = \langle vs, \textit{proposenAcks}, \textit{confirmAcks}, h, \sigma_{h} \rangle$ is a certificate for output of \textit{ConfigLA} or \textit{HistLA}, we write $\sigma.\textit{vs}$, $\sigma.\textit{proposeAcks}$, $\sigma.\textit{confirmAcks}$, $\sigma.h$ and $\sigma.\sigma_{h}$ to denote the access to the corresponding parts of certificate $\sigma$.

\begin{lemma}
\label{lm:statetransfercorrectness}
If $\sigma$ is a valid certificate for value $v$ returned from an ABLA object (\textit{ConfigLA} or \textit{HistLA}), then for each active installed configuration $D$ such that $\text{HighestConf}(\sigma.h) \sqsubset D$, there is a quorum $Q_D \in \textit{quorums}(D)$ such that for each correct process in $Q_D$: $\sigma.vs.\text{firsts}()~\subseteq~values.\text{firsts}()$.
\end{lemma}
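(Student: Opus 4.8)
The plan is to argue by induction on the height of the active installed configuration $D$ above $H_\sigma := \text{HighestConf}(\sigma.h)$, using the state transfer protocol of Figure~\ref{fig:statetransfer} as the engine that carries the set $\sigma.vs$ upward through the chain of installed configurations. First I would fix a valid certificate $\sigma$ for an output value $v$ of the ABLA object. By inspection of $\code{ABLA}.\code{VerifyOutputValue}$, the certificate $\sigma$ contains $acks_1$ (the \textbf{ProposeResp} acknowledgements) forming a quorum in $C_0 := \text{HighestConf}(\sigma.h)$, each signed with timestamp $\textit{height}(C_0)$. By Lemma~\ref{lm:keyupdate} (applied contrapositively), since this quorum of valid signatures with timestamp $\textit{height}(C_0)$ exists, $C_0$ cannot be superseded at the time the certificate is produced — or, more precisely, the signatures were produced while $C_0$ was still not superseded, so the correct processes among that quorum had $values.\text{firsts}() \supseteq \sigma.vs.\text{firsts}()$ at the moment they signed (this follows from the \textbf{ProposeReq}/\textbf{ProposeResp} handlers, which only sign a set $vs$ after merging it into the local $values$). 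This establishes the base case: the quorum $Q_{C_0}$ witnessing the certificate already has the required inclusion.

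Next I would handle the inductive step. Suppose $D$ is an active installed configuration with $C_0 \sqsubset D$, and assume the claim holds for all active installed configurations strictly below $D$. Let $N$ be the lowest properly installed configuration strictly above $C_0$ (Lemma~\ref{lm:properlyinstalled} tells us this is the first one installed in global time, and Theorem~\ref{thm:dynamicvalidity} and Lemma~\ref{lm:candidateconfigurations} tell us everything is comparable). By the \textbf{UpdateComplete} handler, before $N$ is installed some quorum $Q_N \in \textit{quorums}(N)$ of correct processes broadcast \textbf{UpdateComplete}$\langle N \rangle$, and since $N$ is not yet superseded at that moment there is at least one correct $q_N \in Q_N$. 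As in the proof of Lemma~\ref{lm:keyupdate}, $q_N$ passed through $C_0$ in its state transfer loop (line~\ref{line:statetrans:wait}), hence collected \textbf{UpdateReadResp} messages from a quorum $Q_{C_0} \in \textit{quorums}(C_0)$. The correct processes in the intersection of $Q_{C_0}$ with the quorum witnessing $\sigma$ forwarded their $\textit{ConfigLA.values}$ (resp.\ $\textit{HistLA.values}$) containing $\sigma.vs$, so after the \textbf{UpdateReadResp} handler runs, $q_N$'s local $values$ includes $\sigma.vs.\text{firsts}()$. Then $q_N$ passes this through its own \textbf{ValidateReq}/\textbf{ProposeReq} handlers in configuration $N$ (and onward), so the inclusion propagates to a quorum in $N$; if $D = N$ we are done, and otherwise we apply the induction hypothesis with $N$ in place of $C_0$ — the certificate-like role now being played by the fact that a quorum in $N$ holds $\sigma.vs$.

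The main obstacle I expect is the bookkeeping around which quorum exactly carries the set at each level and the need to invoke quorum intersection twice per step: once to connect the certificate's quorum to $q_N$'s \textbf{UpdateRead} quorum in $C_0$, and once implicitly through $q_N$'s own acknowledgement quorum in $N$ to seed the next induction level. A subtlety is that the state transfer protocol only guarantees $q_N$ hears from a quorum of $C_0$ \emph{if $C_0$ is still active when $q_N$ reaches it} — but this is exactly what the \textbf{wait for} condition on line~\ref{line:statetrans:wait} and Lemma~\ref{lm:pivotalconfigurations} guarantee, since $C_0 = \text{HighestConf}(\sigma.h)$ is pivotal (it is the last configuration in the verifiable history $\sigma.h$) and hence lies in every history that contains a configuration above it; $q_N$ either waits to hear from a $C_0$-quorum or observes $C_0 \sqsubset C_{cur}$, and in the latter case some earlier installed configuration already carried the set, which is covered by the induction. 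Care is also needed to ensure that Byzantine participants in the various quorums cannot strip elements out of $values$ — this follows because $values$ is only ever grown (never shrunk) in all the relevant handlers, and the $\code{VerifyValues}$ check guarantees injected entries are themselves verifiable and hence harmless to the inclusion we are tracking.
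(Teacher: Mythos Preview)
Your overall strategy matches the paper's: induct along the chain of properly installed configurations above $C_0 = \text{HighestConf}(\sigma.h)$, using the state transfer protocol together with quorum intersection to carry $\sigma.vs$ upward. However, there is a genuine gap in how you identify the quorum at each level. You argue that \emph{at least one} correct $q_N \in Q_N$ passes through $C_0$ and obtains $\sigma.vs$ via \textbf{UpdateReadResp}, and then claim this ``propagates to a quorum in $N$'' through the \textbf{ProposeReq} handlers. That propagation step does not work: a single correct process holding $\sigma.vs$ in its local $\textit{values}$ does not, by any mechanism in the protocol, force a \emph{quorum} of $N$ to hold it; the \textbf{ProposeReq}/\textbf{ProposeResp} exchange only transfers $q_N$'s values to whoever happens to query $q_N$, and nothing guarantees that a quorum of $N$ does so. Without a full quorum in $N$ carrying $\sigma.vs$, the quorum-intersection argument at the next configuration above $N$ cannot fire, and the induction stalls.

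The fix, which is exactly what the paper does, is to drop the propagation step and instead observe that the argument you give for a single $q_N$ applies verbatim to \emph{every} correct process in $Q_N$: each correct member of the \textbf{UpdateComplete} quorum for $N$ ran state transfer through $C_0$ (since $C_0$ is pivotal and no configuration between $C_0$ and $N$ was installed before $N$, so their $C_{cur}$ could not have skipped it), and by quorum intersection in $C_0$ each of them received an \textbf{UpdateReadResp} from some correct signer of $\sigma.\textit{proposeAcks}$, who had already merged $\sigma.vs$ into its $\textit{values}$ before updating its key and replying. Thus $Q_N$ itself is the desired quorum in $N$. The inductive step is then equally direct: for any $D$ in the chain, let $M$ be the previous properly installed configuration, take the quorum $Q_M$ supplied by the hypothesis, and repeat the same per-process argument for every correct member of $D$'s \textbf{UpdateComplete} quorum passing through $M$. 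This also removes the need for your informal ``apply the induction hypothesis with $N$ in place of $C_0$,'' which as written is not a well-formed use of the hypothesis.
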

\begin{proof}
Let $C = \text{HighestConf}(\sigma.h)$. We proceed by induction on the sequence of all properly
installed configurations higher than $C$. Let us denote this sequence by  $\widehat{C}$. Other configurations are not interesting simply because there is no such moment in time when they are simultaneously active and installed.

Let $N$ be lowest configuration in $\widehat{C}$. Let $Q_C \in \textit{quorums}(C)$ be a quorum of processes whose signatures are in $\sigma.\textit{proposeAcks}$. Since $N$ is installed, there is a quorum $Q_N \in \textit{quorums}(N)$ in which all correct processes broadcast $\langle \textbf{UpdateComplete}, N\rangle$. For each correct process
$q_N \in Q_N$, $q_N$ passed with its state transfer protocol through configuration $C$ and received
\textbf{UpdateReadResp} messages from some quorum of processes in $C$. Note that at that moment
$N$ wasn’t yet installed, and, hence $C$ wasn’t yet superseded. By the quorum intersection
property, there is at least one correct process $q_C \in Q_C$ that sent an \textbf{UpdateReadResp}
message to $q_N$. Because $q_C$ will only react to $q_N$’s message after
updating its private keys, it had to sign $\langle \textbf{ProposeResp}, \sigma.vs \rangle$ before sending reply to $q_N$, which means that the \textbf{UpdateReadResp}
message from $q_C$ to $q_N$ must have contained a set of values that includes all values from $\sigma.vs$.
This proves the base case of the induction.

Let us consider any configuration $D \in \widehat{C}$ such that $N \sqsubset D$. Let $M$ be the highest
configuration in $\widehat{C}$ such that $N \sqsubseteq M \sqsubset D$ (in other words, the closest to $D$ in $\widehat{C}$ ). Assume that
the statement holds for $M$, i.e., while $M$ was active, there was a quorum $Q_M \in \textit{quorums}(M)$
such that for each correct process in $Q_M$: $\sigma.vs.\text{firsts}() \subseteq \textit{values}.\text{firsts}()$. Similarly to the base case, let
us consider a quorum $Q_D \in quorums(D)$ such that every correct process in $Q_D$ reliably
broadcast $\langle \textbf{UpdateComplete},D \rangle$ before $D$ was installed. For each correct process $q_D \in Q_D$,
by the quorum intersection property, there is at least one correct process in $Q_M$ that sent an
\textbf{UpdateReadResp} message to $r_D$. This process attached its $values$ to the message, which
contained values from $\sigma.vs$. This proves the induction step and completes the proof.
\end{proof}

\begin{lemma}
\label{lm:oneconfigcomparability}
If $\sigma_1$ is a valid certificate for output value $v_1$ of an ABLA object (\textit{ConfigLA} or \textit{HistLA}),
and $\sigma_2$ is a valid certificate for output value $v_2$ of the same ABLA object,
and $\text{HighestConf}(\sigma_1.h)~=~\text{HighestConf}(\sigma_2.h)$, then $v_1$ and $v_2$ are comparable.
\end{lemma}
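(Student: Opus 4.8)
The plan is to reduce comparability of the two output values to the monotonicity of the $\textit{values}$ variable of a single correct process that lies in the quorums backing both certificates. First I would dispose of the degenerate case: if $\sigma_1=\bot$ or $\sigma_2=\bot$, the corresponding output is $v_{init}$, which is $\sqsubseteq$-below every value the object can produce, so comparability is immediate. Assume now $\sigma_1,\sigma_2\neq\bot$ and set $C=\text{HighestConf}(\sigma_1.h)=\text{HighestConf}(\sigma_2.h)$. Unwinding $\code{VerifyOutputValue}$ for each $j\in\{1,2\}$: the component $\sigma_j.\textit{proposeAcks}$ is a quorum $Q_j\in\textit{quorums}(C)$ such that every $q\in Q_j$ produced a signature with $\code{FSVerify}(\langle\textbf{ProposeResp},\sigma_j.vs\rangle,q,\cdot,\textit{height}(C))=\true$, and the value itself is $v_j=\bigsqcup\,\sigma_j.vs.\text{firsts}()$; moreover, $\sigma_j.h$ being verifiable makes every configuration in it — in particular $C$ — a candidate configuration.

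The core step is to locate a correct process inside $Q_1\cap Q_2$ that signed both payloads while correct. Since quorums of $C$ carry more than $2/3\,M$ of stake, $Q_1\cap Q_2$ carries more than $1/3\,M$ of stake in $C$. I would next argue that $C$ is still active (unsuperseded) when the relevant timestamp-$\textit{height}(C)$ signatures are issued: by Lemma~\ref{lm:keyupdate}, once $C$ is superseded no quorum of $C$ can sign with timestamp $\textit{height}(C)$, so the quorum of such signatures witnessed by $\sigma_j.\textit{proposeAcks}$ (together with $\sigma_j.\textit{confirmAcks}$) could only have been assembled while $C$ was active. Hence at those times $C$ is an active candidate configuration, so by the \textbf{Configuration availability} restriction the Byzantine stake in $C$ is below $1/3\,M$; therefore some $q\in Q_1\cap Q_2$ is correct at the later of the two moments it contributed, and, since Byzantine status is irreversible, correct throughout the whole interval during which it signed both payloads.

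It then remains to inspect the $\textbf{ProposeReq}$ handler. A correct $q$ emits a $\textbf{ProposeResp}$ with timestamp $\textit{height}(C)$ only while $C=\text{HighestConf}(\textit{history})$ (enforced by the key-update discipline and the guard inside $\code{FSSign}$), an interval throughout which $q$'s $\textit{values}$ set only grows; and the payload $q$ sends, $\textit{replyValues}$, satisfies $\textit{replyValues}.\text{firsts}()=q.\textit{values}.\text{firsts}()$ at the time of the reply. Consequently the two payloads $q$ signed, namely $\sigma_1.vs$ and $\sigma_2.vs$, are two snapshots of this monotone set and thus have $\subseteq$-comparable sets of firsts. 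Applying monotonicity of the join operator, $v_1=\bigsqcup\sigma_1.vs.\text{firsts}()$ and $v_2=\bigsqcup\sigma_2.vs.\text{firsts}()$ are $\sqsubseteq$-comparable, which is the claim.

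\textbf{Main obstacle.} The delicate point is the middle paragraph: turning ``each of $\sigma_1,\sigma_2$ is backed by a quorum of timestamp-$\textit{height}(C)$ signatures'' into ``there is a \emph{single} process in $Q_1\cap Q_2$ that was \emph{correct} at the time it signed \emph{both} payloads''. A naive quorum-intersection argument does not suffice, because a process in the intersection could have turned Byzantine between its two contributions; one must combine the forward-secure signature mechanism (via Lemma~\ref{lm:keyupdate}, which keeps $C$ from being superseded while the certificates are formed), the \textbf{Configuration availability} bound on Byzantine stake, and the irreversibility of corruption to pin down such a process.
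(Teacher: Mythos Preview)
Your argument is correct and follows the same route as the paper: locate a single correct process in $Q_1\cap Q_2$ via Lemma~\ref{lm:keyupdate} plus the Configuration-availability bound, then use monotonicity of that process's $\textit{values}$ set to get $\sigma_1.vs.\text{firsts}()$ and $\sigma_2.vs.\text{firsts}()$ comparable. The one place to sharpen is the parenthetical ``(together with $\sigma_j.\textit{confirmAcks}$)'': the paper makes explicit that every confirm-signature signs the full set $\sigma_j.\textit{proposeAcks}$ and hence was created \emph{after} the last propose-signature, so Lemma~\ref{lm:keyupdate} applied to the confirm-quorum shows $C$ was still active when the \emph{last} propose-signature was made---applying the lemma directly to the propose-quorum would only guarantee activeness at the time of the \emph{first} propose-signature, which is insufficient for the intersection argument.
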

\begin{proof}
Let $C = \text{HighestConf}(\sigma_1.h) = \text{HighestConf}(\sigma_2.h)$. 
By definition, $\sigma$ is a valid certificate for $v$
implies that $\text{VerifyOutputValue}(v, \sigma) = \textit{true}$. By the implementation,
$\sigma.h_1$ and $\sigma.h_2$ are verifiable histories. Therefore, $C$ is a pivotal configuration.

The set $\sigma_1.\textit{confirmAcks}$ contains signatures from a quorum of replicas of configuration $C$,
with timestamp $\textit{height}(C)$. Each of these signatures had to be produced after each of the
signatures in $\sigma_1.\textit{proposeAcks}$ because they sign the message $\langle \textbf{ConfirmResp}, \sigma_1.proposeAcks\rangle$. Combining this with the statement of Lemma~\ref{lm:keyupdate}, it
follows that at the moment when the last signature in the set $\sigma_1.\textit{proposeAcks}$ was created, the
configuration $C$ was active (otherwise it would be impossible to gather $\sigma_1.\textit{confirmAcks}$). We
can apply the same argument to the sets $\sigma_2.\textit{proposeAcks}$ and $\sigma_2.\textit{confirmAcks}$.

It follows that there are quorums $Q_1, Q_2 \in \textit{quorums}(C)$ and a moment in time $t$
such that: (1) $C$ is not superseded at time $t$, (2) all correct replicas in $Q_1$ signed
message $\langle \textbf{ProposeResp}, \sigma_1.vs\rangle$ before $t$, and (3) all correct replica in $Q_2$ signed message
$\langle \textbf{ProposeResp}, \sigma_2.vs\rangle$ before $t$. Since $C$ is not superseded at time $t$, there must be a correct
replica in $Q_1 \cap Q_2$ (due to quorum intersection), which signed both $\langle \textbf{ProposeResp},  \sigma_1.vs \rangle$ and
$\langle \textbf{ProposeResp},  \sigma_2.vs\rangle$. Since correct replicas only sign \textbf{ProposeResp}
messages with comparable sets of values, $ \sigma_1.vs.\text{firsts}()$ and $ \sigma_2.vs.\text{firsts}()$ are related by containment, i.e., either $ \sigma_1.vs.\text{firsts}() \subseteq  \sigma_2.vs.\text{firsts}()$
or $\sigma_2.vs.\text{firsts}() \subset  \sigma_1.vs.\text{firsts}()$. Hence, $v_1 = \bigsqcup \sigma_1.vs.\text{firsts}()$ and $v_2 = \bigsqcup \sigma_2.vs.\text{firsts}()$ are comparable.
\end{proof}

\begin{theorem}
\label{thm:comparability}
All verifiable output values produced by the same ABLA object (\textit{ConfigLA} or \textit{HistLA}) are comparable.
\end{theorem}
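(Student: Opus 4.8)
Mirroring the proof of Lemma~\ref{lm:oneconfigcomparability}, the plan is to reduce the claim to a containment statement about the value sets carried inside the two certificates. Let $\sigma_1$ be a valid certificate for output value $v_1$ and $\sigma_2$ a valid certificate for output value $v_2$ of the same ABLA object. If either $\sigma_i=\perp$ then $v_i=v_{init}$, which is $\sqsubseteq$ everything, and we are done; so assume both differ from $\perp$, whence $v_i=\bigsqcup \sigma_i.\textit{vs}.\text{firsts}()$. Since $\sqcup$ is monotone with respect to set containment of its argument, it suffices to prove that $\sigma_1.\textit{vs}.\text{firsts}()$ and $\sigma_2.\textit{vs}.\text{firsts}()$ are related by containment. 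Let $C_i=\text{HighestConf}(\sigma_i.h)$. As $\sigma_i.h$ is a verifiable history, $C_i$ is a pivotal, hence candidate, configuration, so by Lemma~\ref{lm:candidateconfigurations} the configurations $C_1$ and $C_2$ are comparable. If $C_1=C_2$ the statement follows directly from Lemma~\ref{lm:oneconfigcomparability}. So assume, without loss of generality, $C_1\sqsubset C_2$.

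Next I would show that $C_2$ is \emph{properly installed}. By inspection of the pseudocode a correct replica signs a $\langle\textbf{ProposeResp},\dots\rangle$ message for a configuration $C$ only when its own variable $T_p$ already equals $C$, i.e., only after it has triggered $\text{InstalledConfig}(C)$. Moreover, exactly as in the proof of Lemma~\ref{lm:oneconfigcomparability} (applying Lemma~\ref{lm:keyupdate} to argue that the $\textbf{ConfirmResp}$ signatures in $\sigma_2$ could only be gathered while $C_2$ was still active), there is a time $t$ at which $C_2$ is not superseded and at which a quorum $Q_2\in\textit{quorums}(C_2)$ of processes correct at $t$ have already signed $\langle\textbf{ProposeResp},\sigma_2.\textit{vs}\rangle$. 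Hence $C_2$ was installed before $t$ and while still active, so $C_2$ is properly installed with $C_1\sqsubset C_2$.

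Now I would invoke Lemma~\ref{lm:statetransfercorrectness} on $\sigma_1$ with $D=C_2$: since $C_2$ is a properly installed configuration strictly above $C_1=\text{HighestConf}(\sigma_1.h)$, there is a quorum $Q_D\in\textit{quorums}(C_2)$ — by that lemma's proof, the set of correct processes that broadcast $\langle\textbf{UpdateComplete},C_2\rangle$ during the installation of $C_2$ — each of whose correct members acquires all values of $\sigma_1.\textit{vs}.\text{firsts}()$ into its local $\textit{values}$ \emph{during its state-transfer pass through} $C_1$, which precedes its $\textbf{UpdateComplete}$ broadcast and therefore precedes any installation of $C_2$. The crux is then to produce a single process $q\in Q_2\cap Q_D$ that is correct at time $t$. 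Because $C_2$ is an active candidate at $t$, the configuration-availability condition gives $\textit{stake}(\textit{correct}(C_2,t),C_2)>\frac{2}{3}M$, while $\textit{stake}(Q_2,C_2)>\frac{2}{3}M$ and $\textit{stake}(Q_D,C_2)>\frac{2}{3}M$; since the total system stake is $M$, the complement $(Q_2\cap Q_D\cap\textit{correct}(C_2,t))^{c}$ has stake less than $M$ in $C_2$, so a process $q$ as required exists. A process correct at $t$ was correct at every earlier time, so $q$ performed the state transfer and thus had $\sigma_1.\textit{vs}.\text{firsts}()\subseteq q.\textit{values}.\text{firsts}()$ before $C_2$ was installed; and, being in $Q_2$ and correct at $t$, it signed $\langle\textbf{ProposeResp},\sigma_2.\textit{vs}\rangle$ afterwards, which by the structure of the $\textbf{ProposeResp}$ handler means $q.\textit{values}.\text{firsts}()$ equalled $\sigma_2.\textit{vs}.\text{firsts}()$ at that moment. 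Since a correct process's $\textit{values}$ only grows, $\sigma_1.\textit{vs}.\text{firsts}()\subseteq\sigma_2.\textit{vs}.\text{firsts}()$, and hence $v_1\sqsubseteq v_2$.

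The step I expect to be the main obstacle is the temporal bookkeeping around the shared process $q$: one must line up the window in which $q$ picks up $\sigma_1$'s values via state transfer (necessarily before $C_2$ is installed) with the window in which it contributes a signature towards $\sigma_2$ (necessarily after it has installed $C_2$), and argue that being correct at the single time $t$ is enough to cover both. This relies on the monotonicity of corruption, the monotonicity of the $\textit{values}$ variable, and the precise event ordering (state transfer through $C_1$ $\prec$ $\textbf{UpdateComplete}$ broadcast $\prec$ first installation of $C_2$ $\preceq$ $q$'s signing of $\textbf{ProposeResp}$ for $C_2$), all of which need to be spelled out carefully. The weighted three-way quorum intersection over $Q_2$, $Q_D$ and $\textit{correct}(C_2,t)$, justified by configuration availability, is the other place that must be argued explicitly rather than waved through.
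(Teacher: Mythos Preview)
Your proposal is correct and follows essentially the same approach as the paper: case split on whether $C_1=C_2$ (handled by Lemma~\ref{lm:oneconfigcomparability}) or $C_1\sqsubset C_2$, then combine Lemma~\ref{lm:statetransfercorrectness} (state transfer pushes $\sigma_1.\textit{vs}$ into a quorum of $C_2$) with quorum intersection against the signers of $\sigma_2.\textit{proposeAcks}$ to obtain $\sigma_1.\textit{vs}.\text{firsts}()\subseteq\sigma_2.\textit{vs}.\text{firsts}()$. You are more explicit than the paper on two points it leaves implicit: arguing that $C_2$ is (properly) installed and active at the relevant moment so that Lemma~\ref{lm:statetransfercorrectness} applies, and spelling out the weighted three-way intersection with $\textit{correct}(C_2,t)$ that the paper compresses into ``by the quorum intersection property.'' Two minor wording fixes: $Q_2$ is the quorum whose signatures appear in $\sigma_2.\textit{proposeAcks}$, not a quorum ``of processes correct at $t$'' (only its correct members matter, which is exactly why you intersect with $\textit{correct}(C_2,t)$); and the correct members of $Q_D$ need not pass \emph{through $C_1$} directly---Lemma~\ref{lm:statetransfercorrectness} delivers the inclusion by induction along the chain of properly installed configurations, so it is cleaner to invoke its statement rather than its proof.
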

\begin{proof}
Let $\sigma_1$ be a valid certificate for output value $v_1$, and $\sigma_2$ be a valid certificate for output value $v_2$.
Also, let $C_1 = \text{HighestConf}(\sigma_1.h)$ and $C_2 = \text{HighestConf}(\sigma_2.h)$. Since $\sigma_1.h$ and $\sigma_2.h$ are verifiable histories, both $C_1$ and $C_2$ are pivotal by definition.

If $C_1 = C_2$, $v_1$ and $v_2$ are comparable by Lemma~\ref{lm:oneconfigcomparability}.

Consider the case when $C_1 \not = C_2$. Without loss of generality, assume that $C_1 \sqsubset C_2$.
Let $Q_1 \in quorums(C_2)$ be a quorum of replicas whose signatures are in $\sigma_2.\textit{proposeAcks}$. 
Let $t$ be the moment when first correct replica signed $\langle \textbf{ProposeResp}, \sigma_2.vs \rangle$. 
Correct replicas only start processing user requests in a configuration when this configuration is installed. Therefore, by Lemma~\ref{lm:statetransfercorrectness}, at time $t$ there was a quorum of replicas
$Q_2 \in quorums(C_2)$ such that for every correct replica in $Q_2$: $\sigma_1.vs.\text{firsts}() \subseteq \textit{values}.\text{firsts}()$. 
By the quorum intersection property, there must be at least one correct replica in $Q_1 \cap Q_2$. Hence, $\sigma_1.vs.\text{firsts}() \subseteq \sigma_2.vs.\text{firsts}()$
and $v_1 = \bigsqcup \sigma_1.vs.\text{firsts}() \sqsubseteq v_2 = \bigsqcup \sigma_2.vs.\text{firsts}()$.
\end{proof}

\begin{theorem}
\label{thm:blasafety}
Both \textit{ConfigLA} and \textit{HistLA} objects satisfy properties of Adjustable Byzantine Lattice Agreement: \emph{ABLA-Validity}, \emph{ABLA-Verifiability}, \emph{ABLA-Inclusion}, \emph{ABLA-Comparability}. 
\end{theorem}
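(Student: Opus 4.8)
The plan is to verify each of the four ABLA properties for a generic ABLA object (since \textit{ConfigLA} and \textit{HistLA} share the same implementation and differ only in parameters), reusing the lemmas already established. Three of the four properties are essentially bookkeeping on the implementation in Figure~\ref{fig:dbla}, and the fourth has already been done.

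First I would dispatch \textbf{ABLA-Comparability}: this is exactly Theorem~\ref{thm:comparability}, so nothing remains to prove. Next, \textbf{ABLA-Verifiability}: I would trace through the \code{Propose} operation. When a correct process returns $\langle v, \sigma\rangle$, it sets $v = \bigsqcup \textit{values}.\text{firsts}()$ and $\sigma = \langle \textit{values}, \textit{acks}_1, \textit{acks}_2, \textit{history}, \sigma_{hist}\rangle$ only after the \textbf{wait for} \code{ContainsQuorum}$(\textit{acks}_2, \text{HighestConf}(\textit{history}))$ condition holds. I would then check, line by line against \code{VerifyOutputValue} in Figure~\ref{fig:verifyingfuncs}, that every conjunct is satisfied: $\sigma_{hist}$ is a valid certificate for \textit{history} by the invariant on the \textit{history} variable (it is only updated together with a verified $\sigma$); $v = \bigsqcup \textit{values}.\text{firsts}()$ by construction; \code{ContainsQuorum} holds for both $\textit{acks}_1$ and $\textit{acks}_2$ because \code{ContainsQuorum}$(\textit{acks}_1,\cdot)$ was the trigger for sending \textbf{ConfirmReq} and \code{ContainsQuorum}$(\textit{acks}_2,\cdot)$ is the \textbf{wait for} condition; and the \code{FSVerify} conjuncts on $\textit{acks}_1$, $\textit{acks}_2$ hold because a correct process only inserts $\langle q, \textit{sig}\rangle$ into these sets after \code{FSVerify} returned \true{} on the corresponding message. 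One subtlety to address is that the configuration $C = \text{HighestConf}(\textit{history})$ used in verification must match the one the signatures were produced against; here I would appeal to the fact that the process restarts (lines~\ref{line:posmain:restart_start}--\ref{line:posmain:restart_end} and the analogous restarts inside \code{Refine}) whenever \textit{history} grows, so the final $\textit{acks}_1, \textit{acks}_2$ are all collected under the same highest configuration.

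For \textbf{ABLA-Inclusion}, I would observe that \code{Propose}$(v,\sigma)$ first calls \code{Refine}$(\{\langle v,\sigma\rangle\})$, which inserts $\langle v,\sigma\rangle$ into \textit{values} (the filter only removes entries whose first component already appears, so $v$ itself is retained, possibly via an existing entry). Since the returned output is $\bigsqcup \textit{values}.\text{firsts}()$ and $v \in \textit{values}.\text{firsts}()$ at return time (the later \code{Refine} calls triggered by \textbf{ProposeResp} with larger sets only add values, never remove), we get $v \sqsubseteq \bigsqcup \textit{values}.\text{firsts}()$, i.e., $v \sqsubseteq w$. For \textbf{ABLA-Validity}, I need that every verifiable output value $w$ is a join of verifiable input values. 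From \code{VerifyOutputValue}, $w = \bigsqcup \textit{values}.\text{firsts}()$ where $\textit{acks}_1$ witnesses that a quorum of replicas signed $\langle \textbf{ProposeResp}, \textit{values}\rangle$; a correct replica only signs such a message after checking \code{VerifyValues}$(vs\setminus\textit{values})$, and the $\textit{values}$ set is only ever grown from entries that passed \code{VerifyValues} or were $\langle v_{init},\bot\rangle$. By the quorum availability condition (Configuration availability), at least one signer is correct, so every entry $\langle v_i,\sigma_i\rangle \in \textit{values}$ has $\code{VerifyInputValue}(v_i,\sigma_i) = \true$, i.e., each $v_i$ is a verifiable input value, and $w$ is their join. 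I should also note that $v_{init}$ is taken to be a (trivially) verifiable input value by convention.

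The main obstacle I anticipate is \textbf{ABLA-Validity}, specifically justifying that the \textit{values} set appearing in a valid output certificate consists \emph{only} of verifiable inputs. The delicate point is that $\textit{values}$ can accumulate entries contributed by other (possibly Byzantine) replicas through \textbf{ProposeResp} and \textbf{UpdateReadResp} messages; I must argue that any such contamination is filtered out before the set is certified. This requires carefully combining (i) the \code{VerifyValues} checks performed on receipt of \textbf{ProposeResp} and \textbf{UpdateReadResp} (in Figures~\ref{fig:dbla} and~\ref{fig:statetransfer}), (ii) the fact that the final certificate's $\textit{acks}_1$ is a quorum of \code{FSSign}atures over the \emph{exact} set $\textit{values}$, each signer (including a correct one, by Configuration availability) having verified $\textit{values}$, and (iii) Lemma~\ref{lm:keyupdate} to rule out stale signatures from superseded configurations deceiving the verifier. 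Everything else reduces to straightforward inspection of the pseudocode.
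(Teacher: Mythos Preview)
Your proposal is correct and follows the same strategy as the paper: dispatch ABLA-Comparability via Theorem~\ref{thm:comparability} and argue the remaining three properties by inspection of the implementation. The paper's own proof is considerably more terse (it literally says each of Validity, Verifiability, and Inclusion ``follows directly from the implementation'' and cites Theorem~\ref{thm:comparability} for Comparability), whereas you actually unpack the code-level invariants---in particular, for ABLA-Validity you make explicit the appeal to Configuration availability and Lemma~\ref{lm:keyupdate} that guarantees a correct signer in the $\textit{acks}_1$ quorum, which the paper leaves implicit.
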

\begin{proof}
Let us show that all properties hold:
\begin{itemize}
    \item ABLA-Validity follows directly from the implementation;
    \item ABLA-Verifiability also follows directly from the implementation; 
    \item ABLA-Inclusion follows from the fact that a correct process always includes proposed value $v$ in its local set $\textit{values}$ and this set only grows;
    \item ABLA-Comparability follows from the Theorem~\ref{thm:comparability}.
\end{itemize}
\end{proof}

\begin{theorem}
\label{thm:histories}
All verifiable histories returned from $\textit{HistLA}.\text{Propose}\atremove{(\dots)}$ (line~\ref{line:posmain:historypropose}) are related by containment and all configurations in any such history are comparable.
\end{theorem}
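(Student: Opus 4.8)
The plan is to derive this theorem as an essentially immediate corollary of the machinery already established for generic ABLA objects, instantiated for \textit{HistLA}, together with the corresponding facts about \textit{ConfigLA}. First I would recall that \textit{HistLA} is an ABLA object with lattice $(2^{2^{\tT}}, \subseteq)$, so by Theorem~\ref{thm:comparability} applied to \textit{HistLA}, all verifiable output values produced by \textit{HistLA} are comparable under $\subseteq$. Since every verifiable history returned from $\textit{HistLA}.\text{Propose}$ at line~\ref{line:posmain:historypropose} is, by ABLA-Verifiability (Theorem~\ref{thm:blasafety}), a verifiable output value of \textit{HistLA}, this gives that any two such histories are related by containment — the first half of the claim.

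For the second half, I would argue that configurations within a single verifiable history are comparable. Here the key is the ABLA-Validity property of \textit{HistLA}: every verifiable output value $H$ is a join (i.e., union) of some set of \emph{verifiable input values} of \textit{HistLA}. By the way the pipeline is wired (Section~\ref{sec:algo}), the $\code{VerifyInputValue}$ function of \textit{HistLA} on a singleton $\{v\}$ reduces to $\code{VerifyConfiguration}(v,\sigma_v)$, i.e., to $\textit{ConfigLA}.\code{VerifyOutputValue}$. Hence each verifiable input to \textit{HistLA} is a singleton set containing a verifiable output configuration of \textit{ConfigLA}, and $H$ is a union of such singletons — that is, $H$ is a set of verifiable \textit{ConfigLA} output values. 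Applying Theorem~\ref{thm:comparability} this time to \textit{ConfigLA}, all verifiable \textit{ConfigLA} output values are comparable under $\subseteq$; in particular all configurations inside $H$ are pairwise comparable. Combining the two parts yields the theorem.

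The main obstacle — really the only nontrivial point — is making the induction in the consistency argument bottom out correctly: Theorem~\ref{thm:comparability} was proved \emph{under the induction hypothesis} that all previously produced verifiable histories are related by containment and contain only comparable configurations. So strictly speaking this theorem is the \emph{induction step} of that outer induction rather than a standalone corollary, and I would need to be careful to phrase it that way: assuming the hypothesis holds for the fixed finite prefix of the execution, the newly produced histories returned at line~\ref{line:posmain:historypropose} (whose certificates reference only histories from that prefix) still satisfy the two properties, by exactly the argument above. I would also note that any history that some correct process adopts via the \textbf{NewHistory} event has a valid certificate and hence is a verifiable output of \textit{HistLA}, so the conclusion extends from ``histories returned by \code{Propose}'' to ``all verifiable histories'', closing the induction and completing the consistency proof.
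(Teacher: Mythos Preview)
Your proposal is correct and follows essentially the same approach as the paper: both derive the result directly from the ABLA properties of \textit{HistLA} and \textit{ConfigLA} (Theorem~\ref{thm:blasafety}/Theorem~\ref{thm:comparability}) together with the pipeline wiring that forces \textit{HistLA} inputs to consist of verifiable \textit{ConfigLA} outputs. Your write-up is simply more explicit---in particular about ABLA-Validity and the role of this theorem as the induction step---than the paper's one-sentence proof; one small caveat is that the paper's textual definition of $\code{VerifyInputValue}$ for \textit{HistLA} allows non-singleton input sets, so the safer phrasing is ``every element of a verifiable \textit{HistLA} input is a verifiable configuration'' rather than ``each verifiable input is a singleton,'' though this does not affect your conclusion.
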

\begin{proof}
It follows from the fact that both \textit{ConfigLA} and \textit{HistLA} meet specification of ABLA and the only valid input values for \textit{HistLA} are sets that contain only the output values of \textit{ConfigLA} that are comparable.
\end{proof}

Now, in a similar way, we prove that implementation of \textit{TxVal} object satisfies the specification of Transaction Validation.

Similiarly, to verifiable outputs  we consider the following notation for verifiable transaction sets: 
\\
if $\sigma = \langle \textit{sentTxs}, \textit{validationAcks}, \textit{confirmAcks}, h, \sigma_{h} \rangle$ is a certificate for transaction set produced by \textit{TxVal}, we write $\sigma.\textit{sentTxs}$, $\sigma.\textit{validationAcks}$, $\sigma. \textit{confirmAcks}$, $\sigma.h$ and $\sigma.\sigma_{h}$ to denote the access to the corresponding parts of certificate~$\sigma$.

For the sake of simplicity, when we do not care about transaction certificate sometimes,  we write $tx \in \textit{sentTxs}$ (or $\textit{conflictTxs}$ resp.) instead of $\langle tx, \sigma_{tx} \rangle \in \textit{sentTxs}$ ($\textit{conflictTxs}$ resp.).

\begin{lemma}
\label{lm:noconflictinvalue}
If $\sigma$ is a valid certificate for output transaction set $\textit{txs}$ of the  \textit{TxVal} object, then no two transactions in $txs$ conflict.
\end{lemma}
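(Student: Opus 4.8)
The plan is to argue by contradiction. Suppose $\sigma=\langle S,\textit{acks}_1,\textit{acks}_2,h,\sigma_h\rangle$ is a valid certificate for $txs$ and yet there are $tx_i,tx_j\in txs$ with $tx_i\neq tx_j$, $tx_i.p=tx_j.p$ and $tx_i.D\cap tx_j.D\neq\emptyset$. Set $C=\code{HighestConf}(h)$. From $\code{VerifyTransactionSet}(txs,\sigma)=\true$ I would extract: $h$ is a verifiable history, hence $C$ is pivotal and therefore a \emph{candidate} configuration (using that a verifiable history implies some correct process triggered $\text{NewHistory}(h)$); $txs=S.\code{firsts}()\setminus\textit{acks}_1.\code{getConflictTxs}().\code{firsts}()$, so in particular $tx_i,tx_j\in S.\code{firsts}()$; $\code{ContainsQuorum}(\textit{acks}_1,C)$ and $\code{ContainsQuorum}(\textit{acks}_2,C)$; every entry of $\textit{acks}_1$ carries a valid forward-secure signature of $\langle\textbf{ValidateResp},S,\textit{conflictTxs}_q\rangle$ with timestamp $\textit{height}(C)$; and every entry of $\textit{acks}_2$ signs $\langle\textbf{ConfirmResp},\textit{acks}_1\rangle$ with timestamp $\textit{height}(C)$.

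The first — and, I expect, the hardest — step is to secure an \emph{honest} replica among those whose signatures occur in $\textit{acks}_1$ (a purely Byzantine quorum could report an empty $\textit{conflictTxs}_q$ and so is useless). Here I would reuse the timing argument from Lemma~\ref{lm:oneconfigcomparability}: each signature of $\textit{acks}_2$ is over the entire set $\textit{acks}_1$, so it is created no earlier than the time $t_1$ at which the last signature of $\textit{acks}_1$ appears; since $\textit{acks}_2$ carries a quorum of $C$ signing with timestamp $\textit{height}(C)$, Lemma~\ref{lm:keyupdate} rules out $C$ being superseded at $t_1$ (otherwise the quorum backing $\textit{acks}_2$ would already contain a replica whose key has advanced past $\textit{height}(C)$ and could not have produced its signature at any time $\geq t_1$). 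Thus $C$ is an \emph{active candidate} at time $t_1$, and the \textbf{Configuration availability} assumption gives $\sum_{q\in\textit{correct}(C,t_1)}\textit{stake}(q,C)>\frac{2}{3}M$. The quorum underlying $\textit{acks}_1$ also owns more than $\frac{2}{3}M$ of the stake, so the two sets intersect in some replica $q$ that is correct at $t_1$, hence correct throughout $[0,t_1]$; since its $\textit{acks}_1$-signature was created by $t_1$, it was genuinely produced by $q$ running the \textit{TxVal} code.

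It then remains to inspect what this honest $q$ signed. When $q$ emitted $\langle\textbf{ValidateResp},S,\textit{conflictTxs}_q\rangle$ it had just performed $\textit{seenTxs}\leftarrow\textit{seenTxs}\cup\textit{txs}_{\text{req}}$ and $\textit{correctTxs}\leftarrow\code{CorrectTransactions}(\textit{seenTxs})$ with $S=\textit{txs}_{\text{req}}\cup\textit{correctTxs}$, so $S\subseteq\textit{seenTxs}$ at that moment, and it set $\textit{conflictTxs}_q=\code{ConflictTransactions}(\textit{seenTxs})$, which contains \emph{every} transaction of $\textit{seenTxs}$ that conflicts with another transaction of $\textit{seenTxs}$. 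As $tx_i$ and $tx_j$ both lie in $S\subseteq\textit{seenTxs}$ and conflict, both are in $\textit{conflictTxs}_q$, hence in $\textit{acks}_1.\code{getConflictTxs}()$. Therefore $tx_i\notin S.\code{firsts}()\setminus\textit{acks}_1.\code{getConflictTxs}().\code{firsts}()=txs$, contradicting $tx_i\in txs$ and finishing the proof.

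To summarize, the engine of the argument is the forward-secure key discipline (Lemma~\ref{lm:keyupdate}) together with the order in which the two acknowledgement rounds are signed: it forces $C$ to be active precisely when $\textit{acks}_1$ is finalized, which is what lets \textbf{Configuration availability} hand us a genuinely honest validator. Everything after that — that an honest validator's reported conflict set absorbs both members of a conflicting pair, and that the output set subtracts exactly that conflict set — is a direct unfolding of the definitions of $\code{ConflictTransactions}$ and of the $\code{Validate}$ return value.
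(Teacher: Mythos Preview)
Your proof is correct and follows essentially the same route as the paper's: assume a conflicting pair in $txs$, observe both lie in $\sigma.\textit{sentTxs}$, use the second-round acknowledgements to conclude $C$ was still active when $\textit{acks}_1$ was finalized, extract an honest replica from the $\textit{acks}_1$-quorum, and note that this replica would have placed the conflicting pair in its $\textit{conflictTxs}$. The only difference is that you spell out the activeness step carefully via Lemma~\ref{lm:keyupdate} and the Configuration-availability assumption, whereas the paper compresses it to ``otherwise it would be impossible to gather $\sigma.\textit{confirmAcks}$'' and an appeal to quorum intersection.
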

\begin{proof}
Let us assume that process obtained a valid certificate for a transaction set $\textit{txs}$ where two transactions $tx_1 \in \textit{txs}$ and $tx_2 \in \textit{txs}$ conflict. 
We denote $C = \text{HighestConf}(\sigma.h)$. As certificate is valid, then $\langle tx_1, \sigma_{tx_1}\rangle \in \sigma.\textit{sentTxs}$ and $\langle tx_2, \sigma_{tx_2} \rangle \in \sigma.\textit{sentTxs}$ and corresponding certificates are valid.
At the same time, we have that $\exists Q \in \textit{quorums}(C)$ such that every process $q \in Q$ signed a message $\langle \textbf{ValidateResp}, \sigma.\textit{sentTxs}, \textit{conflictTxs}_{q} \rangle$.
For each such process $q$ and message it signed, there is a corresponding information (process identifier $q$, signature, set $\textit{conflictTxs}_{q}$) stored in $\sigma.\textit{validationAcks}$. 
As $tx_1$ and $tx_2$ appear in output transaction set $txs$, then there is no $q$ such that $tx_1 \in \textit{conflictTxs}_{q}$ or $tx_2 \in \textit{conflictTxs}_{q}$. 
However, at the moment when the last signature in the set $\sigma.\textit{validationAcks}$ was created, the configuration $C$ was active (otherwise it would be impossible to gather $\sigma.\textit{confirmAcks}$). 
As $C$ was active, there was a correct process $p \in Q$ (due to the quorum intersection property) that should have detect the conflict between any two transactions in $\sigma.\textit{sentTxs}$, and place them into $\textit{conflictTxs}_p$ along with their certificates.
This contradicts with our initial assumption. 
Thus, no two transactions in $\textit{txs}$ conflict.
\end{proof}

\begin{lemma}
\label{lm:noconflictoneconfig}
If $\sigma_1$ is a valid certificate for output transaction set $\textit{txs}_1$ of the \textit{TxVal} object,
and $\sigma_2$ is a valid certificate for output transaction set $\textit{txs}_2$ of the \textit{TxVal} object,
and $\text{HighestConf}(\sigma_1.h)=\text{HighestConf}(\sigma_2.h)$, then $\nexists tx_1 \in \textit{txs}_1$ such that $\textit{tx}_1$ conflicts with some $tx_2 \in \textit{txs}_2$.
\end{lemma}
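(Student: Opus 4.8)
This is the ``same-configuration'' companion of Lemma~\ref{lm:noconflictinvalue}, so the plan is to combine the timing argument of Lemma~\ref{lm:oneconfigcomparability} with the conflict-detection argument of Lemma~\ref{lm:noconflictinvalue}. First I would set $C = \text{HighestConf}(\sigma_1.h) = \text{HighestConf}(\sigma_2.h)$ and note that $C$ is pivotal, since $\sigma_1.h$ and $\sigma_2.h$ are verifiable histories. Next, exactly as in Lemma~\ref{lm:oneconfigcomparability}, I would observe that the signatures in $\sigma_i.\textit{confirmAcks}$ are produced after those in $\sigma_i.\textit{validationAcks}$ (they sign $\langle\textbf{ConfirmResp}, \sigma_i.\textit{validationAcks}\rangle$) and that they form a quorum of $C$ signing with timestamp $\textit{height}(C)$; by Lemma~\ref{lm:keyupdate} such a quorum cannot exist once $C$ is superseded, so for each $i$ every signature in $\sigma_i.\textit{validationAcks}$ was created while $C$ was still active. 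Hence there is a single moment $t$ at which $C$ is not superseded and both $\sigma_1.\textit{validationAcks}$ and $\sigma_2.\textit{validationAcks}$ are already complete, witnessed by quorums $Q_1, Q_2 \in \textit{quorums}(C)$ whose correct members signed, respectively, $\langle\textbf{ValidateResp}, \sigma_1.\textit{sentTxs}, \cdot\rangle$ and $\langle\textbf{ValidateResp}, \sigma_2.\textit{sentTxs}, \cdot\rangle$ with timestamp $\textit{height}(C)$.

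Since $C$ is not superseded at $t$, quorum intersection gives a correct replica $q \in Q_1 \cap Q_2$. I would then argue by contradiction: suppose some $tx_1 \in \textit{txs}_1$ conflicts with some $tx_2 \in \textit{txs}_2$. Because $\textit{txs}_i \subseteq \sigma_i.\textit{sentTxs}.\text{firsts}() \setminus \sigma_i.\textit{validationAcks}.\text{getConflictTxs}().\text{firsts}()$, we obtain $tx_1 \in \sigma_1.\textit{sentTxs}$, $tx_2 \in \sigma_2.\textit{sentTxs}$, and, moreover, no tuple in $\sigma_1.\textit{validationAcks}$ lists $tx_1$ as conflicting and no tuple in $\sigma_2.\textit{validationAcks}$ lists $tx_2$. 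Now I would look at the two $\textbf{ValidateResp}$ signatures $q$ contributed, one to $\sigma_1.\textit{validationAcks}$ and one to $\sigma_2.\textit{validationAcks}$: being correct, $q$ genuinely produced each of them, the first after adding the sender-verified transactions of $\sigma_1.\textit{sentTxs}$ (in particular $tx_1$) to its local $\textit{seenTxs}$, the second after adding those of $\sigma_2.\textit{sentTxs}$ (in particular $tx_2$). Since $\textit{seenTxs}$ at a correct process only grows, at whichever of the two signing events happens later the set $\textit{seenTxs}$ already contains both $tx_1$ and $tx_2$; as they conflict, $\text{ConflictTransactions}(\textit{seenTxs})$ must contain both, so the $\textit{conflictTxs}$ component signed at that event lists one of $tx_1$, $tx_2$ — contradicting the previous sentence.

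\textbf{Main obstacle.} The timing part (producing a common moment $t$ at which $C$ is active and both validation-ack quorums are already formed) is a routine adaptation of Lemmas~\ref{lm:keyupdate} and~\ref{lm:oneconfigcomparability}. The delicate point is the final step: because of the restart/helping mechanism, a correct replica may emit several $\textbf{ValidateResp}$ messages for the same $\textit{sentTxs}$, so I need to pin down that the response actually recorded in the certificate genuinely reflects the replica's $\textit{seenTxs}$ at its signing time, and then lean on monotonicity of $\textit{seenTxs}$ across restarts to conclude that the later of $q$'s two recorded signatures must witness the conflict.
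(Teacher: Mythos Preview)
Your proposal is correct and follows essentially the same approach as the paper's proof: establish that $C$ was active when both $\textit{validationAcks}$ quorums were formed (the paper asserts this directly, you derive it via Lemma~\ref{lm:keyupdate} and the $\textit{confirmAcks}$ layer), intersect the two quorums to obtain a correct replica $q$, and use monotonicity of $q$'s $\textit{seenTxs}$ to force the later of $q$'s two signed responses to flag the conflict. Your treatment is in fact more careful than the paper's on the timing step and on the ``main obstacle'' about multiple \textbf{ValidateResp} messages, both of which the paper glosses over.
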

\begin{proof}
Let us denote $\text{HighestConf}(\sigma_1.h)$ as $C$.
As $\sigma_1$ is a valid certificate for $\textit{txs}_1$, then $C$ was active when the last acknowledgment in $\sigma_1.\textit{validationAcks}$ was gathered.
The same is true for $\sigma_2$ and $\textit{txs}_2$.

For output transaction set $\textit{txs}_1$ validations in $\sigma_1.\textit{validationAcks}$ were collected from some quorum $Q_1 \in \textit{quorums}(C)$, and for $\textit{txs}_2$  from  $Q_2 \in \textit{quorums}(C)$. 
As $C$ was active at the moments of time when last signatures in $\sigma_1.\textit{validationAcks}$ and in $\sigma_2.\textit{validationAcks}$ were created, then there existed a correct process $q \in Q_1 \cap Q_2$ due to quorum intersection property.
Without loss of generality, we assume that $q$ firstly received and validated message $\langle \textbf{ValidateReq}, \sigma_1.\textit{sentTxs} \rangle$. 
As $q$ was a correct process, it added all the transactions from  $\textit{sentTxs}_1$ to its local estimate $\textit{seenTxs}_q$. 
Then, when $q$ received a message  $\langle \textbf{ValidateReq}, \sigma_2.\textit{sentTxs} \rangle$ it should have revealed all the
transactions in $\sigma_1.\textit{sentTxs}$ that conflict with the ones it had seen before
(including the ones from $\sigma_1.\textit{sentTxs}$), and it placed conflicting transactions along with their certificates 
into the set $\textit{conflictTxs}$ according to the algorithm. Hence, no transaction $tx_1 \in \textit{txs}_1$ 
conflicts with any transaction $tx_2 \in \textit{txs}_2$.
\end{proof}

\begin{lemma}
\label{lm:statetransfercorrectnesstansfervalidation}
Let $\textit{txs}$ be a verifiable transaction set and $\sigma$ be a matching certificate. Then, for any installed and active configuration $C$ such that $\text{HighestConf}(\sigma.h) \sqsubset C$ the following condition is met: $\exists Q: Q \in 
\textit{quorums}(C)$ and $\forall q \in Q$ and $q$ is correct: $\sigma.\textit{sentTxs} \subseteq \textit{seenTxs}_q$.
\end{lemma}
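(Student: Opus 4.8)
The plan is to mirror the proof of Lemma~\ref{lm:statetransfercorrectness} almost verbatim, substituting the \textit{TxVal} set $\textit{seenTxs}$ for the ABLA set $\textit{values}$ and \textbf{ValidateResp} messages for \textbf{ProposeResp} messages. Write $C_0 = \text{HighestConf}(\sigma.h)$; since $\sigma.h$ is a verifiable history, $C_0$ is pivotal. Let $Q_{C_0} \in \textit{quorums}(C_0)$ be the quorum whose signatures make up $\sigma.\textit{validationAcks}$. The first thing I would establish is a purely local fact about the \textit{TxVal} code (Figure~\ref{fig:transfervalidation}): when a replica $q \in Q_{C_0}$ that is correct at the time it answers an \textbf{UpdateRead} for $C_0$ produced its signature over $\langle \textbf{ValidateResp}, \sigma.\textit{sentTxs}, \textit{conflictTxs}_q \rangle$ while handling a \textbf{ValidateReq}, it had just executed $\textit{seenTxs} \leftarrow \textit{seenTxs} \cup \textit{txs}$ and recomputed $\textit{correctTxs}$, and the message it signs has $\textit{txs} \cup \textit{correctTxs}$ as its transaction component; since that component must equal $\sigma.\textit{sentTxs}$ for the triple $\langle q, \textit{sig}, \textit{conflictTxs}_q \rangle$ to be accepted by \code{VerifyTransactionSet}, we get $\sigma.\textit{sentTxs} \subseteq \textit{seenTxs}_q$ at that instant, hence at all later times, as $\textit{seenTxs}$ only grows (and, $q$ being correct then, it was correct when it signed, since Byzantineness is permanent). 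Combined with the forward-secure timestamp discipline used in Lemma~\ref{lm:keyupdate} --- $q$ signs with timestamp $\textit{height}(C_0)$ only while $C_0 = \text{HighestConf}(\textit{history})$ for $q$, whereas $q$ answers an \textbf{UpdateRead} for $C_0$ only after $C_0 \sqsubset \text{HighestConf}(\textit{history})$ --- this signature necessarily predates $q$'s \textbf{UpdateReadResp}, so that response carries a $\textit{seenTxs}$ field that already contains $\sigma.\textit{sentTxs}$.

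With that in hand, I would induct over the sequence $\widehat{C}$ of properly installed configurations strictly above $C_0$, ordered by height; note that every installed-and-active configuration above $C_0$ lies in $\widehat{C}$, since ``superseded'' is monotone in time. For the base case let $N$ be the smallest element of $\widehat{C}$; by Lemma~\ref{lm:properlyinstalled} it is the first configuration above $C_0$ ever installed, so $C_0$ is not superseded before $N$ is installed. Take the quorum $Q_N \in \textit{quorums}(N)$ whose correct members \textbf{WURB}-broadcast $\langle \textbf{UpdateComplete}, N \rangle$; each such $q_N$ must have run its state-transfer loop through $C_0$ --- its $C_{cur}$ advances only inside that loop or in the \textbf{UpdateComplete} handler, and by minimality of $N$ it could not have installed, hence could not have set $C_{cur}$ past, any configuration strictly between $C_0$ and $N$ --- collecting \textbf{UpdateReadResp} messages from a quorum of $C_0$ while $C_0$ was still an active candidate configuration. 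Quorum intersection (guaranteed by the Configuration-availability restriction of Section~\ref{sec:adversary}) yields a correct $q \in Q_{C_0}$ among those responders, and by the previous paragraph its reply merged $\sigma.\textit{sentTxs}$ into $\textit{seenTxs}_{q_N}$ before $q_N$ broadcast \textbf{UpdateComplete}. Hence every correct member of $Q_N$ satisfies $\sigma.\textit{sentTxs} \subseteq \textit{seenTxs}$, which is the base case. The inductive step is identical, with $N$ replaced by the immediate predecessor $M$ of the target configuration $D$ in $\widehat{C}$: correct members of $D$'s installing quorum $Q_D$ pass their state transfer through $M$, intersect in a correct process the quorum $Q_M$ furnished by the induction hypothesis, and thereby inherit $\sigma.\textit{sentTxs}$.

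I expect the only genuine friction to be the ``passed through $C_0$ (resp.\ $M$) during state transfer'' step: one has to argue carefully from the pseudocode that a correct process never advances $C_{cur}$ across a configuration without first sending \textbf{UpdateRead} to its members and awaiting responses from a quorum of them, and that, because $N$ (resp.\ $D$) is the \emph{first} properly installed configuration above $C_0$ (resp.\ above $M$), no intermediate configuration could have been installed that would let $C_{cur}$ ``jump'' over $C_0$. Everything else is routine: \code{VerifySenders} never rejects a transaction that already sits in some correct replica's $\textit{seenTxs}$, because such transactions were added only after \code{VerifySenders} succeeded on them; and the timestamp / active-candidate bookkeeping is exactly as in Lemmas~\ref{lm:keyupdate}--\ref{lm:statetransfercorrectness}.
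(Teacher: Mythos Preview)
Your proposal is correct and follows exactly the approach the paper intends: the paper's own proof of this lemma is a one-line reference to Lemma~\ref{lm:statetransfercorrectness}, and you have carried out that analogy in detail, with the appropriate substitutions ($\textit{seenTxs}$ for $\textit{values}$, $\textbf{ValidateResp}$ for $\textbf{ProposeResp}$, $\sigma.\textit{validationAcks}$ for $\sigma.\textit{proposeAcks}$). Your added local observation---that a correct replica signing $\langle\textbf{ValidateResp},\sigma.\textit{sentTxs},\textit{conflictTxs}\rangle$ must at that instant have $\sigma.\textit{sentTxs}\subseteq\textit{seenTxs}$, and that this precedes any \textbf{UpdateReadResp} it sends for $C_0$ because of the forward-secure timestamp discipline---is exactly the TxVal analogue of the corresponding step in Lemma~\ref{lm:statetransfercorrectness}, and your handling of the ``passed through $C_0$'' point via Lemma~\ref{lm:pivotalconfigurations} and the minimality of $N$ is the same as in the paper's proof.
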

\begin{proof}
The proof of this fact is constructed similarly to the proof of the Lemma~\ref{lm:statetransfercorrectness}.
\end{proof}

\begin{lemma}
\label{lm:noconflictdifconfigs}
Let $\textit{txs}_1$ be a verifiable transaction set of a Transaction Validation object \textit{TxVal} with a corresponding certificate $\sigma_1$ and $\textit{txs}_2$ be a verifiable transaction set of the same Transaction Validation object \textit{TxVal} with a corresponding certificate $\sigma_2$. Then, $\nexists tx_1 \in \textit{txs}_1$ such that $tx_1$ conflicts with some $tx_2 \in \textit{txs}_2$.
\end{lemma}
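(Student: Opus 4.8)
The plan is to mirror the proof of Theorem~\ref{thm:comparability}, only replacing ``comparability of the value sets'' with ``absence of conflicts between the transaction sets''. Concretely, Lemma~\ref{lm:noconflictoneconfig} will play the role of Lemma~\ref{lm:oneconfigcomparability} (the single-configuration case), and Lemma~\ref{lm:statetransfercorrectnesstansfervalidation} will play the role of Lemma~\ref{lm:statetransfercorrectness} (carrying the $\textit{seenTxs}$ forward across configurations).

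First I would set $C_1 = \code{HighestConf}(\sigma_1.h)$ and $C_2 = \code{HighestConf}(\sigma_2.h)$. Since $\code{VerifyTransactionSet}$ requires $\sigma_i.h$ to be a verifiable history, both $C_1$ and $C_2$ are pivotal, hence candidate configurations, so by Lemma~\ref{lm:candidateconfigurations} they are comparable. If $C_1 = C_2$, the statement is exactly Lemma~\ref{lm:noconflictoneconfig}. Otherwise, since ``conflicts with'' is a symmetric relation, I may assume without loss of generality that $C_1 \sqsubset C_2$, and it suffices to show that no $tx_1 \in \textit{txs}_1$ conflicts with any $tx_2 \in \textit{txs}_2$.

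Next I would show that $C_2$ is installed and active at the moment $t$ when the first correct replica signed a message $\langle \textbf{ValidateResp}, \sigma_2.\textit{sentTxs}, \cdot \rangle$. It is installed at $t$ because a correct replica answers a $\textbf{ValidateReq}$ for $C_2$ only after $C_2 = T_p$. It is active at $t$ because, exactly as in the proof of Lemma~\ref{lm:noconflictoneconfig} (which invokes Lemma~\ref{lm:keyupdate}), the signatures in $\sigma_2.\textit{confirmAcks}$ sign $\langle \textbf{ConfirmResp}, \sigma_2.\textit{validationAcks}\rangle$ with timestamp $\textit{height}(C_2)$ and are produced after all signatures in $\sigma_2.\textit{validationAcks}$, which in turn are produced no earlier than $t$; hence $C_2$ cannot have been superseded by time $t$. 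Then I would apply Lemma~\ref{lm:statetransfercorrectnesstansfervalidation} with $\sigma = \sigma_1$ and $C = C_2$ (using $C_1 \sqsubset C_2$) to obtain a quorum $Q' \in \textit{quorums}(C_2)$ with $\sigma_1.\textit{sentTxs} \subseteq \textit{seenTxs}_q$ for every correct $q \in Q'$ at time $t$. Letting $Q \in \textit{quorums}(C_2)$ be the quorum whose acknowledgments appear in $\sigma_2.\textit{validationAcks}$, quorum intersection yields a correct $q \in Q \cap Q'$. When $q$ signed its $\textbf{ValidateResp}$ for $\sigma_2.\textit{sentTxs}$ it had already inserted $\sigma_2.\textit{sentTxs}$ into $\textit{seenTxs}_q$, and (since $\textit{seenTxs}_q$ only grows) also $\sigma_1.\textit{sentTxs} \subseteq \textit{seenTxs}_q$; therefore its $\textit{conflictTxs}_q = \code{ConflictTransactions}(\textit{seenTxs}_q)$ contains every transaction of $\textit{seenTxs}_q$ conflicting with another such transaction. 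So if some $tx_1 \in \textit{txs}_1 \subseteq \sigma_1.\textit{sentTxs}$ conflicted with some $tx_2 \in \textit{txs}_2 \subseteq \sigma_2.\textit{sentTxs}$, then $tx_2 \in \textit{conflictTxs}_q$, and since $\langle q, \cdot, \textit{conflictTxs}_q\rangle \in \sigma_2.\textit{validationAcks}$ we would get $tx_2 \in \sigma_2.\textit{validationAcks}.\code{getConflictTxs}()$, so $tx_2 \notin \textit{txs}_2$ by the definition of the returned set in Figure~\ref{fig:transfervalidation} --- a contradiction.

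The main obstacle I expect is the bookkeeping against the pseudocode rather than any conceptual difficulty: one must verify that the responder $q$ actually signs the exact $\textit{conflictTxs}$ recorded in $\sigma_2.\textit{validationAcks}$ (so that the $\code{FSVerify}$ check inside $\code{VerifyTransactionSet}$ forces this set to reflect a correct replica's $\textit{seenTxs}$), that this $\textit{seenTxs}$ already contains $\sigma_1.\textit{sentTxs}$ at signing time, and that the filtering $\textit{txs}_2 = \sigma_2.\textit{sentTxs}.\code{firsts}() \setminus \sigma_2.\textit{validationAcks}.\code{getConflictTxs}().\code{firsts}()$ therefore removes $tx_2$. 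A secondary point to handle carefully is that the first component of the responder's $\textbf{ValidateResp}$ is $\textit{txs} \cup \textit{correctTxs}$ and equals $\sigma_2.\textit{sentTxs}$ only because the proposer accepts it into $\textit{acks}_1$ precisely when it matches its own $\textit{sentTxs}$ --- so the identity $\sigma_2.\textit{sentTxs} \subseteq \textit{seenTxs}_q$ must be traced through the receipt of the corresponding $\textbf{ValidateReq}$.
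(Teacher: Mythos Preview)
Your proposal is correct and follows essentially the same route as the paper: split on whether $C_1=C_2$ (invoking Lemma~\ref{lm:noconflictoneconfig}) or $C_1\sqsubset C_2$ (invoking Lemma~\ref{lm:statetransfercorrectnesstansfervalidation} plus quorum intersection to find a correct validator that already holds $\sigma_1.\textit{sentTxs}$ in its $\textit{seenTxs}$ and hence flags any conflict). Your write-up is in fact more explicit than the paper's on why $C_2$ is installed and active at time $t$ and on how the filtering in the returned set forces $tx_2\notin\textit{txs}_2$, but the underlying argument is the same.
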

\begin{proof}
Let $C_1 = \text{HighestConf}(\sigma_1.h)$ and $C_2 = \text{HighestConf}(\sigma_2.h)$. 
As $\sigma_1$ is a valid certificate for $\textit{txs}_1$, then $C_1$ was active at the moment of time, when the last signature was gathered for $\sigma_1.\textit{validationAcks}$ (otherwise, it would be impossible to collect $\sigma_1.\textit{confirmAcks}$). The same is true for $\sigma_2$, $\textit{txs}_2$ and $C_2$ respectively.

If $C_1 = C_2$, then by Lemma \ref{lm:noconflictoneconfig} transactions from $\textit{txs}_1$ don't conflict with transactions from $\textit{txs}_2$.

Now, we consider the case when $C_1 \not= C_2$. Similarly to the Theorem~\ref{thm:comparability} we can assume that $C_1 \sqsubset C_2$ (w.l.o.g.). 
Let $Q_1 \in \textit{quorums}(C_2)$ be a quorum of processes that validated $\textit{sentTxs}_2$ and whose signatures are in $\sigma_2.\textit{validationAcks}$. 
Let $t$ be the moment of time when first correct process, whose signature is in $\sigma_2.\textit{validationAcks}$, validated $\textit{sentTxs}_2$. 
Correct processes only start validation when the configuration is installed. 
Therefore, by Lemma~\ref{lm:statetransfercorrectnesstansfervalidation}, at time $t$ there was a quorum of processes $Q_2 \in \textit{quorums}(C_2)$ such that for every correct process in $Q_2$: $\textit{sentTxs}_1 \subseteq \textit{seenTxs}$.
As $\textit{quorums}(C)$ is a dissemination quorum system at time $t$, there must be at least one correct process $q \in Q_1 \cap Q_2$. As $q$ is correct and $\textit{sentTxs}_1 \subseteq \textit{seenTxs}$, then $q$ marked all the transactions from $\textit{sentTxs}_2$ that conflict with $\textit{sentTxs}_1$ in accordance with algorithm.
Taking this into account and the fact that given certificates $\sigma_1$ and $\sigma_2$ are valid for $\textit{txs}_1$ and $\textit{txs}_2$ respectively, we obtain, that no transaction $tx_1 \in \textit{txs}_1$ conflicts with any transaction $tx_2 \in \textit{txs}_2$.
\end{proof}

\begin{theorem}
\label{thm:TVsafety}
Our implementation of \textit{TxVal} objects satisfies the properties of Transaction Validation: \emph{TV-Verifiability}, \emph{TV-Validity}, \emph{TV-Inclusion}. 
\end{theorem}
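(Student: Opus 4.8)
The plan is to prove the three properties of Transaction Validation in increasing order of difficulty. \emph{TV-Verifiability} and \emph{TV-Inclusion} follow by inspection of the \code{Validate} operation in Figure~\ref{fig:transfervalidation}, whereas \emph{TV-Validity} --- the statement that the union of all returned verifiable transaction sets is conflict-free and consists of verifiable transactions --- is the substance of the theorem and will be assembled from Lemmas~\ref{lm:noconflictinvalue}, \ref{lm:noconflictoneconfig}, \ref{lm:statetransfercorrectnesstansfervalidation}, and~\ref{lm:noconflictdifconfigs}, which I may assume.

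For \emph{TV-Verifiability}, I would fix a correct process $p$ whose \code{Validate} call returns $\langle \textit{txs}, \sigma_{txs}\rangle$ with $\sigma_{txs} = \langle \textit{sentTxs}, \textit{acks}_1, \textit{acks}_2, \textit{history}, \sigma_{hist}\rangle$ built just before the return, and verify each conjunct of \code{VerifyTransactionSet} (Figure~\ref{fig:verifyingfuncs}). The history component is verifiable because \textit{history} is only ever assigned a value that has passed the \code{VerifyHistory} test in the \textbf{WRB-deliver} handler (or the initial $\{C_{init}\}$ with certificate $\perp$, verifiable by convention). The two \code{ContainsQuorum} conjuncts hold because the operation returns only after \textbf{wait for} \code{ContainsQuorum}$(\textit{acks}_2, \code{HighestConf}(\textit{history}))$ succeeds, and, since $\textit{acks}_2$ is (re)populated only after $\textit{acks}_1$ first reaches a quorum of the same configuration, both sets are reset together by \code{Request} on every restart, and \code{Request} bumps \textit{seqNum} so stale replies are ignored --- hence $\textit{acks}_1$ contains a quorum of $\code{HighestConf}(\textit{history})$ as well. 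The \code{FSVerify} conjuncts and the shape of \textit{txs} hold because a correct process adds an entry to $\textit{acks}_1$ (resp. $\textit{acks}_2$) only after the corresponding \code{FSVerify} succeeds, and \textit{txs} is returned as exactly $\textit{sentTxs}.\code{firsts}() \setminus \textit{acks}_1.\code{getConflictTxs}().\code{firsts}()$. Here I rely on the convention that code blocks run to completion between \textbf{wait for} points, so \textit{history} cannot change between the successful wait and the return. For \emph{TV-Inclusion}, the owner of the argument $tx$ is the correct caller $p$ itself; \code{Validate} first invokes \code{Request}$(\{\langle tx, \sigma_{tx}\rangle\})$, inserting $\langle tx, \sigma_{tx}\rangle$ into \textit{seenTxs} and setting $\textit{sentTxs} \leftarrow \code{CorrectTransactions}(\textit{seenTxs})$. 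Since $p$ never double-spends and conflict is defined only between transactions with a common owner, $tx$ conflicts with nothing in \textit{seenTxs}; thus $tx \in \code{CorrectTransactions}(\textit{seenTxs}) = \textit{sentTxs}$, and this persists across any restart because \textit{seenTxs} only grows and \textit{sentTxs} is always recomputed from it. For the same reason $tx$ is never placed into any conflict set, so it survives the set difference. (If a restart instead triggers \code{CompleteTransferOperation} because $tx$ already appears in the highest delivered configuration, then \code{Validate} does not return a pair and the property is vacuous.)

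For \emph{TV-Validity} I would argue the two halves separately. Conflict-freedom: Lemma~\ref{lm:noconflictinvalue} rules out conflicts inside a single returned verifiable set, and Lemma~\ref{lm:noconflictdifconfigs} --- which splits into the same-configuration case of Lemma~\ref{lm:noconflictoneconfig} and the cross-configuration case handled via Lemma~\ref{lm:statetransfercorrectnesstansfervalidation} --- rules out conflicts between transactions drawn from two different returned verifiable sets, so the whole union is conflict-free. Verifiability of the transactions themselves: for any verifiable transaction set with certificate $\sigma$, the set $\sigma.\textit{sentTxs}$ was signed inside \textbf{ValidateResp} messages by a quorum of $\code{HighestConf}(\sigma.h)$, and, by the \emph{Configuration availability} assumption applied while that (pivotal, hence candidate) configuration was active, this quorum contains at least one correct replica; a correct replica signs such a message only after checking \code{VerifySenders} on the received transactions and only ever adds transactions drawn from \code{CorrectTransactions}, so every transaction in $\sigma.\textit{sentTxs}$, and therefore in the returned set, carries a valid certificate.

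The main obstacle is the cross-configuration conflict-freedom inside \emph{TV-Validity}: this is exactly where the state-transfer protocol and the forward-secure key-update mechanism must be tied together. Fortunately that work is already discharged by Lemmas~\ref{lm:statetransfercorrectnesstansfervalidation} and~\ref{lm:noconflictdifconfigs} --- the \textit{TxVal} analogues of Lemmas~\ref{lm:statetransfercorrectness} and~\ref{lm:keyupdate} --- so within the proof of the theorem the remaining care is bookkeeping: checking that the quorum/availability hypotheses of those lemmas are met by every certificate the algorithm actually emits, and that the restart logic in Figures~\ref{fig:posmain} and~\ref{fig:transfervalidation} never yields a certificate attesting to a superseded configuration, which is where Lemma~\ref{lm:tentativeconfigurations} and the fact that processes only ever act on $\code{HighestConf}(\cdot)$ are used.
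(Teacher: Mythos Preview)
Your proposal is correct and takes essentially the same approach as the paper: \emph{TV-Verifiability} and \emph{TV-Inclusion} by inspection of the implementation, and \emph{TV-Validity} from Lemmas~\ref{lm:noconflictinvalue} and~\ref{lm:noconflictdifconfigs}. The paper's proof is only three sentences and leaves the verifiability half of \emph{TV-Validity} implicit, whereas you spell out each conjunct of \code{VerifyTransactionSet}, the restart bookkeeping for \emph{TV-Inclusion}, and the quorum/correct-replica argument for sender verifiability --- all of which is sound and a useful elaboration of what the paper merely asserts ``follows from the implementation''.
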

\begin{proof}
TV-Verifiability property follows from the implementation. 

TV-Inclusion follows from the fact that a correct process $p$ always adds transaction $tx$ to its local set $\textit{seenTxs}_p$ this set only grows and doesn't contain conflicting transactions issued by $p$ while $p$ is correct.

TV-Validity property follows from Lemma~\ref{lm:noconflictinvalue} and Lemma~\ref{lm:noconflictdifconfigs}.

\end{proof}

\begin{theorem}
\label{thm:noconflict}
There are no two conflicting transactions in any configuration in any verifiable history.
\end{theorem}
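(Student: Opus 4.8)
The plan is to trace a verifiable history down through the two-stage lattice-agreement pipeline to the transaction sets produced by \textit{TxVal}, and then invoke the conflict-freedom lemmas already proved for \textit{TxVal}. Fix a verifiable history $h$ and a configuration $C \in h$. If $C = C_{init}$ then $C = \{tx_{init}\}$ is a singleton and there is nothing to prove, so I would assume $C \neq C_{init}$ from here on.

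First I would peel off the \textit{HistLA} layer. Since $h$ is a verifiable history it is a verifiable output value of \textit{HistLA}, so by ABLA-Validity (Theorem~\ref{thm:blasafety}) $h$ is a join of some set of verifiable input values of \textit{HistLA}; as the join operator of this lattice is union, $h$ is exactly that set of input values. By the parametrization of \textit{HistLA}, each verifiable input value is a singleton $\{C'\}$ with $C'$ satisfying $\code{VerifyConfiguration}$, i.e., a verifiable output value of \textit{ConfigLA}. Hence every element of $h$, and in particular $C$, is itself a verifiable configuration. Peeling off the \textit{ConfigLA} layer in the same way, ABLA-Validity gives $C = \bigcup_{j} txs_j$ where each $txs_j$ is a verifiable input value of \textit{ConfigLA}; and since $\textit{ConfigLA}.\code{VerifyInputValue}$ is by definition $\code{VerifyTransactionSet}$, each $txs_j$ comes with a valid certificate and is therefore a verifiable transaction set of \textit{TxVal}.

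To finish, I would deduce conflict-freedom of $C = \bigcup_j txs_j$ from the \textit{TxVal} lemmas: Lemma~\ref{lm:noconflictinvalue} rules out conflicts between two transactions lying in the same $txs_j$, and Lemma~\ref{lm:noconflictdifconfigs} rules out conflicts between a transaction of $txs_j$ and a transaction of some $txs_k$; together these cover every pair of transactions in $C$, so $C$ contains no two conflicting transactions, as required. The only points needing care are (i) that a verifiable input value of \textit{ConfigLA} is precisely a verifiable transaction set of \textit{TxVal}, so the two lemmas apply verbatim, and (ii) that the join guaranteed by ABLA-Validity is over a finite set of input values (the finite value set recorded in the output certificate), which is what reduces the argument to finitely many pairwise applications of the lemmas. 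Everything else is bookkeeping over the pipeline definitions, so I expect point (i) — reconciling the notions of ``verifiable input'' and ``verifiable transaction set'' across the object boundary — to be the main (minor) obstacle.
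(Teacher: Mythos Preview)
Your proposal is correct and follows essentially the same pipeline-unwinding argument as the paper: trace a verifiable history through \textit{HistLA} and \textit{ConfigLA} via ABLA-Validity down to verifiable transaction sets of \textit{TxVal}, then use the conflict-freedom of those sets. The only cosmetic difference is that the paper invokes TV-Validity via Theorem~\ref{thm:TVsafety} for the last step, whereas you cite its constituent Lemmas~\ref{lm:noconflictinvalue} and~\ref{lm:noconflictdifconfigs} directly; either way the argument is the same.
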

\begin{proof}
All verifiable histories are returned from \textit{HistLA}. The only configurations that are contained in any verifiable history are the ones returned from \textit{ConfigLA}.
At the same time, the only possible way for a transaction to be included in a configuration is to be an element of some verifiable transaction set returned by \textit{TxVal}. 
According to the Theorem~\ref{thm:TVsafety}, the only transactions that are returned from \textit{TxVal} are non-conflicting. Hence, no two conflicting transactions can become elements of a verifiable configuration (verifiable output value of \textit{ConfigLA}), and, consequently, any verifiable history contains only configurations with non-conflicting transactions.
\end{proof}

\begin{theorem}
\label{thm:cryptosafety}
\Name satisfies the \emph{Consistency} property of an asset transfer system. 
\end{theorem}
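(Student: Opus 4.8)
The plan is to obtain \emph{Consistency} as a corollary that assembles the structural results already established. Recall that the property has two conjuncts: (1)~for every process $p$ correct at time $t$, $T_p(t)$ contains only verifiable and non-conflicting transactions; and (2)~for any two processes $p$ and $q$ correct at times $t$ and $t'$ respectively, $T_p(t) \subseteq T_q(t')$ or $T_q(t') \subseteq T_p(t)$. The observation that ties everything together is that $T_p$ is, at all times, either the initial configuration $C_{init}$ or a configuration installed by $p$: the only assignment to $T_p$ is at line~\ref{line:statetransfer:updateinstalledconfig} of the State Transfer Protocol, guarded by $T_p \sqsubset C$ and $C \in \textit{history}$, so $T_p$ only ever grows and only takes installed values. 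In particular $T_p(t)$ is always a \emph{candidate} configuration, since $C_{init}$ belongs to every verifiable history and, by Theorem~\ref{thm:dynamicvalidity}, every installed configuration is a candidate configuration and hence appears in some verifiable history.

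For conjunct~(1), I would argue as follows. Since $T_p(t)$ is a candidate configuration, it occurs inside some verifiable history; by Theorem~\ref{thm:noconflict}, every configuration appearing in a verifiable history consists of non-conflicting transactions, so $T_p(t)$ contains no two conflicting transactions. For verifiability of the individual transactions I would trace the pipeline: the transactions populating any verifiable configuration come exactly from the verifiable transaction sets returned by \textit{TxVal} (the only valid inputs to \textit{ConfigLA}), and by construction \textit{TxVal} only ever returns transactions that passed the \code{CorrectTransactions}/\code{VerifySenders} check, i.e., transactions equipped with a valid certificate under \code{VerifySender}. Hence every $tx \in T_p(t)$ is verifiable, which together with the previous sentence gives conjunct~(1).

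For conjunct~(2), both $T_p(t)$ and $T_q(t')$ are candidate configurations by the observation above, and Lemma~\ref{lm:candidateconfigurations} states that all candidate configurations are comparable with respect to $\sqsubseteq\, =\, \subseteq$. Therefore $T_p(t) \subseteq T_q(t')$ or $T_q(t') \subseteq T_p(t)$, as required, and the two conjuncts together establish the theorem.

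I do not expect a genuine obstacle here: Theorem~\ref{thm:noconflict}, Theorem~\ref{thm:dynamicvalidity}, Lemma~\ref{lm:candidateconfigurations}, and the correctness of \textit{TxVal}, \textit{ConfigLA}, \textit{HistLA} (Theorems~\ref{thm:TVsafety}, \ref{thm:blasafety}, \ref{thm:histories}) already do all the heavy lifting. The only point requiring a modicum of care is the bookkeeping claim that $T_p$ never takes a value other than $C_{init}$ or an installed configuration, which follows by a straightforward inspection of Figure~\ref{fig:statetransfer}; everything else is a direct appeal to the cited results.
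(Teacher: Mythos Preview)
Your proposal is correct and follows essentially the same approach as the paper: both arguments use that $T_p$ is always an installed (hence candidate) configuration, invoke Theorem~\ref{thm:noconflict} for the non-conflicting part, appeal to the pipeline/implementation for verifiability of individual transactions, and use comparability of installed configurations for the second conjunct. Your write-up is in fact slightly more explicit than the paper's in spelling out the bookkeeping claim that $T_p$ only ever equals $C_{init}$ or an installed configuration and in citing Lemma~\ref{lm:candidateconfigurations} and Theorem~\ref{thm:dynamicvalidity} directly, but the substance is the same.
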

\begin{proof}
If a correct process access its local copy of the state $T_p$, then it reads the last configuration it installed, that, according to implementation, is an element of a local verifiable history. 
From Theorem~\ref{thm:noconflict} we know that any configuration from any verifiable history doesn't contain conflicting transactions.
According to the implementation the only configurations that are installed are the candidate configurations.
Thus, when a correct process accesses its $T_p$, it contains only non conflicting transactions.

The fact that all transactions that are contained in $T_p$ of a correct process are verifiable follows from the implementation of the algorithm.

The fact that for any two correct processes $p$ and $q$ at times $t$ and $t'$ either $T_p(t) \subseteq T_q(t')$ or $T_q(t') \subseteq T_p(t)$ follows from the fact that all installed configurations are related by containment.
\end{proof}

\myparagraph{Monotonicity.} The monotonicty property says that as long as process $p$ is correct its local estimate of confirmed transactions $T_p$ can only grow with time.
The proof of this property is simple and short.

\begin{theorem}
\label{thm:cryptoconsistency}
\Name satisfies the \emph{Monotonicity} property of an asset-transfer system.
\end{theorem}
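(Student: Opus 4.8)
The plan is to prove Monotonicity by a direct invariant argument on the single program variable $T_p$, not by appealing to any abstract lattice-agreement property. The statement to establish is that for a correct process $p$ and any $t<t'$ we have $T_p(t)\subseteq T_p(t')$, and since the configuration lattice fixes $\sqsubseteq\;=\;\subseteq$, it suffices to show that the value of $T_p$ never shrinks under $\subseteq$ as $p$ executes its algorithm.

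First I would locate every statement that writes $T_p$. Inspecting the pseudocode, $T_p$ is initialized to $C_{init}$ and is thereafter modified in exactly one place: the assignment $T_p\leftarrow C$ at line~\ref{line:statetransfer:updateinstalledconfig} of the State Transfer Protocol (Figure~\ref{fig:statetransfer}), inside the handler for $\textbf{WURB-deliver}\langle\textbf{UpdateComplete},C\rangle$. That assignment sits inside the block guarded by the test $T_p\sqsubset C$, while the neighbouring updates in that handler touch only $C_{cur}$ and trigger $\text{InstalledConfig}(C)$. Hence whenever $T_p$ is updated, the new value $C$ is a \emph{proper} superset of the previous value, so each individual write strictly enlarges $T_p$ under $\subseteq$ and no statement ever removes a transaction from it. This is the concrete counterpart of the proof outline's remark that $T_p$ coincides with the last installed configuration and that correct processes install only greater configurations over time.

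It remains to lift this per-step observation to arbitrary times $t<t'$. Here I would invoke the model's monotonic-failure assumption: a process that becomes Byzantine stays Byzantine forever, so if $p$ is correct at $t'$ then it was correct at every earlier instant, in particular throughout $[t,t']$. Consequently every change to $T_p$ occurring in $[t,t']$ is a faithful execution of the guarded assignment above, and each such change only grows $T_p$; composing the finitely many changes in the interval yields $T_p(t)\subseteq T_p(t')$. The argument carries essentially no technical obstacle; the only points that require care are confirming that line~\ref{line:statetransfer:updateinstalledconfig} is genuinely the \emph{unique} write to $T_p$ (so that the single-statement analysis is complete), and that correctness at $t'$ propagates backward so that all intervening updates are algorithm-compliant rather than adversarial.
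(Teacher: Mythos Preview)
Your argument is correct and follows essentially the same approach as the paper: both proofs rest on the observation that $T_p$ is only ever overwritten by a strictly larger configuration (the paper phrases this as ``correct processes install only greater configurations''), so $T_p$ is monotone while $p$ remains correct. Your version is simply more explicit in pinning down line~\ref{line:statetransfer:updateinstalledconfig} as the unique write and the guard $T_p\sqsubset C$ as the reason for growth, and in spelling out the backward propagation of correctness; these are exactly the details the paper leaves implicit.
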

\begin{proof}
This statement follows from the fact that $T_p$ is the last configuration installed by $p$ (as long as $p$ is correct) 
and the correct process only installs greater configurations (w.r.t. $\subseteq$) over time.
Thus, as long as $p$ is correct, $T_p$ can only grow with time.
\end{proof}

\myparagraph{Validity.} In this subsection we focus on the validity property. Our goal is to prove that if an operation $\code{Transfer}(tx, \sigma_{tx})$ is invoked by a forever-correct process $p$, then eventually $p$ includes transaction $tx$ in its $T_p$.

Note, that even Byzantine processes may obtain verifiable objects (e.g. verifiable transaction set) from the protocols that may differ from the presented implementations.
However, it should be noted, that due to the implementation of corresponding verifying functions the guarantees for such objects are similar to the ones that are returned by correct processes from implemented protocols (except, maybe TV-Inclusion and ABLA-Stability).
Taking this into account, for simplicity we assume that if a Byzantine process obtained some verifiable object it had done it via the corresponding protocol implemented in this work.

\begin{lemma}
\label{lm:TVev}
If a forever-correct process $p$ invokes $\textit{TxVal}.\text{Validate}(tx, \sigma_{tx})$ (at line~\ref{line:posmain:validate}), then for some configuration $C'$ every verifiable set of transactions $\textit{txs}$ returned from $\textit{TxVal}.\text{Validate}\atremove{(\dots)}$ with corresponding certificate $\sigma$, such that $C' \sqsubset \text{HighestConf}(\sigma.h)$ contains $tx$, i.e. $tx \in \textit{txs}$.
\end{lemma}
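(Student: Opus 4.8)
The plan is to choose $C'$ to be the configuration attached to the \emph{first} verifiable transaction set whose $\textit{sentTxs}$ component already contains $tx$, and then to show, via the state transfer protocol, that from that point on every verifiable transaction set certified strictly above $C'$ is forced to contain $tx$. The first step I would take is to reduce the whole statement to controlling the $\textit{sentTxs}$ component. By the definition of conflicting transactions, any transaction conflicting with $tx$ must have $p$ as its owner; since $p$ is forever-correct it never issues conflicting transactions, and since $p$'s signing key is never exposed, no verifiable transaction other than $tx$ can be owned by $p$. Hence no correct replica ever places $tx$ into a $\textit{conflictTxs}$ set, and — invoking the paper's convention that a verifiable object obtained by a Byzantine process is produced through the protocol, so that the $\textit{conflictTxs}$ lists inside any valid certificate consist of verifiable, genuinely conflicting transactions — no valid certificate lists $tx$ in any $\textit{conflictTxs}$. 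Consequently, for every verifiable $\langle txs,\sigma\rangle$ one has $tx\in txs$ iff $tx\in\sigma.\textit{sentTxs}$.

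Next I would establish the seed. After invoking $\textit{TxVal}.\code{Validate}(tx,\sigma_{tx})$ at line~\ref{line:posmain:validate}, $p$ inserts $tx$ into $\textit{seenTxs}_p$; by the non-conflict observation and monotonicity, $tx$ stays in $\textit{correctTxs}_p$, hence in $\textit{sentTxs}_p$, across all restarts, so every $\textbf{ValidateReq}$ that $p$ ever sends carries $tx$. Using the already-established fact that a forever-correct process restarts \textit{TxVal} only finitely often, I would let $C'$ be the highest configuration in which $p$ ever executes a phase of this \code{Validate}; if no verifiable transaction set is ever certified strictly above $C'$ the claim is vacuous, so assume otherwise. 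Then, using the Configuration availability assumption to obtain quorum responses in both phases and the non-conflict observation to keep $tx$ out of every $\textit{conflictTxs}$, I would argue that $p$ either completes \code{Validate} in $C'$, returning $\langle txs_p,\sigma_p\rangle$ with $\text{HighestConf}(\sigma_p.h)=C'$ and $tx\in\sigma_p.\textit{sentTxs}$, or is aborted by \code{CompleteTransferOperation} on a delivered history whose highest configuration contains $tx$ — in which case that configuration, being a join of verifiable transaction sets, already exhibits a verifiable $\langle txs_0,\sigma_0\rangle$ with $tx\in\sigma_0.\textit{sentTxs}$. Either way there is a verifiable $\langle txs_0,\sigma_0\rangle$ with $\text{HighestConf}(\sigma_0.h)\sqsubseteq C'$ and $tx\in\sigma_0.\textit{sentTxs}$.

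For the propagation step, take any verifiable $\langle txs,\sigma\rangle$ with $C'\sqsubset C:=\text{HighestConf}(\sigma.h)$. I would argue exactly as in the proof of Lemma~\ref{lm:noconflictinvalue} (using Lemma~\ref{lm:keyupdate} on the pivotal configuration $C$) that $C$ is installed and active at the time the last signature of $\sigma.\textit{validationAcks}$ is produced, and that these signers form a quorum of $C$. Since $C'\sqsubset C$, Lemma~\ref{lm:statetransfercorrectnesstansfervalidation} applied to $\langle txs_0,\sigma_0\rangle$ yields a quorum $Q_0$ of $C$ all of whose correct members hold $\sigma_0.\textit{sentTxs}\supseteq\{tx\}$ in $\textit{seenTxs}$ from the moment $C$ is installed. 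Intersecting $Q_0$ with the ack-quorum of $\sigma$ inside the active candidate configuration $C$ (combining the two $>2M/3$ stake bounds with Configuration availability) gives a replica $q$ that is correct when it signs its $\textbf{ValidateResp}$ for $C$; checking the $\textbf{ValidateReq}$ handler shows a correct replica signs a $\textbf{ValidateResp}$ in $C$ only after it has installed $C$, so at that point $tx\in\textit{seenTxs}_q$, hence $tx\in\textit{correctTxs}_q$, which forces $\sigma.\textit{sentTxs}$ — being the union of the set $q$ received with $\textit{correctTxs}_q$ — to contain $tx$. By the reduction in the first paragraph, $tx\in txs$, completing the proof.

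I expect the seed step to be the main obstacle: producing a verifiable transaction set whose $\textit{sentTxs}$ contains $tx$ forces one to interface with the liveness machinery, in particular to rule out a forever-correct $p$ getting stuck forever in a configuration that has silently been superseded without $p$ learning it, and to dispatch the run in which $p$'s \code{Validate} is aborted rather than returning a value. The timing bookkeeping in the propagation step — that the common correct replica signs its $\textbf{ValidateResp}$ only \emph{after} installing $C$, and therefore only after state transfer has delivered $tx$ to it — also has to be reconciled carefully with the wait-conditions of the $\textbf{ValidateReq}$ handler.
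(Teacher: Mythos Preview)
Your reduction to controlling the $\textit{sentTxs}$ component and the propagation step via Lemma~\ref{lm:statetransfercorrectnesstansfervalidation} plus quorum intersection are sound, and indeed more detailed than what the paper writes for the case where $p$'s \code{Validate} actually returns. The gap is in your seed step. You invoke ``the already-established fact that a forever-correct process restarts \textit{TxVal} only finitely often'', but in the paper this is \emph{not} established prior to Lemma~\ref{lm:TVev}. Finiteness of restarts follows from Lemma~\ref{lm:historywithouttx} (a restart is triggered by a \textbf{NewHistory} event whose highest configuration does not contain $tx$), which depends on Lemma~\ref{lm:vtswithouttx}, which in turn depends on the present lemma. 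So the argument is circular. Your own closing paragraph already puts its finger on the sore spot: without finite restarts you cannot define $C'$ as ``the highest configuration in which $p$ ever executes a phase'', and you have not excluded the run in which $p$ is stuck forever, perpetually restarted, never returning and never being aborted via \code{CompleteTransferOperation}.

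The paper avoids this circularity by treating the ``stuck'' case directly, without any seed. If $p$ never returns, then $p$ keeps sending $\langle\textbf{ValidateReq},\textit{sentTxs},\ldots\rangle$ with $tx\in\textit{sentTxs}$ to the members of every greatest configuration it learns about. The key ingredient that replaces your seed here is the standing model assumption that $\Pi$ is \emph{finite}: eventually every correct process that will ever be a member of any configuration has received one of $p$'s requests and has placed $tx$ into its $\textit{seenTxs}$, hence into $\textit{correctTxs}$ (by your own non-conflict observation). From that point on, every \textbf{ValidateResp} signed by a correct replica carries $tx$, so any verifiable transaction set certified in a later configuration must contain $tx$. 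In short, the paper trades ``finitely many restarts'' for ``finitely many potential members'', which breaks the dependency cycle.
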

\begin{proof}
In case a forever-correct process $p$ returns a verifiable transaction set $txs$ with a corresponding certificate $\sigma$ from $\textit{TxVal}.\text{Validate}(tx, \sigma_{tx})$, the proof of this fact follows from Lemma~\ref{lm:statetransfercorrectnesstansfervalidation}.

If $p$ is ``stuck'' and it does not return from $\textit{TxVal}.\text{Validate}(tx, \sigma_{tx})$, since $p$ is a forever-correct process it constantly tries to reach every ``greatest'' configuration it learns about by sending a message $\langle \textbf{ValidateReq}, \textit{sentTxs}, \textit{sn} \rangle$, where $tx\in \textit{sentTxs}$.
As the number of the unique members in the system is limited, starting from some configuration $C'$ every correct member will include $tx$ in its local estimate $\textit{correctTxs}$.
From this moment every verifiable transaction set $\textit{txs}$, obtained by any process must include $tx$.
\end{proof}

Intuitively, the next lemma states that the number of unique verifiable transaction sets produced in some configuration $C$ is finite.
\begin{lemma}
\label{lm:vtsfiniteinconfig}
Let $C$ be some configuration and let $T$ be a set defined as follows:
$T = \{ \langle \textit{txs}, \sigma_{txs} \rangle\ | $ $\textit{txs}$ is a verifiable transaction set, $\sigma_{txs}$ is a matching certificate, such that $\text{HighestConf}(\sigma_{txs}.h) = C \}$.
Let $T'$ be a set defined as $T' = \{ \textit{txs}~|~\langle \textit{txs}, \sigma_{txs} \rangle \in T \}$. Then, $T'$ is a finite set.
\end{lemma}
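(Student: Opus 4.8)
The plan is to produce a single finite set $\mathcal F(C)\subseteq\tT$ containing every transaction that can ever occur in a verifiable transaction set $\textit{txs}$ with $\text{HighestConf}(\sigma_{txs}.h)=C$. Once we have this, we are done: every $\textit{txs}\in T'$ satisfies $\textit{txs}\subseteq\mathcal F(C)$, hence $T'\subseteq 2^{\mathcal F(C)}$, which is finite. The right candidate is $\mathcal F(C)=\{tx\in\tT\mid tx.D\subseteq C\}$, the set of (valid) transactions all of whose dependencies already lie in $C$.

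First I would check that $\mathcal F(C)$ is finite. A configuration is a finite set of transactions, so $C$ has only finitely many subsets, hence only finitely many possible dependency sets $D$. For a fixed $D\subseteq C$ and a fixed owner $p\in\Pi$ (and $\Pi$ is finite by assumption), validity of $tx=(p,\tau,D)$ pins down $\textit{value}(tx)=\sum_{t\in D}t.\tau(p)$, a fixed non-negative integer; the number of transfer maps $\tau:\Pi\to\mathbb{Z}^+_0$ with a prescribed sum is finite. Thus $\mathcal F(C)$ is a finite union of finite sets.

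The core step is the inclusion $\textit{txs}\subseteq\mathcal F(C)$ for every verifiable $\textit{txs}$ with a certificate $\sigma$ such that $\text{HighestConf}(\sigma.h)=C$. By $\code{VerifyTransactionSet}$ we have $\textit{txs}\subseteq\sigma.\textit{sentTxs}.\text{firsts}()$; a quorum of $C$ has forward-secure-signed $\langle\textbf{ValidateResp},\sigma.\textit{sentTxs},\cdot\rangle$ with timestamp $\textit{height}(C)$ (these signatures form $\sigma.\textit{validationAcks}$); and $\sigma.\textit{confirmAcks}$ is a further quorum of timestamp-$\textit{height}(C)$ signatures over $\langle\textbf{ConfirmResp},\sigma.\textit{validationAcks}\rangle$, which must be produced after every signature of $\sigma.\textit{validationAcks}$. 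By Lemma~\ref{lm:keyupdate}, no quorum of $C$ can sign with timestamp $\textit{height}(C)$ once $C$ is superseded, so, exactly as in the proof of Lemma~\ref{lm:noconflictinvalue}, at the moment the last signature of $\sigma.\textit{validationAcks}$ was created, $C$ was still an active candidate configuration; therefore, by Configuration availability (which rules out quorums composed solely of Byzantine members), the signing quorum contained a process $q$ that was correct when it signed $\langle\textbf{ValidateResp},\sigma.\textit{sentTxs},\cdot\rangle$. A correct process emits such a signature only while $C$ is the highest configuration in its local $\textit{history}$; at that point every configuration it has ever installed is $\subseteq C$ (since $T_p\sqsubseteq C_{cur}\sqsubseteq\text{HighestConf}(\textit{history})=C$, and all candidate/installed configurations form a chain by Lemma~\ref{lm:candidateconfigurations}), so the only transactions it regards as confirmed — hence the only ones whose dependencies it is willing to treat as valid rather than ignore — have their dependency set inside $C$. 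Since the quorum signed precisely the set $\sigma.\textit{sentTxs}$, every transaction in it, in particular every transaction surviving into $\textit{txs}$, has its dependencies in $C$, i.e. $\textit{txs}\subseteq\mathcal F(C)$.

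I expect the main obstacle to be exactly that last endorsement argument: making precise that a correct replica operating in configuration $C$ never vouches for a transaction whose dependency set reaches above $C$. This rests on the implementation convention (left implicit in the pseudocode but stated in the text) that transactions with not-yet-confirmed dependencies are considered invalid and silently dropped, together with the bookkeeping invariant $T_p\sqsubseteq C_{cur}\sqsubseteq\text{HighestConf}(\textit{history})$. Once those are nailed down, the remaining steps — finiteness of $\mathcal F(C)$ and $T'\subseteq 2^{\mathcal F(C)}$ — are routine.
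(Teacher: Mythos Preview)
Your argument is correct and rests on the same combinatorial fact the paper uses---that only finitely many valid transactions can be ``in play'' relative to a fixed configuration $C$---but you package it quite differently. The paper's proof is a three-line contradiction: if $T'$ were infinite, then (since each verifiable $\textit{txs}$ requires a quorum of $C$ to sign while $C$ is active and installed) $C$ must have been active and installed, yet ``the number of well-formed and non-conflicting transactions in the system while $C$ is active and installed is finite, due to the fact that the number of participants is finite and total system stake is also finite.'' That is the whole argument; the dependency constraint you isolate (that a correct replica in $C$ only vouches for transactions whose dependencies are already confirmed, hence lie in some installed configuration $\subseteq C$) is left entirely implicit in the phrase ``in the system.''

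Your version is direct rather than by contradiction, and more explicit: you name the finite superset $\mathcal F(C)=\{tx\mid tx.D\subseteq C\}$, verify its finiteness by counting (finite $\Pi$, finitely many $D\subseteq C$, finitely many $\tau$ with a given integer sum), and then pin the inclusion $\textit{txs}\subseteq\mathcal F(C)$ on the correct-signer argument borrowed from Lemma~\ref{lm:noconflictinvalue} together with the invariant $T_p\sqsubseteq C_{cur}\sqsubseteq\text{HighestConf}(\textit{history})$. What this buys you is a proof that actually names the mechanism (the dependency-confirmation convention) doing the work, whereas the paper's proof is terse to the point of hiding it. The price is that you lean more visibly on an implementation convention that is stated in prose but not in the pseudocode; you flag this correctly as the main obstacle, and it is indeed the only non-routine step.
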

\begin{proof}
Let us suppose that $T'$ is an infinite set. 
Considering this and the fact that any $\textit{txs} \in T'$ is verifiable, we obtain that $C$ was active and installed from some moment of time. 
However, the number of well-formed and non-conflicting transactions in the system  while $C$ is active and installed is finite, due to the fact that number of participants is finite and total system stake is also finite.
From the fact that set of such transactions is finite, we obtain that set $T'$ is finite as well. Contradiction.
\end{proof}

\begin{lemma}
\label{lm:vtswithouttx}
If a forever-correct process invokes $\code{Transfer}(\textit{tx}, \sigma_{tx})$, then number of verifiable transaction sets $\textit{txs}$, such that $tx \not\in \textit{txs}$ is finite. 
\end{lemma}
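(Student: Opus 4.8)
The plan is to pigeonhole the ``bad'' verifiable transaction sets (those not containing $tx$) into finitely many configurations, using Lemma~\ref{lm:TVev} to bound which configurations can witness such a set and Lemma~\ref{lm:vtsfiniteinconfig} to bound how many sets each configuration can witness.

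First I would note that, being forever-correct and invoking $\code{Transfer}(tx,\sigma_{tx})$, process $p$ invokes $\textit{TxVal}.\code{Validate}(tx,\sigma_{tx})$ at line~\ref{line:posmain:validate}. Lemma~\ref{lm:TVev} then yields a configuration $C'$ such that every verifiable transaction set $\textit{txs}$ carrying a certificate $\sigma$ with $C'\sqsubset\text{HighestConf}(\sigma.h)$ already contains $tx$. Moreover the conclusion of Lemma~\ref{lm:TVev} is monotone in $C'$: if it holds for $C'$ and $C'\sqsubseteq C''$, then for any $\textit{txs},\sigma$ with $C''\sqsubset\text{HighestConf}(\sigma.h)$ we have $C'\sqsubset\text{HighestConf}(\sigma.h)$ and hence $tx\in\textit{txs}$, so it holds for $C''$ as well. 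Hence I may take $C'$ to be a candidate configuration (by its construction in the proof of Lemma~\ref{lm:TVev} it is a configuration learned by all correct members, so it lies below some candidate configuration).

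Next I would fix any verifiable transaction set $\textit{txs}$ with $tx\notin\textit{txs}$ and any valid certificate $\sigma$ for it; since $\sigma.h$ is a verifiable history, $C_\sigma:=\text{HighestConf}(\sigma.h)$ is a candidate configuration and is therefore comparable with $C'$ by Lemma~\ref{lm:candidateconfigurations}. By the contrapositive of Lemma~\ref{lm:TVev}, $C'\sqsubset C_\sigma$ is impossible, so $C_\sigma\sqsubseteq C'$. The candidate configurations below $C'$ form a chain (Lemma~\ref{lm:candidateconfigurations}), and since $\textit{height}(C)=|C|$ strictly increases along any chain, there are at most $|C'|+1$ of them.

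Finally, for each candidate configuration $C\sqsubseteq C'$, Lemma~\ref{lm:vtsfiniteinconfig} shows that only finitely many verifiable transaction sets admit a certificate whose highest configuration equals $C$. Every verifiable $\textit{txs}$ with $tx\notin\textit{txs}$ falls into one of these finitely many buckets, so the collection of such $\textit{txs}$ is a finite union of finite sets and therefore finite. I expect the only delicate point to be the bookkeeping around ``verifiable'' versus ``produced by the protocol'': we need that a verifiable transaction set always carries a certificate whose top element is a genuine candidate configuration, so that Lemmas~\ref{lm:candidateconfigurations} and~\ref{lm:vtsfiniteinconfig} apply; this rests on the standing simplifying assumption that verifiable objects, even those held by Byzantine processes, were obtained through the implemented protocols. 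Everything else is routine.
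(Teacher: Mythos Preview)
Your proposal is correct and follows essentially the same approach as the paper: invoke Lemma~\ref{lm:TVev} to obtain the threshold configuration $C'$, argue that any ``bad'' verifiable transaction set must have its highest certificate configuration at or below $C'$, observe that there are only finitely many such configurations, and apply Lemma~\ref{lm:vtsfiniteinconfig} to each. Your additional bookkeeping (monotonicity in $C'$, replacing $C'$ by a candidate configuration, and invoking Lemma~\ref{lm:candidateconfigurations} to get comparability of $C_\sigma$ with $C'$) makes explicit steps that the paper's proof leaves implicit, but the structure is the same.
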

\begin{proof}
The proof of this fact follows from the Lemma~\ref{lm:TVev} and the Lemma~\ref{lm:vtsfiniteinconfig}.
As a forever-correct process invoked  $\code{Transfer}(\textit{tx}, \sigma_{txs})$ operation, it later invoked $\textit{TxVal}.\text{Validate}(tx, \sigma_{tx})$ at line~\ref{line:posmain:validate} . 
By Lemma~\ref{lm:TVev}, for some configuration $C'$ any verifiable transaction set $\textit{txs}$ with corresponding certificate $\sigma.h$ such that $C' \sqsubset \textit{HighestConf}(\sigma.h)$ contains $tx$. By Lemma~\ref{lm:vtsfiniteinconfig}, number of verifiable transaction sets obtained in any configuration $C: C \sqsubseteq C'$ is finite. 
Also, the number of such configurations $C$ is finite. 
Hence, the number of all verifiable transaction sets $\textit{txs}$ such that $tx \not\in \textit{txs}$ is finite.
\end{proof}

\begin{lemma}
\label{lm:configwithouttx}
If a forever-correct process invokes $\code{Transfer}(\textit{tx}, \sigma_{tx})$, then number of verifiable configurations $C$, such that $tx \not\in C$ is finite.
\end{lemma}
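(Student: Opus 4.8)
The plan is to reduce this statement to Lemma~\ref{lm:vtswithouttx} using the structure of the pipeline (Figure~\ref{fig:histories_production}) together with \emph{ABLA-Validity} for \textit{ConfigLA}. Recall that a verifiable configuration is, by definition, a verifiable output value of the \textit{ConfigLA} object, that \textit{ConfigLA} is instantiated with $(\Lat,\sqsubseteq)=(2^{\tT},\subseteq)$ so that $\sqcup=\cup$, and that $\code{VerifyInputValue}$ of \textit{ConfigLA} is exactly $\code{VerifyTransactionSet}$. Hence the only verifiable input values accepted by \textit{ConfigLA} are verifiable transaction sets in the sense of the TV object. (As noted just before Lemma~\ref{lm:TVev}, even verifiable objects fabricated by Byzantine processes carry the same guarantees as those produced by the protocol, so this characterization of the verifiable inputs is not compromised.)

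First I would take an arbitrary verifiable configuration $C$ with $tx\notin C$. By \emph{ABLA-Validity}, $C$ is the join of some set of verifiable input values of \textit{ConfigLA}, i.e.\ $C=\bigcup_{i=1}^{m}\textit{txs}_i$ for verifiable transaction sets $\textit{txs}_1,\dots,\textit{txs}_m$. Since the join operator here is set union, $tx\notin C$ immediately forces $tx\notin\textit{txs}_i$ for every $i$. Thus every verifiable configuration not containing $tx$ is a (finite) union of verifiable transaction sets, none of which contains $tx$.

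Next I would invoke Lemma~\ref{lm:vtswithouttx}: since the forever-correct process invoked $\code{Transfer}(tx,\sigma_{tx})$, the set $\mathcal{S}=\{\textit{txs} : \textit{txs}\text{ is a verifiable transaction set and }tx\notin\textit{txs}\}$ is finite, say $|\mathcal{S}|=k$. By the previous paragraph, every verifiable configuration $C$ with $tx\notin C$ satisfies $C=\bigcup_{\textit{txs}\in \mathcal{S}'}\textit{txs}$ for some $\mathcal{S}'\subseteq\mathcal{S}$, hence $C$ ranges over a set of size at most $2^{k}$. Therefore the number of verifiable configurations $C$ with $tx\notin C$ is finite, as claimed.

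The only genuinely delicate point is the reduction step: one must be sure that \emph{ABLA-Validity} really does express every verifiable output of \textit{ConfigLA} as a union of \emph{verifiable} inputs, and that in this instantiation those inputs are precisely the verifiable transaction sets of \textit{TxVal}. Both facts are built into the parameterization of \textit{ConfigLA} and into Theorem~\ref{thm:blasafety}, so once they are spelled out the counting argument is routine. Everything else is bookkeeping about finiteness of unions over a finite family.
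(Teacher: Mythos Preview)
Your proposal is correct and follows essentially the same approach as the paper: use \emph{ABLA-Validity} for \textit{ConfigLA} to write any verifiable configuration as a union of verifiable transaction sets, observe that $tx\notin C$ forces $tx$ to be absent from each constituent set, and then invoke Lemma~\ref{lm:vtswithouttx} to conclude finiteness. Your version is simply more explicit (spelling out the $2^k$ bound and the parameterization of \textit{ConfigLA}), but the logical structure is identical to the paper's proof.
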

\begin{proof}
If a verifiable configuration $C$ doesn't contain $tx$, then it was obtained as a merge of verifiable transaction sets that don't contain $tx$.
By Lemma~\ref{lm:vtswithouttx} the number of the verifiable transaction sets $txs$ that don't contain $tx$ is finite.
Consequently, the total number of verifiable configurations $C$, such that $tx \not\in C$ is finite.  
\end{proof}

\begin{lemma}
\label{lm:historywithouttx}
If a forever-correct process invokes $\code{Transfer}(\textit{tx}, \sigma_{tx})$, then number of verifiable histories $H$, such that $\not\exists C \in H: tx \in C$ is finite.
\end{lemma}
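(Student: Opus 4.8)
The plan is to reduce this statement to Lemma~\ref{lm:configwithouttx} via the structure of the \textit{HistLA} object and its \emph{ABLA-Validity} property. Recall that \textit{HistLA} is parameterized so that its only verifiable input values are singletons $\{C\}$ where $C$ is a verifiable configuration (a verifiable output value of \textit{ConfigLA}), and that verifiable histories are exactly the verifiable output values of \textit{HistLA}. Hence I would first observe that, by \emph{ABLA-Validity} applied to \textit{HistLA}, every verifiable history $H$ is a join (here, a union) of some set of verifiable input values $\{C_1\},\dots,\{C_k\}$, so $H = \{C_1,\dots,C_k\}$ with each $C_i$ a verifiable configuration.

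Next I would specialize to a verifiable history $H$ with $\nexists C \in H : tx \in C$. For such an $H$, none of the configurations $C_1,\dots,C_k$ in the decomposition above contains $tx$, so each $C_i$ is a verifiable configuration with $tx \notin C_i$. By Lemma~\ref{lm:configwithouttx}, under the hypothesis that a forever-correct process invokes $\code{Transfer}(tx,\sigma_{tx})$, the set $X$ of all verifiable configurations not containing $tx$ is finite; say $|X| = m$. Then $H \subseteq X$ (as a set of configurations), and therefore $H$ is one of at most $2^m$ subsets of $X$. I would conclude that the number of verifiable histories $H$ with no configuration containing $tx$ is at most $2^m$, hence finite.

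The argument is essentially a one-step chaining of the previous lemma with the validity property of \textit{HistLA}, so there is no real obstacle; the only point requiring a little care is the justification that every verifiable history whose configurations all avoid $tx$ is built solely from verifiable configurations avoiding $tx$. This follows because the \code{VerifyHistory} function (equivalently, $\textit{HistLA}.\code{VerifyOutputValue}$) certifies that $H = \bigsqcup \textit{values}.\text{firsts}()$ for a set $\textit{values}$ of verified singleton inputs, each of which passes $\code{VerifyInputValue}$, i.e., $\code{VerifyConfiguration}$; combined with the standing convention (stated earlier) that verifiable objects obtained even by Byzantine processes are treated as if produced through the corresponding protocol, this gives the required structural decomposition and the bound $H \subseteq X$.
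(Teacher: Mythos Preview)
Your proposal is correct and follows essentially the same approach as the paper: both argue that a verifiable history avoiding $tx$ is a union (join) of verifiable configurations that avoid $tx$, invoke Lemma~\ref{lm:configwithouttx} to bound the latter by a finite set, and conclude. You are simply more explicit in spelling out the $2^m$ bound and in naming \emph{ABLA-Validity} as the justification for the decomposition, whereas the paper just says ``obtained as a merge of verifiable configurations that don't contain $tx$''.
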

\begin{proof}
The proof of this fact is similar to the proof of Lemma~\ref{lm:configwithouttx}.
If a verifiable history $H$ doesn't contain configuration $C$ such that $tx \in C$, then it was obtained as a merge of verifiable configurations that don't contain $tx$.
By Lemma~\ref{lm:configwithouttx} the number of the verifiable configurations $C$ that don't contain $tx$ is finite.
As a consequence,  the total number of verifiable histories $H$, such that $\not\exists C \in H: tx \in C$ is finite.
\end{proof}

\begin{lemma}
\label{lm:forevercorrectconfig}
If a configuration $C$ is installed, then some forever-correct process will eventually install a configuration $C'$ such that $C \sqsubseteq C'$.
\end{lemma}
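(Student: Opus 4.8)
The plan is to split on whether the configuration $C$ is ever \emph{superseded}, settling the ``stable'' case first and then reducing the general case to it. Note first that an installed configuration is necessarily a candidate (Theorem~\ref{thm:dynamicvalidity}) and, not being tentative (Lemma~\ref{lm:tentativeconfigurations}), it is \emph{pivotal}.

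\emph{Case 1: $C$ is never superseded.} Since $C$ is installed, the handler for $\langle\textbf{UpdateComplete},C\rangle$ fired at some process $p_0$ that was correct at that time: it delivered this message via WURB from a quorum $Q\in\textit{quorums}(C)$, had $C\in\textit{history}$, and set $T_{p_0}\leftarrow C$; in particular a quorum of members of $C$ had previously broadcast $\langle\textbf{UpdateComplete},C\rangle$ via WURB. Because $C$ is never replaced by a greater configuration, the uniformity guarantee of WURB (``if $C$ is never superseded and a correct process delivers $m$, then every forever-correct process eventually delivers $m$'') applies, so every forever-correct process eventually delivers $\langle\textbf{UpdateComplete},C\rangle$ from such a quorum. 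Next I would show that every forever-correct process eventually has $C$ in its local history: $C$ is a candidate, so some correct process triggered $\text{NewHistory}(h)$ with $C\in h$, and since verifiable histories propagate via the gossip-based weak reliable broadcast (each correct process re-gossips them), every forever-correct process eventually holds a verifiable history that, being comparable to $h$ (Theorem~\ref{thm:comparability}), contains $h$ and hence (Lemma~\ref{lm:pivotalconfigurations}) contains $C$. Thus the \textbf{wait for} $C\in\textit{history}$ in the handler returns at every forever-correct process, which then installs $C$ unless $C\sqsubseteq T_p$ already; since in Case~1 nothing strictly above $C$ is ever installed, $T_p\sqsubseteq C$, so the process installs exactly $C$ (or already had). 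Either way, some forever-correct process installs a configuration $C'$ with $C\sqsubseteq C'$.

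\emph{Case 2: $C$ is superseded.} Then a correct process installs some $C_1\sqsupset C$; if $C_1$ is in turn superseded a correct process installs some $C_2\sqsupset C_1$, and so on. All installed configurations are comparable, so if this chain is finite it has a maximum $C^\star$, which is never superseded and satisfies $C\sqsubseteq C^\star$; applying Case~1 to $C^\star$ concludes. If the chain is infinite, I would argue directly that a forever-correct process reaches an install event for some $C_i$: such a process keeps learning ever higher verifiable histories (each $C_i$ is a candidate whose verifiable history is gossiped), so its $\text{HighestConf}(\textit{history})$ grows without bound and it repeatedly re-enters the state-transfer handler; meanwhile, each $C_i$ was installed only after a quorum of its members had completed state transfer and broadcast $\langle\textbf{UpdateComplete},C_i\rangle$ via WURB, so a quorum's worth of these messages exists and is gossiped. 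Using Configuration Availability (each active candidate configuration holds more than $\frac{2}{3}M$ correct stake, so at the relevant times the needed quorums of correct processes exist) together with the quorum-intersection reasoning behind Lemmas~\ref{lm:statetransfercorrectness} and~\ref{lm:keyupdate}, one shows that at least one forever-correct process both participates in (hence broadcasts) some $\langle\textbf{UpdateComplete},C_i\rangle$ and eventually delivers it from a quorum; it then installs $C_i\sqsupseteq C$.

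The step I expect to be the main obstacle is exactly this infinite-reconfiguration subcase: making rigorous that a forever-correct process ``keeps up'' well enough to actually trigger an $\text{InstalledConfig}$ event, rather than indefinitely chasing configurations that get superseded before it can assemble a quorum of $\langle\textbf{UpdateComplete},\cdot\rangle$ messages, or before a quorum of a config's members answers its $\langle\textbf{UpdateRead},\cdot\rangle$. The crux is to combine three ingredients carefully: (i)~prompt gossip-dissemination of verifiable histories, so every forever-correct process's local history tracks the growing chain; (ii)~Configuration Availability, which guarantees that every configuration \emph{while it is an active candidate} has a quorum of processes correct at that time to serve requests and perform state transfer; and (iii)~the quorum-intersection structure of the state-transfer protocol, which ``pushes'' knowledge of an installed configuration forward along the chain (as in Lemma~\ref{lm:statetransfercorrectness}). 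The remaining parts are routine: Case~1 is essentially bookkeeping over Theorem~\ref{thm:dynamicvalidity}, Lemma~\ref{lm:pivotalconfigurations} and the WURB uniformity property, and $C'=C$ is always an admissible witness whenever the process that would install already has $C\sqsubseteq T_p$.
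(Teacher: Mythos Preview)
Your decomposition (never superseded / superseded, and within the latter, finite chain / infinite chain) matches the paper's structure, and your Case~1 together with the finite subcase of Case~2 are essentially what the paper does.

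The genuine gap is the infinite subcase, and the ingredients you propose will not close it. Configuration Availability only guarantees that \emph{at each fixed time $t$} the processes correct \emph{at time $t$} hold more than $\tfrac{2}{3}M$ stake in an active candidate configuration; it says nothing about any particular process remaining correct forever. So locating, via availability plus quorum intersection, ``a forever-correct process that broadcasts some $\langle\textbf{UpdateComplete},C_i\rangle$'' does not go through: the correct process you find at time $t$ may turn Byzantine at $t+1$. The state-transfer intersection reasoning of Lemmas~\ref{lm:statetransfercorrectness}--\ref{lm:keyupdate} likewise only tracks \emph{correct-at-the-time} processes.

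The paper's lever is different and much simpler: the finiteness of $\Pi$. For each $C_i$ in the infinite chain, some process installed $C_i$ while correct; if no forever-correct process ever installs any $C_i$, then every such installer must later become Byzantine (otherwise it is forever-correct, and by the WRB guarantee the \textbf{NewHistory} and \textbf{UpdateComplete} messages it delivered would reach all forever-correct processes, letting them install). But a process can transition from correct to Byzantine only once and $\Pi$ is finite, so only finitely many such transitions can occur; hence from some index $k$ on the installer of $C_k$ never becomes Byzantine, i.e.\ it is forever-correct, contradicting the assumption. This pigeonhole on $\Pi$ is the missing idea.

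A smaller issue in Case~1: the claim that every forever-correct process eventually has $C\in\textit{history}$ does not follow from the WRB spec alone, which only propagates from a \emph{forever-correct} deliverer. The paper instead first exhibits a specific forever-correct process $q$ in the quorum that WURB-broadcast $\langle\textbf{UpdateComplete},C\rangle$ (using that, since $C$ is never superseded and $\Pi$ is finite, availability forces some member of that quorum to stay correct forever); this $q$ already has $C$ in its history and will itself install $C$.
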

\begin{proof}
We prove this fact by contradiction. 
Let us consider a potentially infinite set of all installed configurations in the system.
All installed configurations are related by containment, so they form a sequence. 
We can denote such a sequence as $\widehat{C} = \{C_{init}, \dots C_{i}, C_{i+1},\dots\}$. We assume that no forever-correct process installs a configuration greater or equal to $C_{n} \in \widehat{C}$.
It means, that only correct (and maybe Byzantine) members took part in the installation process of a configuration $C_{n}$ and greater ones.
We denote a sequence of installed configurations greater or equal to $C_n$ as $\widehat{C}_n$.
It's easy to see, that sequence $\widehat{C}_n$ is a suffix of the sequence $\widehat{C}$.
For every $C_i \in \widehat{C}_n$ we introduce $S_i$ -- a set of correct processes that installed $C_i$ (i.e., triggered event $\text{InstalledConfig}(C_i)$).

If $\widehat{C}_n$ is finite, then there exists a configuration $C_{max}$ that is installed but never superseded.
As no configuration greater than $C_{max}$ is installed, due to the property of weak uniform reliable broadcast, every forever-correct process delivers the same set of messages that were delivered any correct process in $C_{max}$. Then, any forever-correct process eventually delivers quorum of messages (in $C_{max}$) $\langle \textbf{UpdateComplete}, C_{max} \rangle$.
As $C_{max}$ is installed, then some correct process triggered an event $\text{InstalledConfig}(C)$. Before that it should have delivered a quorum of $\langle \textbf{UpdateComplete}, C_{max} \rangle$ messages. 
As $C_{max}$ is not superseded at that time and due to configuration availability and quorum intersection property, there must have been a process $q$ that broadcast $\langle \textbf{UpdateComplete}, C_{max} \rangle$ and is correct in $C_{max}$. It implies that $q$ previously delivered verifiable history $h$ such that $C_{max} \in h$.
Then, $q$ delivered all the messages that are required for installation of configuration $C_{max}$.
As $C_{max}$ is never superseded $q$ is forever-correct.
Hence, $C_{max}$ is installed by a forever-correct process $q$ and $C_n \sqsubseteq C_{max}$. It contradicts with our assumption.

Now, consider the case when $\widehat{C}_n$ is infinite.
% For every $C_i \in \widehat{C}_n$ every process from $S_i$ should send the messages to $\textit{members}(C_i)$ (at lines~\ref{line:statetrans:help_st}-\ref{line:statetrans:help_end}) that when received by a forever-correct process $q$ are sufficient for $q$ to install the configuration $C_i$.
We suppose that no forever-correct process installed a configuration from $\widehat{C}_n$.
It means, that no forever-correct process delivers messages required for installation of $C_i$ in $\widehat{C}_n$: message $\langle \textbf{NewHistory}, h, \sigma_h\rangle$, such that $\sigma_h$ is a valid certificate for $h$ and $C_i \in h$, and/or quorum of messages (in $C_i$) $\langle \textbf{UpdateComplete}, C_i\rangle$.  
It implies that starting from some installed configuration $C_j$, such that $C_i \sqsubset C_j$ every process from $S_i$ turns Byzantine.
Such reasoning is used for any $C_i \in \widehat{C}_n$.
However, the number of potential members of the protocol is finite, so the number of the processes that can turn Byzantine is finite. So, after some configuration $C_k$ is installed the only correct processes left in the system are forever-correct processes. As $C_k$ is installed, then some forever-correct process should have installed it.

We assumed that no forever-correct process installs a configuration greater or equal to installed configuration $C$ and came to a contradiction.
This proves the lemma.
\end{proof}

\begin{lemma}
\label{lm:forevercorrectconfigall}
If a configuration $C$ is installed, then any forever-correct process will eventually install a configuration $C'$ such that $C \sqsubseteq C'$.
\end{lemma}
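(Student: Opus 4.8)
The plan is to bootstrap from Lemma~\ref{lm:forevercorrectconfig}, which already gives that \emph{some} forever-correct process installs a configuration above $C$; it then remains to show that this one installation propagates to \emph{every} forever-correct process. Fix an installed configuration $C$ and a forever-correct process $q$, and assume towards a contradiction that $q$ never installs any configuration $C'\sqsupseteq C$. Since $T_q$ is, at every time, either $C_{init}$ or a configuration previously installed by $q$ itself, and all installed configurations are candidate configurations and hence pairwise comparable (Theorem~\ref{thm:dynamicvalidity}, Lemma~\ref{lm:candidateconfigurations}), the assumption forces $T_q\sqsubset C$ at all times --- were $C\sqsubseteq T_q$ at some moment, $q$ would itself have triggered $\text{InstalledConfig}$ on a configuration $\sqsupseteq C$. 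By Lemma~\ref{lm:forevercorrectconfig}, some forever-correct process $r$ eventually installs a configuration $\hat C$ with $C\sqsubseteq\hat C$.

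First I would show that $q$ eventually has $\hat C$ in its local $\textit{history}$. Installing $\hat C$ required $r$ to run its \textbf{WURB-deliver}$\langle\textbf{UpdateComplete},\hat C\rangle$ handler, whose \textbf{wait for} clause means $\hat C\in\textit{history}$ held at $r$; and $r$'s $\textit{history}$ variable is only ever set to a verifiable history inside the \textbf{WRB-deliver}$\langle\textbf{NewHistory},h,\sigma\rangle$ handler, so $r$ at some point WRB-delivered a valid $\langle\textbf{NewHistory},h,\sigma\rangle$ with $\hat C\in h$. As $r$ is forever-correct and delivered this message, by the third property of WRB every forever-correct process --- $q$ in particular --- eventually WRB-delivers it. When $q$ processes it, either $q.\textit{history}\subset h$, so $q.\textit{history}$ becomes $h\ni\hat C$, or $q.\textit{history}\not\subset h$, in which case $h\subseteq q.\textit{history}$ by comparability of verifiable histories (Theorem~\ref{thm:comparability}) and again $\hat C\in q.\textit{history}$. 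Hence from some point on $C\sqsubseteq\hat C\in q.\textit{history}$.

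Second I would propagate the installation itself. Because $r$ installed $\hat C$, its handler had observed $\langle\textbf{UpdateComplete},\hat C\rangle$ WURB-delivered from every member of some quorum $Q\in\textit{quorums}(\hat C)$. Applying the totality property of the broadcast primitive (``a forever-correct process delivers $\Rightarrow$ every forever-correct process delivers'') to each of these deliveries, $q$ too eventually WURB-delivers $\langle\textbf{UpdateComplete},\hat C\rangle$ from every member of $Q$, so $q$'s \textbf{WURB-deliver}$\langle\textbf{UpdateComplete},\hat C\rangle$ handler runs: it waits for $\hat C\in q.\textit{history}$ (already established), finds $T_q\sqsubset C\sqsubseteq\hat C$, hence performs $T_q\leftarrow\hat C$ and triggers $\text{InstalledConfig}(\hat C)$. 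Thus $q$ installs a configuration $\hat C\sqsupseteq C$, contradicting the assumption; since $q$ was an arbitrary forever-correct process, the lemma follows.

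I expect the delicate point to be in the second step: it relies on the totality of the underlying reliable-broadcast primitive transferring a full \emph{quorum}'s worth of $\langle\textbf{UpdateComplete},\hat C\rangle$ messages from one forever-correct process to every forever-correct process, including those contributed by members of $Q$ that may already have turned Byzantine --- note that once $\hat C$ can be superseded, \textbf{Configuration availability} no longer guarantees that the correct members of $\hat C$ (let alone of $Q$) still form a quorum, so one truly needs per-sender totality of the primitive here, not an argument routed through correct replicas. A secondary check is the borderline case where $q$ installs some configuration strictly above $\hat C$ before the handler for $\hat C$ fires; but that only makes $T_q\sqsupset C$ hold sooner, which contradicts the assumption just as well, so the argument is unaffected.
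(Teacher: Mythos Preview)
Your proposal is correct and follows essentially the same approach as the paper: invoke Lemma~\ref{lm:forevercorrectconfig} to obtain a forever-correct installer, then propagate both the \textbf{NewHistory} message and the quorum of \textbf{UpdateComplete} messages to every forever-correct process via the totality guarantee of the broadcast primitive. Your version is in fact more careful than the paper's, which compresses the argument into a few lines; in particular, you explicitly handle the case analysis on whether $q.\textit{history}\subset h$ using comparability of verifiable histories, and you correctly flag that what is needed for the \textbf{UpdateComplete} step is per-message totality (which WURB inherits from WRB), not any availability of the quorum $Q$ after $\hat C$ is superseded.
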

\begin{proof}
Follows directly from the Lemma~\ref{lm:forevercorrectconfig} and the guarantees that are provided by weak reliable broadcast. 
From the Lemma~\ref{lm:forevercorrectconfig} some forever-correct process $p$ eventually installs a configuration $C'$, such that $C \sqsubseteq C'$. 
As $p$ is a forever-correct process, then it previously delivered  a message $\langle \textbf{NewHistory}, H, \sigma_H \rangle$ via weak reliable broadcast, where $\sigma_H$ is a valid certificate for $H$, $C' = \text{HighestConf}(H)$, and also delivered a set of messages $\langle \textbf{UpdateComplete}, C' \rangle$ from some $Q \in \textit{quorums}(C')$ also via weak reliable broadcast primitive. 
As these messages were delivered via WRB by a forever-correct process, then any other forever-correct process will eventually deliver this set of messages and install configuration $C'$ or a greater one.
\end{proof}

\begin{lemma}
\label{lm:statetransfercompletion}
Every state transfer protocol that is executed by a forever-correct process terminates.
\end{lemma}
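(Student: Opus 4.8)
The plan is to fix an arbitrary execution of the handler \textbf{upon} $C_{cur}\neq\text{HighestConf}(\textit{history})$ by a forever-correct process $p$ and prove that it runs to completion. Inside the handler, $C_{next}$ and $S$ are local \textbf{let}-bindings, hence fixed throughout this execution, and $S$ is finite since it is a subset of the (finite) current value of $\textit{history}$; if $C_{next}=C_{init}$ then $S=\emptyset$ and the handler terminates at once, so assume otherwise. The only blocking point is the \textbf{wait for} in the body of the \textbf{for} loop, so, by induction on the position of $C$ in $S$, it suffices to show that once $p$ reaches the iteration for $C$ (and has therefore sent $\langle\textbf{UpdateRead},\textit{seqNum},C\rangle$ to $\textit{members}(C)$), either $C\sqsubset C_{cur}$ eventually holds or $p$ eventually receives valid $\langle\textbf{UpdateReadResp},\ldots,\textit{seqNum}\rangle$ messages from a quorum of $\textit{members}(C)$. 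Two preliminary facts will be used: every $C\in S$ is a candidate configuration, since $C$ lies in $p$'s local $\textit{history}$ and $p$ adopts a history only through the handler \textbf{upon WRB-deliver} $\langle\textbf{NewHistory},h,\sigma\rangle$, which triggers $\text{NewHistory}(h)$; and, by the definition of $S$, $C\sqsubset C_{next}$ where $p$ has previously WRB-delivered some $\langle\textbf{NewHistory},h,\sigma\rangle$ with a verifiable $h$ for which $C\sqsubset\text{HighestConf}(h)=C_{next}$.

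I would then fix $C\in S$ and distinguish two cases. If $C$ becomes superseded, some correct process installs a configuration $C'$ with $C\sqsubset C'$, and by Lemma~\ref{lm:forevercorrectconfigall} the forever-correct process $p$ eventually installs some $C''$ with $C'\sqsubseteq C''$. Since $C_{cur}$ is monotone (every handler can only raise it) and $C\sqsubset C'\sqsubseteq C''$, by the time $p$ installs $C''$ (in the handler \textbf{upon WURB-deliver} $\langle\textbf{UpdateComplete},\ldots\rangle$) it has set $C_{cur}$ so that $C\sqsubset C_{cur}$, which unblocks the wait.

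If $C$ is never superseded, it is an active candidate configuration at every time $t$, so \textbf{Configuration availability} gives $\sum_{q\in\textit{correct}(C,t)}\textit{stake}(q,C)>\frac{2}{3}M$, i.e., $\textit{correct}(C,t)\in\textit{quorums}(C)$, for all $t$. The set $\textit{correct}(C,t)$ is nonincreasing in $t$ and, since $\Pi$ and hence $\textit{members}(C)$ is finite, it stabilizes to some set $Q^*$, which is thus a quorum of $C$ all of whose members are forever-correct. By the reliable-channel property each $q\in Q^*$ eventually receives the $\langle\textbf{UpdateRead},\textit{seqNum},C\rangle$ sent by $p$. Since $p$ is forever-correct and has WRB-delivered $\langle\textbf{NewHistory},h,\sigma\rangle$ with $C\sqsubset\text{HighestConf}(h)=C_{next}$, the WRB guarantees imply $q$ eventually WRB-delivers this message too, and by comparability of verifiable histories (Theorem~\ref{thm:histories}) $q$ then holds a verifiable history whose highest configuration is at least $C_{next}$ and therefore strictly contains $C$. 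Hence the guard \textbf{wait for} $C\sqsubset\text{HighestConf}(\textit{history})$ in $q$'s handler \textbf{upon receive} $\langle\textbf{UpdateRead},\ldots\rangle$ fires and $q$ sends $\langle\textbf{UpdateReadResp},\ldots,\textit{seqNum}\rangle$ to $p$; being correct, $q$'s reply passes the checks in $p$'s handler \textbf{upon receive} $\langle\textbf{UpdateReadResp},\ldots\rangle$. By reliability $p$ receives all these replies, so a quorum of $\textit{members}(C)$ has responded with $\textit{seqNum}$ and the wait unblocks.

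Combining the two cases, each of the finitely many iterations of the \textbf{for} loop terminates, so the handler completes, which is the claim. I expect the main obstacle to be the second case: turning the \emph{per-time} \textbf{Configuration availability} assumption into a single fixed quorum that actually answers $p$. The two ingredients are (i) the finiteness of $\Pi$, which lets us pass to the stable set $Q^*$ and conclude its members are forever-correct, and (ii) the propagation to those members of a history strictly above $C$, via the WRB forever-correct-to-forever-correct guarantee together with comparability of verifiable histories, so that their reply guards can fire — without which the members of $C$ could block forever on $C\sqsubset\text{HighestConf}(\textit{history})$.
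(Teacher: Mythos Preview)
Your proof is correct and follows essentially the same approach as the paper's. The paper argues by contradiction---if $p$ waits forever at line~\ref{line:statetrans:wait} then no quorum replied, hence $C$ must be superseded, hence (via Lemma~\ref{lm:forevercorrectconfigall}) $p$'s $C_{cur}$ eventually advances past $C$---whereas you unfold the same dichotomy as an explicit case split and, in the non-superseded case, spell out why a quorum actually responds (stabilizing $\textit{correct}(C,t)$ to a forever-correct quorum $Q^*$ and using WRB to propagate to each $q\in Q^*$ a history satisfying its reply guard). This is a more detailed version of the paper's terse contrapositive step, not a genuinely different argument.
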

\begin{proof}
Let us consider a forever-correct process $p$ that executes state transfer protocol. Any verifiable history $h$ on which correct processes execute state transfer protocol is finite, therefore We should show that $\not\exists C \in h$ such that $p$ waits for replies from some of $\textit{quorums}(C)$ infinitely long (line~\ref{line:statetrans:wait}). 
In this lemma we proceed by contradiction. Let us assume, that there is a configuration
$C$, such that $C \sqsubset C_n$ and $C_{cur} \sqsubseteq C$, for which $p$ waits for replies indefinitely. 
If there is no such $ Q \in \textit{quorums}(C)$ that replied correctly, then it means that $C$ is not active and was already superseded by some configuration $C'$.
Hence, sufficiently many processes had already performed state transfer from configuration $C$, hence, $p$ will eventually deliver $\langle \textbf{NewHistory}, H, \sigma_H \rangle$, where $H$ is a verifiable history, and $\sigma_H$ is a matching certificate, such that $C'' \in H$ and $p$ will also deliver $\langle \textbf{UpdateComplete}, C'' \rangle$, where $C' \sqsubseteq C''$ and $C \sqsubset C'$ (follows from Lemma~\ref{lm:forevercorrectconfigall}). Thus, \textbf{wait for} condition (at line~\ref{line:statetrans:wait}) for configuration $C$ will be satisfied. 
\end{proof}

\begin{lemma}
\label{lm:transferliveness}
If a forever-correct process $p$ invokes $\code{Transfer}(tx, \sigma_{tx})$, then eventually configuration $C$ is installed, such that  $tx \in C$.
\end{lemma}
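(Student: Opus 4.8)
The plan is a proof by contradiction: assume that no installed configuration ever contains $tx$, and split on the number of distinct configurations that are ever installed. If infinitely many are installed, we are immediately done, since every installed configuration is a candidate and hence a verifiable output value of \textit{ConfigLA}, while by Lemma~\ref{lm:configwithouttx} only finitely many verifiable configurations fail to contain $tx$; so some installed configuration contains $tx$, a contradiction. Hence assume that only finitely many configurations are ever installed. By comparability of candidate (in particular, installed) configurations (Lemma~\ref{lm:candidateconfigurations}), there is a unique highest installed configuration $C_{max}$, and, by definition of ``superseded'', $C_{max}$ is never superseded, so it remains an \emph{active candidate} forever. The goal is then to exhibit an installed configuration strictly above $C_{max}$, contradicting the choice of $C_{max}$.

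The first key step is to show that \emph{any} candidate configuration strictly above $C_{max}$ forces an installed configuration above $C_{max}$. The candidates above $C_{max}$ form a $\sqsubseteq$-chain; I would argue this chain is finite: each candidate in it is a verifiable \textit{ConfigLA} output produced while a weighted quorum was answering in its then-current configuration $D$, and since a correct process only signs \textbf{ProposeResp}/\textbf{ValidateResp} messages for a configuration equal to its own installed configuration, \textbf{Configuration availability} (which guarantees a correct process among any such quorum of an active candidate) witnesses that $D$ is installed; combined with the finiteness behind Lemma~\ref{lm:vtsfiniteinconfig} (a fixed active configuration admits only finitely many verifiable transaction sets, hence finitely many verifiable configurations, as $\Pi$ and $M$ are finite), an infinite chain would entail infinitely many distinct installed configurations, contradicting the present case. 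So the chain has a maximal element $E$. Then, by weak reliable broadcast, every forever-correct process eventually has $E$ as $\text{HighestConf}(\textit{history})$, performs a state-transfer round reaching $E$ (which terminates by Lemma~\ref{lm:statetransfercompletion}), and \textbf{WURB}-broadcasts $\langle\textbf{UpdateComplete},E\rangle$; since no configuration or candidate above $E$ ever appears, the system becomes static, so using the uniform-delivery guarantee of WURB together with \textbf{Configuration availability} and the finiteness of $\Pi$ (which yields a \emph{stable} quorum of forever-correct members of $E$ among the broadcasters), $p$ eventually \textbf{WURB}-delivers $\langle\textbf{UpdateComplete},E\rangle$ from a quorum of $E$ and installs $E$ at line~\ref{line:statetransfer:updateinstalledconfig}, with $C_{max}\sqsubset E$ — the desired contradiction.

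It remains to guarantee that such a candidate above $C_{max}$ exists. If one does, we are done by the previous step. Otherwise, $p$'s local history eventually stabilizes to $\text{HighestConf}(\textit{history})=C_{max}$ (it installs $C_{max}$ by Lemma~\ref{lm:forevercorrectconfigall}, never installs or learns anything higher, and only finitely many verifiable histories have highest configuration $C_{max}$, so $p$ restarts its \code{Transfer} sub-steps only finitely often). Now $C_{max}$ is a never-superseded active candidate, so by \textbf{Configuration availability} and finiteness of $\Pi$ a stable quorum of forever-correct members of $C_{max}$ answers $p$'s requests forever, with forward-secure signatures of timestamp $\textit{height}(C_{max})$ that stay verifiable (cf. Lemma~\ref{lm:keyupdate}); and since $p$ is forever-correct, $tx$ (issued by $p$) conflicts with no transaction and is never placed in a \textit{conflictTxs} set, so the calls $\textit{TxVal}.\code{Validate}(tx,\sigma_{tx})$, $\textit{ConfigLA}.\code{Propose}$ and $\textit{HistLA}.\code{Propose}$ all return. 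By \textbf{TV-Inclusion} and \textbf{ABLA-Inclusion}, the resulting configuration $C$ contains $tx$, hence $C_{max}\sqsubset C$, and $\{C\}\subseteq H$, so $C\in H$; when $p$ broadcasts and then WRB-delivers $\langle\textbf{NewHistory},H,\sigma_H\rangle$ at line~\ref{line:posmain:rbhistory} it triggers $\text{NewHistory}(H)$, making $C$ a candidate strictly above $C_{max}$ — contradicting the assumption of this subcase. Both subcases being impossible, the present case cannot occur, which completes the contradiction.

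The step I expect to be the main obstacle is the progress claim of the second paragraph: turning ``a candidate exists above $C_{max}$'' into ``a configuration is installed above $C_{max}$''. This is where the delicate liveness ingredients must be combined with care — that the chain of candidates above $C_{max}$ is finite (via the per-configuration finiteness of transactions in Lemma~\ref{lm:vtsfiniteinconfig} and the observation that correct processes only answer in their installed configuration), that each forever-correct state-transfer round terminates (Lemma~\ref{lm:statetransfercompletion}), that the system eventually becomes static so the uniform-delivery property of WURB applies, and that \textbf{Configuration availability} plus finiteness of $\Pi$ yields a \emph{stable} correct quorum rather than merely a time-indexed one. The remaining parts — the two-case split, the no-conflict bookkeeping for $tx$, and the $\code{Validate}$/$\code{Propose}$ inclusions — are comparatively routine.
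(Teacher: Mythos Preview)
Your overall strategy matches the paper's (contradiction, maximal installed $C_{max}$, exhibit something installed strictly above it), and the explicit infinite/finite split is a clean way to reach $C_{max}$. The gap is in your ``candidate exists'' branch: you assert that ``by weak reliable broadcast, every forever-correct process eventually has $E$ as $\text{HighestConf}(\textit{history})$'', but the WRB agreement property only fires once a \emph{forever-correct} process delivers the message. That $E$ is a candidate means merely that some \emph{correct} process triggered $\text{NewHistory}(h)$ with $E\in h$; nothing in the spec prevents that process---and every other deliverer of $\langle\textbf{NewHistory},h,\sigma\rangle$---from later turning Byzantine. In that scenario forever-correct processes may never learn of $E$, never state-transfer to it, and never WURB-broadcast $\langle\textbf{UpdateComplete},E\rangle$, so the branch does not yield an installation above $C_{max}$.

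The paper avoids this by not splitting on pre-existing candidates. It argues directly (via convergence of $\textit{seenTxs}$ and $\textit{values}$ once $C_m$ is stable) that $p$'s pipeline completes and that $p$ itself produces and WRB-broadcasts a history $H$ with $C_{tx}=\text{HighestConf}(H)\ni tx$. Since $p$ is forever-correct, WRB validity plus agreement now \emph{do} push $H$ to every forever-correct process, after which state transfer (Lemma~\ref{lm:statetransfercompletion}) and installation of $C_{tx}$ go through. Your ``no candidate'' subcase essentially reaches this same point, but you stop at ``$C$ is a candidate above $C_{max}$, contradiction with the subcase'' and then fall back on the flawed other branch. The fix is to continue right there: use $p$'s forever-correctness to drive installation of $C$, which makes the separate ``candidate exists'' branch unnecessary. (A minor aside: your stated reason for the certifying configuration $D$ being installed---that correct processes only sign for their installed configuration---is not quite what the code does; they sign for $\text{HighestConf}(\textit{history})$. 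The conclusion that $D$ was installed is still correct, but via the argument used in Lemma~\ref{lm:oneconfigcomparability}: collecting $\textit{confirmAcks}$ forces $D$ to have been active, hence installed, when the last $\textit{proposeAck}$ was created.)
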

\begin{proof}
Let us suppose, for contradiction, that no configuration $C$, such that $tx \in C$ is installed. 
Due to the Lemma~\ref{lm:configwithouttx} number of verifiable configurations $C$ such that $tx \not\in C$ is finite. Also, by the Lemma~\ref{lm:historywithouttx} number of verifiable histories $H$ such that $\not\exists C \in H$ such that $tx \in C$ is also finite.
The only configurations that can be installed in the system are verifiable configurations that are part of a verifiable history. 
Combining the results of these Lemmas with an assumption that no configuration $C$, such that $tx \in C$ is installed, we obtain that starting from some moment of time $t$ no greater configuration can be installed. 
Thus, there exists greatest configuration $C_m$ that is installed in the system. 
By Lemma~\ref{lm:forevercorrectconfigall} eventually all forever-correct processes install configuration $C: C_m \sqsubseteq C$. As $C_m$ is the greatest configuration, then $C = C_m$.
As $C_m$ can't be superseded, then the number of well-formed and non-conflicting transactions is finite and correct processes will eventually have equal sets $\textit{seenTxs}$. 
Similarly, the local sets $\textit{values}$ of correct processes of \textit{ConfigLA} and \textit{HistLA} eventually converge in terms of inputs (certificates might be different).
Hence, as the mentioned sets of all correct participants eventually converge and no greater configuration than $C_m$ is installed, a forever-correct process $p$ should be able to return a verifiable history $H$ such that $tx \in C$ and $C \in H$. 
In particular $tx \in C_{tx}$, where  $C_{tx} = \text{HighestConf}(H)$.
Then, $p$ broadcasts verifiable history $H$ via weak reliable broadcast. 
As no configuration greater than $C_m$ is installed, all correct processes that are left in the system are forever-correct processes. 
Due to the guarantees of weak reliable broadcast, all such processes will eventually deliver verifiable history $H$ with corresponding certificate $\sigma_H$ and trigger state transfer protocol. 
By Lemma~\ref{lm:statetransfercompletion} all these processes will successfully finish this protocol. 
Then, there must exist a quorum $Q \in \textit{quorums}(C_{tx})$ that broadcast $\langle \textbf{UpdateComplete}, C_{tx}\rangle$. 
It means that some correct process will eventually install configuration $C_{tx}$, such that $tx \in C_{tx}$. 
We came to a contradiction, assuming that no configuration $C$, such that $tx \in C$ is installed. It proves the statement of the lemma. 
\end{proof}

\begin{theorem}
\label{thm:pastro_validity}
\Name satisfies the \emph{Validity} property of an asset-transfer system. 
\end{theorem}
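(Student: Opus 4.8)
The plan is to obtain Validity as the final assembly step on top of the two liveness lemmas already established, Lemma~\ref{lm:transferliveness} and Lemma~\ref{lm:forevercorrectconfigall}. Fix a forever-correct process $p$ that invokes $\code{Transfer}(tx, \sigma_{tx})$ at time $t$. First I would apply Lemma~\ref{lm:transferliveness} to this invocation: it yields a configuration $C$ that is eventually installed (by some correct process) and that contains $tx$, i.e., $tx \in C$.

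Next, since $C$ is installed, Lemma~\ref{lm:forevercorrectconfigall} guarantees that \emph{every} forever-correct process — in particular $p$ itself — eventually installs a configuration $C'$ with $C \sqsubseteq C'$. Let $t' > t$ be the moment at which $p$ executes line~\ref{line:statetransfer:updateinstalledconfig} for $C'$ (setting $T_p \leftarrow C'$) and triggers $\text{InstalledConfig}(C')$; this happens after the invocation at time $t$, so indeed $t' > t$. Because $tx \in C$ and $C \subseteq C'$, we get $tx \in C' = T_p(t')$, which is exactly what the Validity property requires.

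The only point worth spelling out explicitly is that $T_p$ is, by construction, the last configuration installed by $p$, and that a correct process installs only strictly greater configurations over time (the same observation used in the Monotonicity proof, Theorem~\ref{thm:cryptoconsistency}); hence $tx$ stays in $T_p$ from $t'$ onward. I do not expect any genuine obstacle in this theorem: the substantive work — showing that some configuration containing $tx$ is eventually installed despite an unbounded stream of concurrent transactions and a dynamic adversary — has already been carried out in Lemma~\ref{lm:transferliveness}, which itself rests on the finiteness arguments (Lemmas~\ref{lm:vtswithouttx}, \ref{lm:configwithouttx}, \ref{lm:historywithouttx}), on the Configuration availability assumption, and on Lemma~\ref{lm:statetransfercompletion}. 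The present proof is purely a matter of combining these ingredients.
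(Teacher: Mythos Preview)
Your proposal is correct and follows essentially the same approach as the paper: apply Lemma~\ref{lm:transferliveness} to obtain an installed configuration containing $tx$, then Lemma~\ref{lm:forevercorrectconfigall} to conclude that $p$ itself installs a superset configuration, and finish by noting that $T_p$ is precisely the last configuration installed by $p$. The paper's proof is slightly terser but the structure and the lemmas invoked are identical.
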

\begin{proof}
From Lemma~\ref{lm:transferliveness} it follows, that if a forever-correct process $p$ invokes $\code{Transfer}(tx, \sigma_{tx})$, then eventually a configuration $C_{tx}$ such that $tx \in C$ is installed. 
As $C_{tx}$ is eventually installed, then by Lemma~\ref{lm:forevercorrectconfigall} every forever-correct process, including $p$, will install configuration $C$, such that $C_{tx} \sqsubseteq C$. As $\sqsubseteq$ for configurations is defined as $\subseteq$, $C$ contains $tx$ (i.e. $tx \in C$).
As $p$ is a forever-correct process, $T_p$ is precisely the last configuration installed by $p$, what means that eventually $p$ adds $tx$ to $T_p$.

\end{proof}

\myparagraph{Agreement.} The agreement property tells that if a correct process $p$ at time $t$  includes $tx$ in its local copy of shared state $T_p$, then any forever-correct process $q$ eventually includes $tx$ in $T_q$. 

\begin{theorem}
\label{thm:pastro_agreement}
\Name satisfies the \emph{Agreement} property of an asset-transfer system. 
\end{theorem}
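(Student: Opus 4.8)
The plan is to reduce \emph{Agreement} directly to the configuration-propagation machinery already established, exploiting the fact that $T_p$ is by construction the last configuration a process installs. First I would fix a process $p$ correct at time $t$ and a forever-correct process $q$, and set $C = T_p(t)$. If $C = C_{init}$ the claim is immediate, since every forever-correct process satisfies $T_q \supseteq C_{init}$ at all times by initialization. Otherwise, $T_p$ was advanced to $C$ at line~\ref{line:statetransfer:updateinstalledconfig}, immediately before $p$ triggered $\text{InstalledConfig}(C)$; hence $C$ is an \emph{installed} configuration, installed at some time no later than $t$.

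Next I would apply Lemma~\ref{lm:forevercorrectconfigall}: since $C$ is installed, the forever-correct process $q$ eventually installs some configuration $C'$ with $C \sqsubseteq C'$. Let $t''$ be the time at which $q$ triggers $\text{InstalledConfig}(C')$, so that $T_q(t'') = C'$, and recall that $\sqsubseteq$ coincides with $\subseteq$ on configurations, giving $C \subseteq C'$.

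Finally I would invoke \emph{Monotonicity} (Theorem~\ref{thm:cryptoconsistency}), which guarantees that $T_q$ never shrinks while $q$ is correct; as $q$ is forever-correct, $T_q(s) \supseteq C'$ for every $s \geq t''$. Choosing $t' = \max(t, t'')$ thus gives $t' \geq t$ together with
\[
T_p(t) = C \subseteq C' \subseteq T_q(t'),
\]
which is precisely the Agreement property.

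The heavy lifting is entirely delegated to Lemma~\ref{lm:forevercorrectconfigall}, whose own proof carries the genuinely difficult asynchronous liveness argument that installed configurations eventually reach forever-correct processes despite ongoing reconfiguration and adaptive corruption. Given that lemma, the remaining work is pure bookkeeping: confirming that $T_p(t)$ is a bona fide installed configuration (or the trivial initial one) and arranging $t'$ so that $t' \geq t$ while still capturing $q$'s installation of a configuration $\sqsupseteq C$. I anticipate no substantive obstacle beyond correctly threading these timing details.
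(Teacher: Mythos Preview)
Your proposal is correct and follows essentially the same approach as the paper: both reduce Agreement to Lemma~\ref{lm:forevercorrectconfigall} by observing that $T_p(t)$ is an installed configuration and that every forever-correct process eventually installs a configuration containing it. Your version is slightly more careful than the paper's—explicitly handling the $C_{init}$ base case and invoking Monotonicity to guarantee $t'\geq t$—but the core argument is identical.
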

\begin{proof}
If a correct process $p$ at time $t$ added transaction $tx$ to $T_p$, then it installed a configuration $C$, such that $tx \in C$. 
By Lemma~\ref{lm:forevercorrectconfigall}, if configuration $C$ is installed, then any forever-correct process $q$ will eventually install some greater configuration $C'$ (w.r.t. $\subseteq$). 
As $C \sqsubseteq C'$ and $\sqsubseteq = \subseteq$, then $tx \in C'$. It means, that eventually any forever-correct process $q$ adds $tx$ to $T_q$.

\end{proof}

\section{Enhancements and Optimizations}
\label{app:optimizations}
\myparagraph{Communication and storage costs.}
For the sake of simplicity, the current version  of our protocol maintains an ever-growing data set, consisting of transactions, configurations and histories of configurations.
This data is routinely attached to the protocol messages. 
An immediate optimization would be to only attach the \emph{height} of a configuration instead of the configuration itself (e.g., $|C|$ instead of $C$ in message $\langle\textbf{ValidateReq}, \textit{sentTxs}, \textit{sn}, C\rangle$)\atremove{, as our protocol guarantees that all installed configurations are comparable}.
\atadd{Because our protocol guarantees that all installed configurations are comparable, the height can be used to uniquely identify a configuration.}
%Whenever a correct process $p$ sends a message that includes the information about the configuration it currently sees (e.g., $C$ in message $\langle\textbf{ValidateReq}, \textit{sentTxs}, \textit{sn}, C\rangle$), $p$ might only use the identifier of this configuration: height of configuration.
%
%As all configurations that appear in verifiable histories and can be installed are comparable (w.r.t. $\sqsubseteq$), the height of such configuration unambiguously defines it.
%
Furthermore, to \atreplace{bound}{reduce} the sizes of protocol messages, the processes may only send ``deltas'' containing the updates with respect to the information their recipients should already have.
Similar techniques are often applied in CRDT implementations~\citep{crdt}, where processes also maintain data structures of unbounded size.  
%
%For example, whenever a correct process $q$ responds to message $m$ sent by $p$, and $q$ sees the same configuration $C$ that is specified i%n message $m$, we can ensure that such modification is safe. 
%
%In addition, a correct process $p$ can limit itself to sending only diff between what it knows, and what the recipient $q$ should already know (e.g. $p$ does not need to send the same information to $q$ twice or more). 
%Similar techniques are often applied in CRDT implementations~\citep{crdt}, where processes also send messages containing data structures unbounded in the size.
%

%%%%%%%%%%%%%%%%%%%%%%%%%%%%%%%%%%%%%%%%%%%%%%%%%
%% AT: This was a bit too vague. Added a paragraph about time complexity below instead.
%%%%%%%%%%%%%%%%%%%%%%%%%%%%%%%%%%%%%%%%%%%%%%%%%
% In the proof of correctness (Appendix~\ref{app:proof}), we show that the number of non-conflicting transactions that can be issued and confirmed in one configuration is finite.
% Combining this fact and the modifications mentioned above the size of the messages is limited.

The time complexity of the protocol is linear in the number of concurrently issued valid non-conflicting transactions,  
which in turn is bounded by the total number of processes in the system. 
%[[PK not about liveness here
%Therefore, every operation eventually completes.
%]]

Similarly to other cryptocurrencies~\cite{ethereum, bitcoin}, in \Name, the replicas are required to store the full history of transactions. The storage complexity is therefore quasi-linear to the number of executed transactions (under the assumptions that certificates are stored efficiently).

% Besides, the use of forward-secure \emph{multi-signatures}~\citep{drijvers2019pixel} will improve the communication cost by representing multiple signatures created by different processes for the same object as one compact signature. 
A forward-secure \emph{multi-signature} scheme~\citep{drijvers2019pixel} can be used to further reduce the communication and storage costs by representing multiple signatures created by different processes for the same \atreplace{object}{message} as one compact signature.

\myparagraph{Delegation.}
So far our system model did not distinguish between processes acting as \emph{replicas}, 
maintaining copies of the shared data, and \emph{clients}, invoking data operations and interacting with the replicas. 
Many  cryptocurrencies  allow ``light'' clients to only submit transactions without participating in the full protocol. 
% To achieve this for {\Name}, a delegation mechanism similar to the one proposed in ABC~\citep{abc-tr} can be applied.
In {\Name}, this can be implemented with a \emph{delegation} mechanism:
a participant may disclaim its responsibility for system progress and entrust other participants with its stake.
To give a concrete example of such a mechanism, let us introduce a \emph{delegation transaction} $\textit{dtx} = (p, \delta, \textit{sn})$. 
Here $p \in \Pi$ is the identifier of the process issuing $\textit{dtx}$, 
the map $\delta: \Pi \rightarrow \mathbb{R}$ is the \textit{delegation function}, which specifies the proportion of funds \textit{delegated} to every process $q \in \Pi$ by this transaction, and $\textit{sn}$ is a sequence number. 
%[[PK excessively detailed (there was a footnote on this in Section 4
%\pprev{The delegation function can be encoded as a set of tuples $(q,d)$, where $q\in\Pi$ and $d \in [0, 1]$ is the portion of the stake of $p$ delegated to $q$. For a map $\delta$ we assume the following $\sum_{q_i \in \Pi} \textit{dtx}.\delta(q_i) = 1$}. 
%By $\textit{sn} \in \mathbb{Z}^+_0$ we denote \atreplace{an identifier number}{the sequence number} of the delegation transaction.
%]]
%
For every process $p$, there is a special \textit{initial delegation transaction} $\textit{dtx}_{p\_init} = (p, \{(p, 1)\}, 0)$. 
Let $\dD$ be a set of delegation transactions. Delegation transactions $\textit{dtx}_i$ and $\textit{dtx}_j$ conflict iff $\textit{dtx}_i \not= \textit{dtx}_j$, $\textit{dtx}_i.p = \textit{dtx}_j.p$ and $\textit{dtx}_i.\textit{sn} = \textit{dtx}_j.\textit{sn}$. 
Note that Transaction Validation used for normal transactions can be easily adapted for delegation transactions.

%
% Let us note that, if $(\cC_1, \sqsubseteq_1, \sqcup_1)$ and $(\cC_2, \sqsubseteq_2, \sqcup_2)$ are lattice types, then $(\cC, \sqsubseteq, \sqcup) = (\cC_1 \times \cC_2, \sqsubseteq_1 \times \sqsubseteq_2$\footnote{For the given elements of the lattice $x = (c_1, d_1)$ and $y = (c_2, d_2)$ operator $\sqsubseteq = \sqsubseteq_1 \times \sqsubseteq_2$ we say that $x \sqsubseteq y$ if $c_1 \sqsubseteq_1 c_2$ and $d_1 \sqsubseteq_2 d_2$.}$, \sqcup_1 \times \sqcup_2)$ is also a lattice.

Consider a composition of $\cC_1\times\cC_2$, where $\cC_1$ is a lattice over sets of transactions $2^{\tT}$ and $\cC_1$ is a lattice over sets of delegation transactions $2^{\dD}$.%
\footnote{Recall that two lattices can be easily composed: given two lattices  $(\cC_1, \sqsubseteq_1)$ and $(\cC_2, \sqsubseteq_2)$, we can define a lattice $(\cC, \sqsubseteq) = (\cC_1 \times \cC_2, \sqsubseteq_1 \times \sqsubseteq_2)$. Given two elements of the composed lattice $x = (c_1, d_1)$ and $y = (c_2, d_2)$, $x \sqsubseteq y$ iff $c_1 \sqsubseteq_1 c_2$ and $d_1 \sqsubseteq_2 d_2$.}
%Such definition of configuration allows us to store information about both types of transactions and helps to unambiguously calculate delegated stake for each member of configuration. 
A configuration is now defined as  $C = (T, D)$, where $T \in 2^\tT$ and $D \in 2^\dD$. The stake of a process $q$ in $C$ can be computed as follows:
% $\textit{stake}(q, C) = \sum_{tx\in C.C_1} tx.\tau(q)-\sum_{tx\in C.C_1 \wedge tx.p=q} \textit{value}(tx)$ and  $\textit{delegatedStake}(q, C) = \sum_{r \in \textit{members}(C)}(\textit{dtx}.\delta(q) \cdot \textit{stake}(r, C)$  where $\textit{dtx}.\textit{sn} = \textit{max}_{\textit{dtx} \in C.C_2 \text{ and } \textit{dtx}.p = r }(\textit{dtx}.\textit{sn}))$.
$\textit{stake}(q, C) = \sum_{tx\in T} tx.\tau(q)-\sum_{tx\in T \wedge tx.p=q} \textit{value}(tx)$,
and its delegated stake as
$\textit{delegatedStake}(q, C) = \sum_{r \in \textit{members}(C)}(\textit{dtx}_r.\delta(q) \cdot \textit{stake}(r, C))$,
where $\textit{dtx}_r = \argmax_{d \in D \land d.p = r} d.sn$.
The configuration-availability assumption (Section~\ref{sec:adversary}) must then be adapted to hold for the \emph{delegated} stake.
Besides, the use of \emph{proportions} in delegation transactions allows us to issue both types of transactions separately and to delegate its stake to  multiple participants.
One can think of the processes with positive delegated-stake as validators.
% \yareplace{(so-called \textit{validators}) to be the equivalent of mining pools in Bitcoin.}{as \textit{validators}.}
%
%In general, if the number of validators grows, the time required for a transaction confirmation also grows.
%
For the sake of efficiency, the number of validators should remain relatively small. 
Below we discuss some ways to achieve this.

\myparagraph{Fees and inflation.}
As for now, nothing prevents Byzantine participants from spamming the system with non-conflicting transactions that transfer funds among themselves.
Even though the number of such transactions is bounded, they can still significantly slow down the system progress.
%[[PK the same thing?]
%and transactions of correct users might remain unconfirmed for a long time.
%]]
Furthermore, as the owners of processes may lose their key material or stop participating in the protocol and some replicas can get compromised by the adversary, the total amount of funds in the system is bound to shrink over time.
To fight these issues, one may introduce \emph{transaction fees} and an \emph{inflation mechanism}.

We may enforce a transaction owner to pay a fee for any transaction it submits. 
In the current protocol, this can be implemented by adding an auxiliary transaction recipient $\perp$ in the transfer function $\tau$. 
Formally, $tx.\tau(\perp) = \textit{fee}(tx)$, where $\textit{fee} : \tT \rightarrow \Nat_0$ is some predefined function.
It might be reasonable to incentivize larger transactions over small ones as it will lower the system load.

%%%%%%%%%%%%%%%%%%%%%%%%%%%%%%%%%%%%%%%%%%%%%%%%%%%%%%%%%%%%%%%%
%% AT: too hand-wavy, and, perhaps, unnecessary.
%%%%%%%%%%%%%%%%%%%%%%%%%%%%%%%%%%%%%%%%%%%%%%%%%%%%%%%%%%%%%%%%
% If a process $q$ collects a certificate for a verifiable object (e.g., a transaction set $txs$ that contains a confirmed transaction $tx$, where $tx \in C$ and $C$ is installed), $q$ is eligible for a reward.
% For example, $q$ can issue a special transaction $tx_{inc}$ with auxiliary sender $\perp$ that transfers amount $\kappa$ to a single recipient $q$ that contains $txs$ and corresponding certificate and references confirmed transaction $tx$. 
% The other processes verify whether $txs$ was obtained while transaction $tx$ wasn't yet confirmed (using the certificate), then $tx_{inc}$ goes through the standard \Name pipeline.
% \pprev{Such a protocol makes participants more interested in the progress of the system.}

%
Similarly, we can implement an inflation mechanism. 
Each participant can be allowed to periodically receive newly created funds.
% once in some period of time.
Periods of time can be represented with the help of verifiable histories. 
If a local verifiable history $h$ of process $p$ is such that $\exists C_i, C_j \in h: C_i \sqsubset C_j$ and  $|height(C_j) -height(C_i)| > \nu$, where $\nu$ is some predefined sufficiently large constant and $p \in \textit{members}(C_i)$, then $p$ can issue a special transaction $tx_{inf}$ that contains information about the history $h$ and references configurations $C_i$ and $C_j$. 
A transaction of this type passes through the \Name pipeline with an adjusted Transaction Validation.  
For example, in order to prevent processes from using the same history $h$ and configurations $C_i$ and $C_j$, we should mark two such inflation transactions $tx_{1}$ and $tx_{2}$ by the same owner as conflicting if the intervals defined by the corresponding $C_i$ and $C_j$ intersect: $[tx_{1}.C_i; tx_{1}.C_j] \cap [tx_{2}.C_i; tx_{2}.C_j] \neq \emptyset$.
The amount of funds $\gamma$ a process obtains from such a transaction might be proportional to its minimum stake on the set of transactions $C$ such that $C \in h$ and $C_i \sqsubseteq C \sqsubseteq C_j$.

We may require that a process $p$ must delegate its stake to sufficiently ``rich''  validators to collect the reward or create money: to engage in these mechanisms, suitable validators should hold at least fraction $\chi$ of total stake. 
This is a possible economical incentive for keeping the number of validators relatively small and thus increase the system efficiency.

The analysis and rationale of selecting function $\textit{fee}$ and suitable parameters $\nu, \gamma, \chi$ 
as well as implementing a system that rewards actively participating validators
is an interesting avenue for future work.

\myparagraph{Forward-secure digital signatures.} 
In Section~\ref{sec:model}, we already mentioned that known implementations of forward-secure signatures can be divided into two groups: (1) the ones where parameter $T$ is bounded and fixed in advance and (2) the ones that do not oblige to fix $T$ in advance, but the computational time is unbounded and depends on the timestamp.

The first possible approach is to use forward-secure signatures implementations that require $T$ to be fixed in advance and choose a sufficiently large number, e.g. $2^{64}$, which bounds the total number of possible transactions. 
%In this case, number of configurations that can be installed in \Name is finite, and hence, the number of transactions in the system is finite. 
%Such a solution may make sense if we know that number of transaction which will be performed in the system is relatively small. 
%It should be noted, 
Recall that the number of transactions confirmed on Bitcoin~\citep{bitcoin} blockchain is less than 500~million for 12 years, which suggests that $T=2^{64}$ should be more than enough.
%Therefore, the assumption about the number of transfer operations seems logical for us, and we believe that in this case our system might be exploited for a long time without the interference of any trusted administrator.

The second approach is based on the use of forward-secure signatures schemes with unbounded $T$.
To solve an issue with indefinite local computational time, we might change public keys at some moment of time (e.g., when the height of some installed configuration is sufficiently large). 
Process $q$ might issue a special transaction that indicates changing $q$'s forward-secure public key. 
Such transactions should pass through the whole \Name protocol with the Transaction Validation block adjusted for them.
The key replacement that is done with the help of transactions ensures that a process does not send different public keys to different users.
The initial timestamp can be shifted to the height of some recently installed configuration without loss of efficiency.
%
%
% After the public keys are changed, the processes can reset local timestamps, so we start the process of generating private keys from the beginning.
% %
% The protocol of changing public keys might be started by any process. When a correct process learns about such process initiated in the system, it informs other processes about it and also stops signing other non-related messages.
% %
% Thus, from some point in time, no transaction will be confirmed and no new configuration will be installed. And eventually, every correct process will know that signatures are to be replaced.
% Together with the replacement of public keys, the height function also should be modified, for example, as follows: $\textit{height}(C) = |C| - |C_{k}|$, where $C_{k}$ is a configuration in which public keys were changed.
% We believe that such an algorithm can be implemented asynchronously with weak reliable broadcast.
Notice, however, that local computational time grows only slowly (logarithmically) with $T$~\citep{MaMiMi02}, so this might not matter in practice.

\end{document}